\definecolor{c0}{HTML}{641a80}
\definecolor{c1}{HTML}{b73779}
\definecolor{Green}{HTML}{009B55}
\definecolor{Red}{HTML}{AF3235}
\algrenewcommand\algorithmiccomment[1]{\hfill{\color{gray}$\triangleright$~#1}}
\newenvironment{definition}{\begin{Definition}}{\end{Definition}}
\newtheorem{problem}{Problem}[section]
\newenvironment{restatethm}[1][]{\noindent{\textsc{#1}.}\begingroup\it}{\endgroup}
\renewcommand{\vec}{\mathbf}
\newcommand{\Gaussian}{\operatorname{Gaussian}}
\newcommand{\HODLR}{\operatorname{HODLR}}
\newcommand{\OPT}{\mathsf{opt}}
\newcommand{\ERR}{\mathsf{err}}
\newcommand{\sleft}{s_{\textup{\tiny{L}}}}
\newcommand{\sright}{s_{\textup{\tiny{R}}}}
\newcommand{\tleft}{t_{\textup{\tiny{L}}}}
\newcommand{\tright}{t_{\textup{\tiny{R}}}}
\newcommand{\EE}{\operatorname{\mathbb{E}}}
\newcommand{\PP}{\operatorname{\mathbb{P}}}
\newcommand{\R}{\mathbb{R}}
\newcommand{\T}{\mathsf{T}}
\newcommand{\F}{\mathsf{F}}
\newcommand{\rank}{\operatorname{rank}}
\newcommand\numberthis{\addtocounter{equation}{1}\tag{\theequation}}
\newcommand{\argmin}{\operatornamewithlimits{argmin}}
\newcommand{\poly}{\mathop\mathrm{poly}}
\newcommand{\llbracket}{[\mkern-2.5mu[}
\newcommand{\rrbracket}{]\mkern-2.5mu]}
\newcommand{\jp}{j\pm 1}
\crefname{Definition}{Definition}{Definitions}
\crefname{assumption}{Assumption}{Assumptions}
\crefname{problem}{Problem}{Problems}
\crefname{theorem}{Theorem}{Theorems}
\Crefname{theorem}{Theorem}{Theorems}
\crefname{algorithm}{Algorithm}{Algorithms}
\newenvironment{nproof}[1][]{\@beginproof{#1}}{\@endproof}
  \newcommand{\cAAAI}[1]{AAAI\ Conference\ on\ Artificial (AAAI)}
\begin{document}

\newcommand\relatedversion{}
\renewcommand\relatedversion{\thanks{The full version of the paper can be accessed at \protect\url{https://arxiv.org/abs/2407.04686}}} % Replace URL with link to full paper or comment out this line

\title{\Large Near-optimal hierarchical matrix approximation from matrix-vector products\relatedversion}
\author{
\begin{tabular}{ccc}
Tyler Chen\thanks{New York University (\texttt{tyler.chen@nyu.edu})}
& Feyza Duman Keles\thanks{New York University (\texttt{fd2135@nyu.edu})}
& Diana Halikias\thanks{Cornell University (\texttt{dh736@cornell.edu})}
\\
Cameron Musco\thanks{UMass Amherst (\texttt{cmusco@cs.umass.edu})}
& Christopher Musco\thanks{New York University (\texttt{cmusco@nyu.edu})}
& David Persson \thanks{EPFL (\texttt{david.persson@epfl.ch})}
\end{tabular}
}

\date{}

\maketitle

% Copyright Statement
% When submitting your final paper to a SIAM proceedings, it is requested that you include
% the appropriate copyright in the footer of the paper.  The copyright added should be
% consistent with the copyright selected on the copyright form submitted with the paper.
% Please note that "20XX" should be changed to the year of the meeting.

% Default Copyright Statement
%\fancyfoot[R]{\scriptsize{Copyright \textcopyright\ 2025 by SIAM\\
%Unauthorized reproduction of this article is prohibited}}

% Depending on which copyright you agree to when you sign the copyright form, the copyright
% can be changed to one of the following after commenting out the default copyright statement
% above.

%\fancyfoot[R]{\scriptsize{Copyright \textcopyright\ 20XX\\
%Copyright for this paper is retained by authors}}

%\fancyfoot[R]{\scriptsize{Copyright \textcopyright\ 20XX\\
%Copyright retained by principal author's organization}}

%\pagenumbering{arabic}
%\setcounter{page}{1}%Leave this line commented out.

\fancyfoot[C]{\thepage}

\begin{abstract} \small%\baselineskip=9pt
We describe a randomized algorithm for producing a near-optimal hierarchical off-diagonal low-rank (HODLR) approximation to an $n\times n$ matrix $\vec{A}$, accessible only though matrix-vector products with $\vec{A}$ and $\vec{A}^\T$. 
    We prove that, for the rank-$k$ HODLR approximation problem, our method achieves a $(1+\beta)^{\log(n)}$-optimal approximation in expected Frobenius norm using $O(k\log(n)/\beta^3)$ matrix-vector products.
    In particular, the algorithm obtains a $(1+\varepsilon)$-optimal approximation with $O(k\log^4(n)/\varepsilon^3)$ matrix-vector products, and for any constant $c$, an $n^c$-optimal approximation with $O(k \log(n))$ matrix-vector products.
    Apart from matrix-vector products, the additional computational cost of our method is just $O(n \poly(\log(n), k, \beta))$. 
    We complement the upper bound with a lower bound, which shows that any matrix-vector query algorithm requires at least $\Omega(k\log(n) + k/\varepsilon)$ queries to obtain a $(1+\varepsilon)$-optimal approximation.

    Our algorithm can be viewed as a robust version of widely used ``peeling'' methods for recovering HODLR matrices and is, to the best of our knowledge, the first matrix-vector query algorithm to enjoy theoretical worst-case guarantees for approximation by any hierarchical matrix class.
    To control the propagation of error between levels of hierarchical approximation, we introduce a new perturbation bound for low-rank approximation, which shows that the widely used Generalized Nystr\"om method enjoys inherent stability when implemented with noisy matrix-vector products. 
    We also introduce a novel \emph{randomly perforated} matrix sketching method to further control the error in the peeling algorithm. 
 \end{abstract}

\section{Introduction}
\label{sec:intro}

Many linear algebra tasks can be solved more efficiently in the presence of {structure}.
As such, a fundamental framework for designing matrix algorithms is to first approximate relevant matrices with structured ones, and then to solve the resulting structured problem. For example, matrices are frequently approximated by \emph{low-rank} matrices, which can be stored in less space, admit faster multiplication by vectors, and more. The power of low-rank structure has motivated significant work on finding \emph{near-optimal} {low-rank} approximations of arbitrary matrices \cite{DrineasMahoney:2005,DrineasKannanMahoney:2006a, HalkoMartinssonTropp:2011,ClarksonWoodruff:2013,NelsonNguyen:2013,Woodruff:2014,MuscoMusco:2015,MuscoMusco:2017}.

For problems in computational science and data science that involve multi-scale phenomena, however, low-rank approximation is often insufficient. 
Frequently, \emph{hierarchical low-rank structure} is more appropriate; long-range interactions at a given scale can be well-approximated by low-rank matrices, while shorter-range interactions can be recursively treated at a finer scale. Like low-rank matrices, hierarchical low-rank matrices are cheap to store and work with.
For instance, matrix-vector multiplies, linear solves, and other operations can typically be done in linear or nearly linear time in the dimension of the matrix \cite{AmbikasaranDarve:2013,ChandrasekaranDewildeGu:2007,VogelXiaCauley:2016,OuXia:2022}. 
For this reason, 
hierarchical low-rank approximations lie at the heart of many numerical methods, including the Fast Multipole Method (FMM) \cite{GreengardRokhlin:1987,YangDuraiswamiGumerov:2003a,YingBirosZorin:2004,YokotaIbeidKeyes:2017}, and have applications ranging from partial differential equation (PDE) solvers and control problems to the non-uniform Fourier transform \cite{BormGrasedyckHackbusch:2003,BallaniKressner:2016,WilberEpperlyBarnett:2024}. In fact, the study of hierarchical low-rank matrices is one of the most active topics of research in modern numerical linear algebra \cite{Hackbusch:2015,BallaniKressner:2016}.

There are many different types of hierarchical low-rank matrices, including $\mathcal{H}$ matrices, $\mathcal{H}^2$ matrices, and hierarchical semiseparable (HSS) matrices \cite{BallaniKressner:2016}. 
In this work, we consider one of the most general classes: hierarchical off-diagonal low-rank (HODLR) matrices.\footnote{HSS matrices are special cases of HODLR matrices, as are other well-studied structured classes, like diagonal plus low-rank matrices \cite{MuscoMuscoWoodruff:2021,TuncelVavasisXu:2023,BertsimasCopenhaverMazumder:2017}.}

\begin{definition}\label{def:hodlr}
Fix a rank parameter $k$. 
An $n\times n$ matrix $\vec{A}$ is $\HODLR(k)$ if $n\leq k$ or if $\vec{A}$ can be partitioned into $(n/2)\times (n/2)$ blocks 
\begin{equation*}
    \vec{A} 
    = 
    \begin{bmatrix}
        \vec{A}_{1,1} & \vec{A}_{1,2} \\
        \vec{A}_{2,1} & \vec{A}_{2,2}
    \end{bmatrix}
\end{equation*}
such that $\vec{A}_{1,2}$ and $\vec{A}_{2,1}$ are each of rank at most $k$ and $\vec{A}_{1,1}$ and $\vec{A}_{2,2}$ are each $\HODLR(k)$.\footnote{\Cref{def:hodlr} applies to matrices with dimension $n = n_{\text{base}}\cdot 2^L$ for some $n_{\text{base}} \leq k$ and integer $L\geq 0$.
Throughout this paper, we will always assume that $n$ and $k$ are related in this way.
More general definitions of HODLR matrices are possible, but \Cref{def:hodlr} is the most standard \cite{BallaniKressner:2016}.
}
\end{definition}
HODLR matrices are efficient to work with: they can be stored using just $O(nk\log (n/k))$ numbers, multiplied by vectors with $O(nk\log(n/k))$ operations, and admit linear system solves in $O(nk^3\log(n/k) + nk^2\log^2(n/k))$ time \cite{BallaniKressner:2016}. 

We are interested in finding a near-optimal HODLR approximation to a matrix $\vec{A}$ that can \emph{only} be accessed through black-box matrix-vector product queries $\vec{x}\mapsto \vec{A}\vec{x}$ and matrix-transpose-vector product queries $\vec{x} \mapsto \vec{A}^\T\vec{x}$ (both of which will henceforth be referred to as ``matvec queries to $\vec{A}$''). 
This problem has received significant attention due to both practical and theoretical importance \cite{LinLuYing:2011,Martinsson:2011,Martinsson:2016,LevittMartinsson:2024a,BoulleHalikiasTownsend:2023,BoulleTownsend:2023,LevittMartinsson:2024}. For example, black-box HODLR approximation methods can be used to obtain fast direct solvers in settings where we have access to efficient matrix-vector products, e.g., via an FMM or iterative method \cite{LinLuYing:2011,Martinsson:2016}. Black-box methods are also central in the  field of \emph{operator or PDE learning}, where $\vec A$ is an unknown differential operator and one can access matrix-vector products through physical experiments or simulations involving different forcing functions \cite{ BoulleTownsend:2022,BoulleHalikiasOttoTownsend:2024,SchaferOwhadi:2024}. 

In these applications and many others, matvec queries with $\vec A$ are expensive, and typically dominate other computational costs. As such, we hope to minimize the number of queries required to find a near-optimal HODLR approximation for $\vec{A}$. 
Matrix-vector query complexity has become central in theoretical work on numerical linear algebra due to the practical importance of the access model and the fact that it generalizes other models, such as the matrix sketching and Krylov subspace models \cite{SunWoodruffYangZhang:2021}. There has been recent progress giving tight query complexity bounds for central problems like linear system solving, eigenvector approximation, and trace estimation \cite{SimchowitzElAlaouiRecht:2018,BravermanHazanSimchowitzWoodworth:2020,MeyerMuscoMuscoWoodruff:2021,DharangutteMusco:2021b,ChewiDios-PontLi:2024,JiangPhamWoodruffZhang:2024}. Closer to our setting, there has also been significant work on the query complexity of structured matrix approximation, with  classes like low-rank matrices \cite{BakshiClarksonWoodruff:2022,BakshiNarayanan:2023}, diagonal matrices \cite{BekasKokiopoulouSaad:2007,TangSaad:2011,BastonNakatsukasa:2022,DharangutteMusco:2023}, sparse matrices \cite{CurtisPowellReid:1974,ColemanMore:1983,ColemanCai:1986,WimalajeewaEldarVarshney:2013,DasarathyShahBhaskarNowak:2015,AmselChenDumanKelesHalikiasMuscoMusco:2024}, and beyond \cite{WatersSankaranarayananBaraniuk:2011,SchaferKatzfussOwhadi:2021, HalikiasTownsend:2023}.

\subsection{Formal problem setup}

If $\vec{A}$ is exactly $\HODLR(k)$, it is well known that it can be {recovered} using $O(k \log (n/k))$ matvec queries using the so-called \emph{peeling algorithm} of Lin, Lu, and Ying \cite{LinLuYing:2011,Martinsson:2016,LevittMartinsson:2024a}.\footnote{For exact recovery, the number of queries used by the peeling algorithm is within a constant factor of optimal; see \cref{thm:lowerbd} and \cref{sec:lower_bounds} for a formal lower bound.}
We describe this algorithm in \cref{sec:peeling_intro}.
However, in most practical situations, $\vec{A}$ is  not  \emph{exactly} HODLR. 
This has resulted in broad concern about whether peeling algorithms applied to non-HODLR matrices might produce inaccurate approximations \cite{LinLuYing:2011,BoukaramTurkiyyahKeyes:2019,HalikiasTownsend:2023,BoulleEarlsTownsend:2022,BoulleHalikiasTownsend:2023}.
Notably, a recent SIAM Linear Algebra Best Paper \cite{BoulleTownsend:2022} poses an algorithmic challenge which roughly asks whether a near-HODLR matrix be approximated at nearly the same cost as algorithms for recovering an exactly HODLR matrix.
Our work addresses this challenge.
In particular, we study the following HODLR {approximation} problem, which makes no assumptions about $\vec{A}$.
\begin{problem}\label{prob:approx}
Given an $n\times n$ matrix $\vec{A}$, accessible only by matvec queries, a rank parameter $k\geq 1$, and an accuracy parameter $\Gamma \geq 1$, find a $\HODLR(k)$ matrix $\widetilde{\vec{A}}$ such that
\[
\| \vec{A} - \widetilde{\vec{A}} \|_\F \leq \Gamma \cdot \min_{\vec{H}\in\HODLR(k)} \|\vec{A} - \vec{H} \|_\F.
\]
\end{problem}
Interesting parameter regimes for \Cref{prob:approx} can range from $\Gamma = (1+\varepsilon)$ for $\varepsilon\in(0,1)$ to larger approximation factors when $\vec A$ is very close to $\HODLR(k)$, e.g., $\Gamma = O(\log n)$ or $\Gamma = O(n^c)$ for a small constant $c$.

\Cref{prob:approx} can be trivially solved for $\Gamma = 1$ using $n$ matrix-vector products. Via multiplication by an identity matrix, we can recover all entries of $\vec A$ and then compute an exactly optimal HODLR approximation by computing optimal rank-$k$ approximations to $\vec{A}$'s top right and bottom left $(n/2)\times(n/2)$ blocks, and recursing on the top left and bottom right blocks. However, outside this baseline and the case when $\vec{A}$ is exactly HODLR, we are unaware of any non-trivial results on \cref{prob:approx}. This is despite the vast literature on HODLR matrix approximation and on efficient matrix-vector query methods for vanilla low-rank approximation \cite{ClarksonWoodruff:2009,RokhlinSzlamTygert:2009,CohenElderMusco:2015,TroppYurtzeverUdellCevher:2017,BakshiClarksonWoodruff:2022,TroppWebber:2023,BakshiNarayanan:2023}.

\subsection{Main results}
\label{sec:contributions}
Our main contribution is an efficient algorithm for solving \cref{prob:approx}. 
\begin{theorem}\label{thm:main}
 Fix a rank parameter $k$ and accuracy parameter $\beta$. 
 Let $L =\lceil \log_2(n/k) \rceil$.
 There exists a non-adaptive\footnote{An algorithm that computes matrix-vector products $\vec{B}_1\vec{x}_1, \ldots, \vec{B}_t\vec{x}_t$ where $\vec{B}_i = \vec{A}$ or $\vec{B}_i=\vec{A}^\T$ for all $i \in [t]$ is \emph{adaptive} if the choice of $\vec B_i$ and $\vec{x}_i$ depends on the results of prior products $\vec{B}_1\vec{x}_1,\ldots, \vec{B}_{i-1}\vec{x}_{i-1}$. The algorithm is \emph{non-adaptive} (also called a \emph{sketching algorithm}) if the queries are all chosen in advance. Non-adaptive methods, like those studied in this paper, are often preferred, as they allow the queries to be evaluated in parallel.}
 algorithm (\Cref{alg:main}) that uses  $O(k/\beta^3 \cdot L)$ matvec queries to $\vec{A}$ and solves \cref{prob:approx} to accuracy $\Gamma = (1+\beta)^{L}$ with probability at least $99/100$\footnote{More generally, we show that \Cref{alg:main} outputs $\widetilde{\vec{A}}$  with $ \mathbb{E}  [\| \vec{A} - \widetilde{\vec{A}} \|_\F] \leq \Gamma \cdot \min_{\vec{H}\in\HODLR(k)} \|\vec{A} - \vec{H} \|_\F$. The high probability bound stated in \Cref{thm:main} can derived from this expectation bound via Markov's inequality -- see  \Cref{sec:analysis}.}. 
Apart from matvec queries, the algorithm requires $O(n \cdot  \poly(\log(n),k,\beta))$ additional runtime.
\end{theorem}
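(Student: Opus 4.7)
The plan is to design and analyze a robust variant of the classical peeling algorithm that processes the hierarchy from the coarsest level $\ell = 1$ down to the finest level $\ell = L$. At level $\ell$, $\vec{A}$ is naturally partitioned into $2^\ell \times 2^\ell$ blocks of size $n/2^\ell$, and the task is to produce a rank-$k$ approximation of each of the $2^\ell$ off-diagonal blocks on the level-$\ell$ boundary. I would fix sketch width $r = O(k/\beta^3)$ and form non-adaptive, block-structured Gaussian test matrices $\vec{X}^{(\ell)}$ and $\vec{Y}^{(\ell)}$ whose nonzero rows support the column or row index set of a target block. The products $\vec{A}\vec{X}^{(\ell)}$ and $\vec{A}^\T \vec{Y}^{(\ell)}$ then contain sketches of the target blocks, contaminated by contributions from blocks at coarser scales (already approximated, hence subtractable) and from not-yet-approximated blocks at finer scales (which behave as noise). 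After subtracting the coarse-scale contributions, I would apply Generalized Nystr\"om to recover each rank-$k$ factor. To decorrelate the contamination between blocks that are sketched in parallel at the same level, I would use independent Gaussian entries across different target column ranges, realising the \emph{randomly perforated} sketching primitive the paper advertises.

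The main obstacle is controlling how error \emph{propagates} across levels. In the exactly $\HODLR(k)$ case peeling is lossless, but for general $\vec{A}$ each level-$\ell$ sketch is perturbed by all finer-scale off-diagonal content and by the error that has already been incurred on coarser approximations. A naive Generalized Nystr\"om call could amplify this noise by a factor depending on the conditioning of the sketched factor, destroying any hope of a $(1+\beta)^L$ bound. The pivotal step is therefore the paper's new perturbation bound for Generalized Nystr\"om: with sketch width $r = O(k/\beta^3)$, when the method is fed an input of the form $\vec{B} + \vec{N}$ through noisy matvecs, the returned rank-$k$ approximation $\widetilde{\vec{B}}$ should satisfy an expectation bound of the form
\begin{equation*}
\mathbb{E}\,\| \vec{B} - \widetilde{\vec{B}} \|_\F \leq (1+\beta) \min_{\rank \vec{M} \leq k} \| \vec{B} - \vec{M} \|_\F + O(1)\cdot \|\vec{N}\|_\F.
\end{equation*}
Combined with a bound on $\|\vec{N}\|_\F$ in terms of $\OPT$ and the error from earlier levels --- which the perforated sketching keeps controlled by ensuring different blocks see statistically independent noise --- this yields a per-level amplification of at most $(1+\beta)$ over the contribution from that level to the best $\HODLR(k)$ approximation.

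With the per-level bound in hand, I would close the argument by telescoping over levels. Writing $\OPT^2 = \sum_{\ell=1}^L \OPT_\ell^2$ where $\OPT_\ell^2$ is the squared Frobenius contribution of level-$\ell$ off-diagonal blocks to the best $\HODLR(k)$ approximation (the Pythagorean split is legitimate because the blocks across levels are disjoint), the inductive application of the perturbation bound gives expected squared error at most $(1+\beta)^{2L}\cdot \OPT^2$ for the final $\widetilde{\vec{A}}$, which is the claimed $\Gamma = (1+\beta)^L$ in expectation. A Markov's-inequality step, with a constant-factor slack absorbed into $\beta$, upgrades this to the $99/100$ high-probability statement. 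The matvec budget is $r$ sketches per level, applied to both $\vec{A}$ and $\vec{A}^\T$ at each of the $L$ levels, for a total of $O(Lk/\beta^3)$ queries; the $O(n \cdot \poly(\log n, k, \beta))$ auxiliary runtime comes from the Generalized Nystr\"om post-processing --- two thin QR factorisations of $n \times r$ matrices plus a small dense solve --- applied to each of the $O(2^\ell)$ blocks per level and summed over $\ell$.
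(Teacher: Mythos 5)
Your proposal correctly identifies the high-level architecture (top-down peeling, Generalized Nystr\"om as the base low-rank primitive, perforated sketching, a per-level perturbation bound, and a telescope over levels). However, there is a genuine gap in the heart of the argument: the perturbation bound you state is too weak to support the telescope at the claimed matvec budget.

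You write the key bound as
\[
\mathbb{E}\,\|\vec{B} - \widetilde{\vec{B}}\|_\F \;\leq\; (1+\beta)\min_{\rank\vec{M}\leq k}\|\vec{B}-\vec{M}\|_\F + O(1)\cdot\|\vec{N}\|_\F.
\]
The $O(1)$ coefficient on the noise term is fatal. At level $\ell$, the noise $\|\vec{N}\|_\F^2$ is comparable to the \emph{total squared error accumulated at all previous levels}, $S_{\ell-1}^2 := \sum_{\ell'<\ell}\ERR(\ell')^2$. If the noise enters with an $O(1)$ coefficient $c$, then after squaring and summing over the $2^\ell$ blocks at level $\ell$ one gets $\ERR(\ell)^2 \gtrsim c^2 S_{\ell-1}^2$, hence $S_\ell^2 \gtrsim (1+c^2)S_{\ell-1}^2$, and $S_L^2 \gtrsim (1+c^2)^L S_1^2$. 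That is an exponential blowup in $L$, yielding $\Gamma = n^{\Theta(1)}$ rather than $(1+\beta)^L$. You could compensate by making the perforation parameter $t$ so large that $\mathbb{E}\|\vec{N}\|_\F^2$ is driven down by $1/t \approx \beta^2$, but that multiplies the per-level matvec cost by $t$, giving $O(k/\beta^5)$ queries per level rather than the required $O(k/\beta^3)$. So the plan, as stated, cannot reach the theorem's parameters.

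What is missing is the paper's central observation that, in the peeling setting, the noise polluting each sketch is \emph{not} a generic perturbation: it has the structured form $\vec{E}_1 = \vec{M}\widetilde{\vec{\Omega}}$, where $\widetilde{\vec{\Omega}}$ is a Gaussian matrix \emph{independent} of the Gaussian $\vec{\Omega}_{\text{top}}$ attached to the block being recovered. The perturbation bound (Theorem~\ref{thm:RSVD_perturb}) produces a term $\|\vec{E}_1\vec{\Omega}_{\text{top}}^\dagger\|_\F = \|\vec{M}\widetilde{\vec{\Omega}}\vec{\Omega}_{\text{top}}^\dagger\|_\F$, and the Gaussian pseudoinverse identity (Theorem~\ref{thm:gaussian_expectations}) then gives $\mathbb{E}\|\vec{M}\widetilde{\vec{\Omega}}\vec{\Omega}_{\text{top}}^\dagger\|_\F^2 = \frac{k}{s-k-1}\|\vec{M}\|_\F^2 \approx \beta\|\vec{M}\|_\F^2$ once $s = O(k/\beta)$. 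An analogous argument handles the left-sketch noise via $\frac{\sright}{\sleft-\sright-1}$. In other words, the noise enters with an \emph{$O(\beta)$} coefficient, not $O(1)$, and this is what makes the recursion telescope as $S_\ell^2 \leq \OPT(\ell)^2 + (1+O(\beta))S_{\ell-1}^2$ with the stated query count. The perforated sketching gives only a secondary $1/t$ improvement and cannot substitute for this Gaussian-independence mechanism. Your proposal attributes the control entirely to perforation (``ensuring different blocks see statistically independent noise''), which is not the load-bearing ingredient; indeed the paper shows this Gaussian mechanism works with Generalized Nystr\"om but \emph{fails} for plain RSVD precisely because the RSVD projection noise lacks the requisite Gaussian independence.

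Two secondary remarks. First, you describe the contamination in the level-$\ell$ sketch as coming from ``not-yet-approximated blocks at finer scales.'' This is backwards: the perforated sketch is constructed to avoid the on-diagonal blocks (which contain all finer-scale content); the noise arises from the \emph{residuals} of the off-diagonal blocks at coarser levels $1,\ldots,\ell-1$, which were approximated but not zeroed out exactly. Second, the paper's perturbation bound distinguishes the right-sketch noise $\vec{E}_1$ (entering via $\vec{\Omega}_{\text{top}}^\dagger$) from the left-sketch noise $\vec{E}_2$ (entering additively with coefficient $2$); the fact that $\vec{E}_2$ can also be written in a Gaussian-structured form is exactly why the Generalized Nystr\"om projection step is preferred over the RSVD projection step. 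A single unstructured ``$\vec{N}$'' cannot capture this.
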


We highlight two interesting instantiations of \Cref{thm:main}. For any $\varepsilon>0$, we obtain accuracy $(1+\varepsilon)$ with $O(k\log^4(n/k)/\varepsilon^3)$ total matvec queries by setting $\beta = O(\varepsilon/\log(n/k))$. Alternatively, for any constant $c>0$, we obtain accuracy $n^c$ with $O(k\log(n/k))$ total matvec queries by setting $\beta = 2^{c} - 1 = \Theta(1)$. 
This matches the complexity of existing methods for solving the \emph{recovery problem}, i.e., the case when $\vec{A}$ is exactly HODLR. 

\Cref{thm:main} is obtained through a variant of the popular {peeling algorithm} for exact HODLR matrix recovery. 
This algorithm obtains an approximation via a ``top down'' approach. Specifically, it computes low-rank approximations to $\vec{A}$'s top right and bottom left blocks via randomized sketching, implicitly subtracts the results from $\vec{A}$, and then recurses on the upper left and lower right blocks. A major technical challenge for applying the peeling algorithm to \Cref{prob:approx} is how to control error that can accumulate across levels of recursion when $\vec{A}$'s top right and lower left blocks are not \emph{exactly} rank-$k$ (i.e., when the matrix is not exactly HODLR). This potential accumulation of error  has been highlighted in several prior works,  including in the earliest work on the peeling method \cite{LinLuYing:2011,BoukaramTurkiyyahKeyes:2019,BoulleEarlsTownsend:2022}, and has been the key challenge in extending peeling to the solve the HODLR approximation problem.

We resolve this long-standing challenge using two new techniques. First, we prove that if peeling is implemented with the so-called \emph{Generalized Nystr\"om Method} for low-rank approximation \cite{ClarksonWoodruff:2009,Nakatsukasa:2020}, sufficient oversampling leads to an algorithm that requires $kL/\poly(\beta)$ matrix-vector products to achieve error $\Gamma = (1+\beta)^{L+1}$. Our proof requires a novel perturbation analysis of sketching for low-rank approximation, and brings to light a surprising fact: the same result cannot be obtained if the more standard \emph{Randomized SVD} method \cite{HalkoMartinssonTropp:2011} is used for low-rank approximation within peeling. Second, we obtain our final result (with a better polynomial dependence on $\beta$) by combining peeling with a new ``randomly perforated'' sketching distribution that allows for even stronger control of error buildup across recursive levels. Details of the existing peeling algorithm are given in \Cref{sec:peeling_intro}, and our improvements are described in \Cref{sec:techniques}.

We complement our upper bound from \cref{thm:main} with a nearly matching lower-bound. 

\begin{theorem}
\label{thm:lowerbd}
For any $\varepsilon > 0$, any (possibly adaptive and randomized) algorithm that solves \cref{prob:approx} with error $\Gamma = (1+\varepsilon)$ and probability $\ge 1/25$ requires $\Omega(k \log(n/k) + k / \varepsilon)$ matvec queries to $\vec{A}$.\footnote{Formally, for a fixed constant $c$, we show that there is a distribution over inputs $\vec A \in \R^{n\times n}$ for $n \geq ck/\epsilon$ such that any (possibly randomized) algorithm using $O(k \log(n/k) + k/\epsilon)$ matvecs succeeds with probability $< 1/25$ over the randomness of the algorithm and the choice of input.} Moreover, any algorithm that solves \cref{prob:approx} for any finite $\Gamma$ and any non-zero probability, requires $\Omega(k \log(n/k))$ queries. 
\end{theorem}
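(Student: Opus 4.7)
The two terms in the lower bound can be established separately; taking the maximum yields the combined $\Omega(k\log(n/k) + k/\varepsilon)$ bound.

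For the $\Omega(k/\varepsilon)$ term, I would reduce from known matvec-query lower bounds for $(1+\varepsilon)$-optimal rank-$k$ approximation~\cite{BakshiClarksonWoodruff:2022,BakshiNarayanan:2023}. Given any hard $(n/2)\times(n/2)$ rank-$k$ approximation instance $\vec{B}$, embed it as the top-right block of an otherwise-zero $n\times n$ matrix $\vec{A}$. Because the other three quadrants of $\vec{A}$ vanish, the optimal $\HODLR(k)$ approximant simply places the best rank-$k$ approximation of $\vec{B}$ in the top-right block and zero elsewhere, so $\min_{\vec{H}\in\HODLR(k)}\|\vec{A}-\vec{H}\|_\F$ equals the optimal rank-$k$ Frobenius error of $\vec{B}$. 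A matvec with $\vec{A}$ or $\vec{A}^\T$ can be simulated by a single matvec with $\vec{B}$ or $\vec{B}^\T$, and extracting the top-right block of any $(1+\varepsilon)$-HODLR approximation of $\vec{A}$ yields a $(1+\varepsilon)$-rank-$k$ approximation of $\vec{B}$. The $\Omega(k/\varepsilon)$ lower bound transfers directly.

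For the $\Omega(k\log(n/k))$ term, the key observation is that whenever $\vec{A}\in\HODLR(k)$ exactly, the optimal approximation error is zero, so achieving any finite $\Gamma$ with positive probability requires exact recovery. By Yao's minimax principle, it suffices to exhibit a continuous distribution on $\HODLR(k)$ under which no deterministic algorithm using fewer than $c\cdot k\log(n/k)$ matvecs recovers $\vec{A}$ exactly with probability $\geq 1/25$. The natural choice is to place independent Gaussian distributions on the rank-$k$ factors $\vec{U},\vec{V}$ of every off-diagonal block at every level of the hierarchy; the recursion $\mathrm{params}(n) = 2nk - 2k^2 + 2\,\mathrm{params}(n/2)$ shows that the support is a smooth manifold $\mathcal{M}\subset\R^{n\times n}$ of intrinsic dimension $\Theta(nk\log(n/k))$. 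For any fixed deterministic adaptive algorithm and any observed transcript, the set of matrices consistent with that transcript is an affine subspace of codimension at most $tn$ (since $t$ matvecs impose $tn$ linear equations on $\vec{A}$); intersecting with $\mathcal{M}$ through a generic $\vec{A}$ yields a fiber of dimension at least $\dim(\mathcal{M})-tn>0$ whenever $t<ck\log(n/k)$. Since the algorithm outputs the same matrix for every element of that fiber, exact recovery succeeds with probability zero under the continuous prior.

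The main technical obstacle will be rigorously carrying out the dimension count under both adaptivity and the nonlinear constraint $\vec{A}\in\mathcal{M}$. Although adaptivity makes the transcript a nonlinear function of $\vec{A}$, conditioning on the observed responses renders the consistency set affine-linear (the queries are then determined by the transcript), so the codimension bound persists. The delicate point is transversality: one must verify that for almost every draw from the Gaussian prior, the $tn$ linear consistency constraints cut $\mathcal{M}$ transversely at $\vec{A}$, leaving a fiber of the full expected dimension $\dim(\mathcal{M})-tn$. This can be established by a Sard- or parametric-transversality-style argument applied to the joint parametrization $\vec{A}\mapsto(\text{queries},\text{responses})$ induced by the algorithm, followed by an application of Fubini to conclude that the recovery event has measure zero under the continuous prior.
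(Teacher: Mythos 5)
Your proposal has a fundamental gap in the $\Omega(k/\varepsilon)$ term, and your $\Omega(k\log(n/k))$ argument takes a substantially heavier route than necessary.

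\textbf{The $\Omega(k/\varepsilon)$ reduction cannot work.} Your reduction from vanilla rank-$k$ approximation is doomed at the source: as the paper itself explains, the best known lower bound for $(1+\varepsilon)$-optimal Frobenius rank-$k$ approximation in the matvec model is only $\Omega(k + 1/\varepsilon^{1/3})$, and there \emph{cannot} exist an $\Omega(k/\varepsilon)$ lower bound for that problem, because rank-$k$ approximation is solvable with $\tilde O(k/\varepsilon^{1/3})$ matvecs \cite{BakshiClarksonWoodruff:2022,MeyerMuscoMusco:2024}. Embedding a rank-$k$ hard instance in the top-right block of an otherwise-zero matrix is a clean reduction, but it can only transfer whatever lower bound holds for the source problem; here that tops out at $\Omega(k + 1/\varepsilon^{1/3})$, which is strictly weaker than $\Omega(k/\varepsilon)$. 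In fact, this is precisely the observation the paper makes in motivating its actual approach: hierarchical approximation is provably harder than vanilla low-rank approximation in its $\varepsilon$-dependence, and this separation is a key contribution of the theorem. The paper instead reduces from \emph{block-diagonal} approximation with $b\times b$ blocks for $b = \Theta(k)$, using $\mathcal{B}(n,2k)\subset\HODLR(k)$ and a lower bound of $\Omega(b/\varepsilon)$ for block-diagonal approximation from \cite{AmselChenDumanKelesHalikiasMuscoMusco:2024}. The block-diagonal family, unlike low-rank approximation, enjoys a genuinely linear $1/\varepsilon$ lower bound, which is what makes the argument go through.

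\textbf{The $\Omega(k\log(n/k))$ argument is sound in spirit but heavier than needed.} Your dimension-counting idea is essentially right, and the recursion $\mathrm{params}(n) = 2nk - 2k^2 + 2\,\mathrm{params}(n/2)$ gives the correct $\Theta(nk\log(n/k))$ parameter count. But working with the nonlinear manifold of $\HODLR(k)$ matrices forces you into transversality and Sard-type arguments that you correctly flag as delicate. The paper sidesteps all of this by restricting to a \emph{linearly parameterized} subfamily $\mathcal{L}\subset\HODLR(k)$ (each off-diagonal rank-$k$ block supported only on its first $k$ columns), which has dimension $m > nk\log_2(n/k)$, and then invoking the elementary observation of Halikias--Townsend \cite{HalikiasTownsend:2023} that exact recovery from a linearly parameterized family of dimension $m$ requires $\lceil m/n\rceil$ queries even adaptively, with no measure-theoretic technicalities. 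If you want to keep your manifold-based approach, you would need to rigorously establish the transversality claim, handle the interaction between adaptivity and the nonlinear constraint, and verify that the ``generic'' argument applies to all fibers simultaneously; the linear restriction removes all of these concerns at essentially no cost.
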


\Cref{thm:lowerbd} establishes that our $O(k\log (n/k))$ query result to achieve error $n^c$ from \Cref{thm:main} cannot be improved in terms of the number of matvec queries, although it might be possible to obtain better error (e.g., a $\log(n)$ or even constant factor approximation) with the same number of matvecs. Additionally, \Cref{thm:lowerbd} establishes that our $O(k\log^4(n/k)/\varepsilon^3)$ query result for $(1+\varepsilon)$ error cannot be improved by more than log factors and a $1/\varepsilon^2$  factor. Interestingly, the lower bound shows a separation between the complexity of vanilla low-rank approximation and hierarchical low-rank approximation in terms of dependence on $\varepsilon$. The best known low-rank approximation algorithms use just $\tilde O(k/\varepsilon^{1/3})$ matrix-vector products to achieve relative error $(1+\varepsilon)$ \cite{BakshiClarksonWoodruff:2022,MeyerMuscoMusco:2024}.\footnote{The best known lower bound for vanilla $k$-rank approximation is $\Omega(k + 1/\varepsilon^{1/3})$ matrix-vector products \cite{BakshiClarksonWoodruff:2022}. Combining the $k$ and $\varepsilon$ terms, as we do in our \Cref{thm:lowerbd} for hierarchical approximation, is an open question.} In contrast, \Cref{thm:lowerbd} shows that a sublinear dependence on $1/\varepsilon$ \emph{cannot} be achieved for HODLR matrix approximation.

The proof of \Cref{thm:lowerbd} builds on a growing body of work on lower bounds for adaptive matrix-vector product algorithms \cite{SimchowitzElAlaouiRecht:2018,BravermanHazanSimchowitzWoodworth:2020,ChewiDeDiosPontLiLuNarayanan:2023}, which require techniques beyond those used to prove lower bounds, e.g., in the non-adaptive sketching setting. Our proof draws specifically on a recent result on the number of matrix-vector products required to obtain an optimal block diagonal approximation to a matrix $\vec{A}$ \cite{AmselChenDumanKelesHalikiasMuscoMusco:2024}.

\smallskip

\noindent\textbf{Organization.}
The remainder of the paper is organized as follows.
In \cref{sec:peeling_intro}, we describe the classic peeling algorithm for HODLR matrix recovery, and discuss the challenges in applying it to the HODLR matrix approximation problem (\cref{prob:approx}).
In \cref{sec:techniques}, we discuss the techniques we use to analyze our variant of peeling, including our new perturbation bound for low-rank approximation.
Then, in \cref{sec:alg}, we describe the exact implementation of our algorithm and introduce notation required for our main analysis. The final proof of \Cref{thm:main} is given in \cref{sec:analysis}. In \cref{sec:lower_bounds}, we prove our lower bound, \Cref{thm:lowerbd}.
Finally, in \cref{sec:examples} we show the results of some numerical experiments.

\section{The Peeling Algorithm}
\label{sec:peeling_intro}

In this section, we describe the classic peeling method for HODLR matrix recovery, on which our algorithm is based. As discussed, existing analyses of this method apply only in the setting when the input matrix $\vec A$ is exactly HODLR. We illustrate here how errors can arise when peeling  is applied to solve \cref{prob:approx} in the general case when $\vec{A}$ is only approximated by a HODLR matrix.  %applied to a non-HODLR matrix.
The precise notation used in the figures will be fully described in \cref{sec:alg_notation}.

\subsection{Low-rank approximation}
\label{sec:LRA_algs}

The peeling algorithm makes use of matrix-vector query algorithms for low-rank approximation: given a matrix $\vec{B}\in\mathbb{R}^{m_1\times m_2}$, accessible only via matrix-vector queries, we would like to find a rank-$k$ approximation to $\vec{B}$, or to exactly recover $\vec B$ when $\mathrm{rank}(\vec B) \le k$.
It is well-known that the best low-rank approximation to $\vec{B}$ in the Frobenius norm is $\llbracket \vec{B} \rrbracket_k$, the rank-$k$ truncated SVD of $\vec{B}$.

Perhaps the most well-known algorithms for this task are the Randomized SVD (RSVD) \cite{HalkoMartinssonTropp:2011} and the Generalized Nystr\"om Method \cite{ClarksonWoodruff:2009,TroppYurtzeverUdellCevher:2017,Nakatsukasa:2020}, summarized below:

\noindent
\begin{minipage}{.42\textwidth}
\begin{algorithm}[H]
\caption{Randomized SVD}
\fontsize{11}{15}\selectfont
\begin{algorithmic}[1]
\Procedure{RSVD}{$\vec{B},k,\sright$}
\State $\vec{\Omega}\sim\Gaussian(m_2,\sright)$
\State $\vec{Q} = \operatorname{orth}(\vec{B}\vec{\Omega})$
\State $\vec{X} = \vec{Q}^\T\vec{B}$ \Comment{$\argmin_{\vec{Z}}\|\vec{B} - \vec{Q}\vec{Z}\|_\F$}
\State \Return $\vec{Q}\llbracket\vec{X}\rrbracket_k$
\EndProcedure
\end{algorithmic}
\end{algorithm}
\end{minipage}
\hfill
\begin{minipage}{.55\textwidth}
\begin{algorithm}[H]
\caption{Generalized Nystr\"om Method}
\fontsize{11}{15}\selectfont
\begin{algorithmic}[1]
\Procedure{GNM}{$\vec{B},k,\sright,\sleft$}
\State $\vec{\Omega}\sim\Gaussian(m_2,\sright)$,\hfill $\vec{\Psi}\sim\Gaussian(m_1,\sleft)$
\State $\vec{Q} = \operatorname{orth}(\vec{B}\vec{\Omega})$
\State $\vec{X} =(\vec{\Psi}^\T \vec{Q})^\dagger \vec{\Psi}^\T\vec{B}$ \Comment{$\argmin_{\vec{Z}}\|\vec{\Psi}^\T\vec{B} - \vec{\Psi}^\T\vec{Q}\vec{Z}\|_\F$}
\State \Return $\vec{Q}\llbracket\vec{X}\rrbracket_k$ 
\EndProcedure
\end{algorithmic}
\end{algorithm}
\end{minipage}
\vspace{1em}

Both methods first compute an orthonormal basis $\vec Q$ for the column span of  the matrix $\vec B \vec \Omega$, formed by multiplying $\vec B$ by a random sketching matrix $\vec \Omega$ with $\sright$ columns.
Throughout, we take this matrix to be Gaussian for simplicity, although most other popular sketching distributions can also be employed; see e.g. \cite{Woodruff:2014,HalkoMartinssonTropp:2011}. RSVD then projects $\vec B$ onto this column span and forms the best rank-$k$ approximation of the result. 
Generalized Nystr\"om follows the same idea, but instead computes an approximate projection of $\vec B$ onto $\vec Q$ by using a second sketch $\vec \Psi$ on the left. 

If $\mathrm{rank}(\vec{B}) \le k$, RSVD and Generalized Nystr\"om both recover $\vec{B}$ exactly (with probability one) if, respectively, $\sright \geq k$ or $\sright,\sleft \geq k$.
More generally, these methods can obtain a near-optimal low-rank approximation for arbitrary $\vec{B}$.
In particular, if $\sright = O(k/\beta)$, then the output of RSVD satisfies $\| \vec B - \vec Q \llbracket\vec{X}\rrbracket_k\|_\F \le (1+\beta)\cdot \|\vec{B} - \llbracket\vec{B}\rrbracket_k\|_\F$ with high probability \cite{Sarlos:2006}. 
The output of Generalized Nystr\"om gives the same guarantee when $\sright = O(k/\beta)$ and $\sleft = O(k/\beta^3)$ \cite{TroppYurtzeverUdellCevher:2016}.

%Interestingly, it is currently unknown whether $\sright = O(k/\beta)$ and $\sleft = O(k/\beta^2)$ suffices for the truncated output.
%We discuss this further in \cref{sec:techniques}.~

\subsection{Exact HODLR recovery}\label{sec:exact}

Recall from \cref{def:hodlr} that any $\HODLR(k)$ matrix $\vec{A} \in \mathbb{R}^{n \times n}$ is composed to four $(n/2) \times (n/2)$ sized blocks. The off-diagonal blocks, $\vec A_{1,2}$ and $\vec A_{2,1}$, are rank-$k$ and the on-diagonal blocks, $\vec A_{1,1}$ and $\vec A_{2,2}$, are themselves $\HODLR(k)$.
The key idea of the peeling algorithm is to first recover the low-rank off-diagonal blocks $\vec A_{1,2}$ and $\vec A_{2,1}$, to implicitly subtract them from the matrix, and to then continue on to recursively recover the on-diagonal $\HODLR(k)$ blocks. More formally, peeling relies on the following observations:

\bigskip

\hypertarget{obs:first}{\noindent\textbf{Observation 1.}}
 We can perform matrix-vector products with the off-diagonal blocks $\vec A_{1,2}$ and $\vec A_{2,1}$ (and their transposes) using matrix-vector products with $\vec{A}$ and $\vec{A}^\T$. In particular, we can compute products with $\vec{A}_{2,1}$ and $(\vec{A}_{2,1})^\T$ by
\begin{equation}
    \begin{bmatrix}
        \vec{A}_{1,1} & \vec{A}_{1,2} \\
        \vec{A}_{2,1} & \vec{A}_{2,2}
    \end{bmatrix}    
    \begin{bmatrix}
        \vec{\Omega} \\
        \vec{0}
    \end{bmatrix}
    = 
    \begin{bmatrix}
        \sim \\
        \vec{A}_{2,1} \vec{\Omega}
    \end{bmatrix}
    ,\quad
    \begin{bmatrix}
        \vec{A}_{1,1} & \vec{A}_{1,2} \\
        \vec{A}_{2,1} & \vec{A}_{2,2}
    \end{bmatrix}^\T
    \begin{bmatrix}
         \vec{0} \\ \vec{\Psi}
    \end{bmatrix}
    = 
    \begin{bmatrix}
        (\vec{A}_{2,1})^\T \vec{\Psi}
        \\ \sim
    \end{bmatrix},
\end{equation}
and analogously with $\vec{A}_{1,2}$ and $(\vec{A}_{1,2})^\T$ by 
\begin{equation}
    \begin{bmatrix}
        \vec{A}_{1,1} & \vec{A}_{1,2} \\
        \vec{A}_{2,1} & \vec{A}_{2,2}
    \end{bmatrix}    
    \begin{bmatrix}
        \vec{0} \\
        \vec{\Omega}
    \end{bmatrix}
    = 
    \begin{bmatrix}
        \vec{A}_{1,2} \vec{\Omega} \\
        \sim
    \end{bmatrix}
    ,\quad
    \begin{bmatrix}
        \vec{A}_{1,1} & \vec{A}_{1,2} \\
        \vec{A}_{2,1} & \vec{A}_{2,2}
    \end{bmatrix}^\T
    \begin{bmatrix}
         \vec{\Psi} \\ \vec{0}
    \end{bmatrix}
    = 
    \begin{bmatrix}
        \sim \\
        (\vec{A}_{1,2})^\T\vec{\Psi}
    \end{bmatrix}.
\end{equation}
Here, ``\:$\sim$\:'' indicates a block of the output that we ignore in our computations.
As discussed in \cref{sec:LRA_algs}, algorithms like RSVD and Generalized Nystr\"om can exactly recover a rank-$k$ matrix (with probability one) using $k$ products with each the matrix and its transpose. Thus, when $\vec A$ is exactly $\HODLR(k)$, we can fully recover both off-diagonal blocks using just $4k$ total queries to $\vec{A}$ (implementing $k$ queries with each of $\vec A_{1,2},(\vec{A}_{1,2})^\T,\vec A_{2,1},$ and $(\vec{A}_{2,1})^\T$).\footnote{Algorithms in the literature \cite{LinLuYing:2011,LevittMartinsson:2024a,HalikiasTownsend:2023} commonly use a sketch size $k+p$, where $p$ is a small  ``oversampling'' parameter (e.g., $p=2$). 
This is important for adding robustness when recovering matrices which are numerically, but not exactly, rank-$k$. 
This is not necessary if we assume exact computations and exactly rank-$k$ blocks.}

\vspace{2em}
\begin{figure}[ht]
    \centering
    \scalebox{.8}{\begin{tikzpicture}
    \node[] at (-.85,2.05) {\large$\vec{A}^{(2)} = $};
    
    \draw[] (0,0) rectangle (4,4);
    
    \draw[] (0,0) rectangle (4,4);

    \draw[] (0,0) rectangle (2,2);
    \draw[] (2,2) rectangle (4,4);

    \foreach \x in {0,...,3}{
    \draw[dotted,line width=.25pt] (\x,0) -- (\x,4);
    \draw[dotted,line width=.25pt] (0,\x) -- (4,\x);
    }
    
    \draw[fill=black!10] (0,2) rectangle (1,3) node[pos=.5] {$\vec{A}^{(2)}_{2,1}$};
    \draw[fill=black!10] (1,3) rectangle (2,4) node[pos=.5] {$\vec{A}^{(2)}_{1,2}$};
    \draw[fill=black!10] (2,0) rectangle (3,1) node[pos=.5] {$\vec{A}^{(2)}_{4,3}$};
    \draw[fill=black!10] (3,1) rectangle (4,2) node[pos=.5] {$\vec{A}^{(2)}_{3,4}$};

    \draw[fill=black!20] (0,3) rectangle (1,4);
    \draw[fill=black!20] (1,2) rectangle (2,3);
    \draw[fill=black!20] (2,1) rectangle (3,2);
    \draw[fill=black!20] (3,0) rectangle (4,1);
\end{tikzpicture}
\hspace{2em}
\begin{tikzpicture}
    \node[] at (-.85,2.05) {\large$\vec{\Omega}^{+} = $};

    \draw[] (0,0) rectangle (1,4);
    
    \foreach \x in {1,3}{
    \draw[fill=black!10] (0,{4-\x}) rectangle (1,{4-\x+1}) node[pos=.5] {$\vec{\Omega}_{\x}$};
    }
\end{tikzpicture}
\hspace{.5em}
\begin{tikzpicture}
    \node[] at (-.85,2.05) {\large$\vec{\Psi}^{-} = $};

    \draw[] (0,0) rectangle (1,4);
    
    \foreach \x in {2,4}{
    \draw[fill=black!10] (0,{4-\x}) rectangle (1,{4-\x+1}) node[pos=.5] {$\vec{\Psi}_{\x}$};
    }
\end{tikzpicture}
\hspace{.5em}
\begin{tikzpicture}
    \node[] at (-.85,2.05) {\large$\vec{\Omega}^{-} = $};

    \draw[] (0,0) rectangle (1,4);
    
    \foreach \x in {2,4}{
    \draw[fill=black!10] (0,{4-\x}) rectangle (1,{4-\x+1}) node[pos=.5] {$\vec{\Omega}_{\x}$};
    }
\end{tikzpicture}
\hspace{.5em}
\begin{tikzpicture}
    \node[] at (-.85,2.05) {\large$\vec{\Psi}^{+} = $};

    \draw[] (0,0) rectangle (1,4);
    
    \foreach \x in {1,3}{
    \draw[fill=black!10] (0,{4-\x}) rectangle (1,{4-\x+1}) node[pos=.5] {$\vec{\Psi}_{\x}$};
    }
\end{tikzpicture}}
    \caption{
    State of the hierarchical matrix at the start of the $\ell=2$  level of peeling. $\vec A^{(2)}$ denotes the matrix $\vec A$ after subtracting the low-rank off-diagonal blocks recovered at the first level -- these blocks are zero in $\vec A^{(2)}$ and shown as white in the figure.
    For $j = 1, 3$, we can simultaneously obtain the products $\vec{A}^{(2)}_{j+1,j}\vec{\Omega}_j$ and the transpose-products $(\vec{A}^{(2)}_{j+1,j})^\T\vec{\Psi}_{j+1}$ from the sketches $\vec{A}\vec{\Omega}^+$ and $\vec{A}^\T\vec{\Psi}^-$.
    Letting $\vec \Omega^+$ and $\vec \Psi^-$ have $k$ columns and blocks chosen to be appropriate sketching matrices, these products can be used to exactly recover  the rank-$k$ blocks $\vec{A}^{(2)}_{j+1,j}$ for $j = 1,3$. 
    Obtaining products with and recovering $\vec{A}^{(2)}_{j-1,j}$ for $j = 2, 4$ can be done analogously using $\vec{\Omega}^{-}$ and $\vec{\Psi}^+$. Overall, we can recover all four rank-$k$ off diagonal blocks at level $\ell =2$ using just $4k$ queries to $\vec A$ ($k$ for each of $\vec \Omega^+, \vec \Omega^-, \vec \Psi^+,$ and $\vec \Psi^-$).
    }
    \label{fig:peeling}
\end{figure}
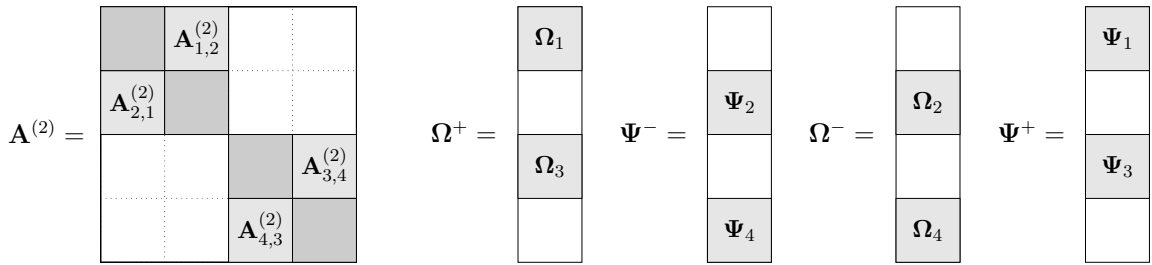

\bigskip

\hypertarget{obs:recurse}{\noindent\textbf{Observation 2.}}
After  {exactly} obtaining the blocks $\vec{A}_{2,1}$ and $\vec{A}_{1,2}$, we can \emph{simultaneously} perform matrix-vector products with the on-diagonal blocks $\vec{A}_{1, 1}$ and $\vec{A}_{2, 2}$ (and their transposes) using matrix-vector queries to $\vec A$. In particular, we can compute for any $\vec{\Omega}_1$ and $\vec{\Omega}_1$,
\[
    \begin{bmatrix}
        \vec{A}_{1,1} & \vec{A}_{1,2} \\
        \vec{A}_{2,1} & \vec{A}_{2,2}
    \end{bmatrix} 
    \begin{bmatrix}
        \vec{\Omega}_1 \\
        \vec{\Omega}_2
    \end{bmatrix}
    -  
    \begin{bmatrix}
        \vec{A}_{1,2} \vec{\Omega}_2 \\
        \vec{A}_{2,1} \vec{\Omega}_1
    \end{bmatrix}
    = 
    \begin{bmatrix}
        \vec{A}_{1,1}\vec{\Omega}_1 \\
        \vec{A}_{2,2}\vec{\Omega}_2
    \end{bmatrix},
    \]
    and analogously 
    \[
    \begin{bmatrix}
        \vec{A}_{1,1} & \vec{A}_{1,2} \\
        \vec{A}_{2,1} & \vec{A}_{2,2}
    \end{bmatrix}^\T
    \begin{bmatrix}
        \vec{\Psi}_1 \\
        \vec{\Psi}_2
    \end{bmatrix}
    -  
    \begin{bmatrix}
        (\vec{A}_{1,2})^\T\vec{\Psi}_2  \\
        (\vec{A}_{2,1})^\T\vec{\Psi}_1 
    \end{bmatrix}
    = 
    \begin{bmatrix}
        (\vec{A}_{1,1})^\T\vec{\Psi}_1 \\
        (\vec{A}_{2,2})^\T\vec{\Psi}_2
    \end{bmatrix}.
    \]
    Since $\vec A_{1,1}$ and $\vec A_{2,2}$ are themselves $\HODLR(k)$, 
  by \hyperlink{obs:first}{Observation 1}, we can recover their low-rank off diagonal blocks (each of size $(n/4) \times (n/4)$) using just $4k$ queries to $\vec{A}$ (see  \cref{fig:peeling}).

\bigskip

\noindent\textbf{Observation 3.} We can repeat this process, recursing toward the diagonal, to recover smaller and smaller off-diagonal low-rank blocks. 
Critically, the number of matrix-vector products used to recover the off-diagonal blocks at a given level depends only on the rank parameter $k$ and is \emph{independent} of the level itself.
After $L = \lceil \log_2(n/k)\rceil$ recursive steps, the blocks will be of size at most $k$.
At this point, the diagonal blocks can be recovered all at once using $k$ matrix-vector products. 
Overall, $4k$ queries to $\vec A$ are used at each level, and $k$ are used at the final level, giving total cost of $O(k\cdot (L+1))$ matvec queries. 

\subsection{Peeling in the presence of error}\label{sec:error}

The peeling algorithm described in \cref{sec:exact} assumes that the matrix $\vec{A}$ is exactly $\HODLR(k)$.
In particular, \hyperlink{obs:recurse}{Observation 2} does not hold if the off-diagonal blocks at the previous level are not recovered exactly.
This is illustrated in \cref{fig:peeling_err}.

As noted in \cite{LinLuYing:2011}, ``it is a natural concern that whether the error from low-rank decompositions on top levels accumulates in the peeling steps.''
In particular, if all of the error at a given level propagated to the next level through perturbations on the desired sketches, the error could double at each level; i.e., result in an \emph{exponential} blow-up of the error with respect to the number of levels.
In fact, as we demonstrate in \cref{sec:hard_instance}, certain variants of the peeling algorithm can exhibit such a failure mode on hard instances.
Thus, understanding the propagation of error from one level to the next is critical in the design and analysis of peeling algorithms.
In \cref{sec:techniques} we discuss the techniques we use to control and analyze this error propagation. 

\vspace{2em}
\begin{figure}[htb]
    \centering
    \scalebox{.7}{\begin{tikzpicture}
    \node[] at (-.85,4.1) {\large$\vec{A}^{(3)} = $};
    
    \draw[] (0,0) rectangle (8,8);
    
    \draw[] (0,0) rectangle (4,4);
    \draw[] (4,4) rectangle (8,8);

    \draw[] (0,4) rectangle (2,6);
    \draw[] (2,6) rectangle (4,8);
    \draw[] (4,0) rectangle (6,2);
    \draw[] (6,2) rectangle (8,4);

    \foreach \x in {0,...,7}{
    \draw[dotted,line width=.25pt] (\x,0) -- (\x,8);
    \draw[dotted,line width=.25pt] (0,\x) -- (8,\x);
    }
    
    \draw[fill=black!10] (0,6) rectangle (1,7); % node[pos=.5] {$\vec{A}^{(3)}_{2,1}$};
    \draw[fill=black!10] (1,7) rectangle (2,8); % node[pos=.5,text=black!50] {$\vec{A}^{(3)}_{1,2}$};
    \draw[fill=black!10] (2,4) rectangle (3,5); % node[pos=.5] {$\vec{A}^{(3)}_{4,3}$};
    \draw[fill=black!10] (3,5) rectangle (4,6); % node[pos=.5,text=black!50] {$\vec{A}^{(3)}_{3,4}$};
    \draw[preaction={fill, black!10},pattern=crosshatch, pattern color = Green!30] (4,2) rectangle (5,3) node[pos=.5,text=Green] {$\vec{A}^{(3)}_{6,5}$};
    \draw[fill=black!10] (5,3) rectangle (6,4); % node[pos=.5,text=black!50] {$\vec{A}^{(3)}_{5,6}$};
    \draw[fill=black!10] (6,0) rectangle (7,1); % node[pos=.5] {$\vec{A}^{(3)}_{8,7}$};
    \draw[fill=black!10] (7,1) rectangle (8,2); % node[pos=.5,text=black!50] {$\vec{A}^{(3)}_{7,8}$};

    \draw[fill=black!20] (0,7) rectangle (1,8);
    \draw[fill=black!20] (1,6) rectangle (2,7);
    \draw[fill=black!20] (2,5) rectangle (3,6);
    \draw[fill=black!20] (3,4) rectangle (4,5);
    \draw[fill=black!20] (4,3) rectangle (5,4);
    \draw[fill=black!20] (5,2) rectangle (6,3);
    \draw[fill=black!20] (6,1) rectangle (7,2);
    \draw[fill=black!20] (7,0) rectangle (8,1);

    \fill[pattern=north east lines, pattern color = Red!30] (0,2) rectangle (1,3) node[pos=.5,text=Red] {$\vec{A}^{(3)}_{6,1}$};
    \fill[pattern=north east lines, pattern color = Red!30] (2,2) rectangle (3,3) node[pos=.5,text=Red] {$\vec{A}^{(3)}_{6,3}$};
    \fill[pattern=north east lines, pattern color = Red!30] (6,2) rectangle (7,3) node[pos=.5,text=Red] {$\vec{A}^{(3)}_{6,7}$};

    \fill[pattern=north west lines, pattern color = Red!30] (4,6) rectangle (5,7) node[pos=.5,text=Red] {$\vec{A}^{(3)}_{2,5}$};
    \fill[pattern=north west lines, pattern color = Red!30] (4,4) rectangle (5,5) node[pos=.5,text=Red] {$\vec{A}^{(3)}_{4,5}$};
    \fill[pattern=north west lines, pattern color = Red!30] (4,0) rectangle (5,1) node[pos=.5,text=Red] {$\vec{A}^{(3)}_{8,5}$};
\end{tikzpicture}
\hspace{3em}
\begin{tikzpicture}
    \node[] at (-.85,4.1) {\large$\vec{\Omega}^{+} = $};

    \draw[] (0,0) rectangle (1,8);
    
    \foreach \x in {1,3,5,7}{
    \draw[fill=black!10] (0,{8-\x}) rectangle (1,{8-\x+1}) node[pos=.5] {$\vec{\Omega}_{\x}$};
    }
\end{tikzpicture}
\hspace{1em}
\begin{tikzpicture}
    \node[] at (-.85,4.1) {\large$\vec{\Psi}^{-} = $};

    \draw[] (0,0) rectangle (1,8);
    
    \foreach \x in {2,4,6,8}{
    \draw[fill=black!10] (0,{8-\x}) rectangle (1,{8-\x+1}) node[pos=.5] {$\vec{\Psi}_{\x}$};
    }
\end{tikzpicture}}
    \caption{
    State of the hierarchical matrix at the start of the $\ell=3$  level of peeling. $\vec A^{(3)}$ denotes the matrix $\vec A$ after subtracting off-diagonal blocks that were (approximately) recovered at the first two levels. 
    We would like to use the sketches $\vec{A}^{(3)}_{6,5}\vec{\Omega}_5$ and $(\vec{A}^{(3)}_{6,5})^\T\vec{\Psi}_{6}$  
    to obtain a low-rank approximation to the off-diagonal block $\vec{A}^{(3)}_{6,5}$.
    However, if the off-diagonal blocks at previous levels were not recovered exactly (since  these blocks may not be exactly rank-$k$), then after subtraction, these blocks will not be exactly zero in $\vec A^{(3)}$. 
    We thus obtain perturbed sketches of the form ${\color{Green}\vec{A}^{(3)}_{6,5}\vec{\Omega}_5} + {\color{Red}\vec{A}^{(3)}_{6,1}\vec{\Omega}_1}+{\color{Red}\vec{A}^{(3)}_{6,3}\vec{\Omega}_3}+{\color{Red}\vec{A}^{(3)}_{6,7}\vec{\Omega}_7}$ and ${\color{Green}(\vec{A}^{(3)}_{6,5})^\T\vec{\Psi}_6} + {\color{Red}(\vec{A}^{(3)}_{2,5})^\T\vec{\Psi}_2}+{\color{Red}(\vec{A}^{(3)}_{4,5})^\T\vec{\Psi}_4}+{\color{Red}(\vec{A}^{(3)}_{8,5})^\T\vec{\Psi}_8}$.
    }
    \label{fig:peeling_err}
    \vspace{-.5em}
\end{figure}
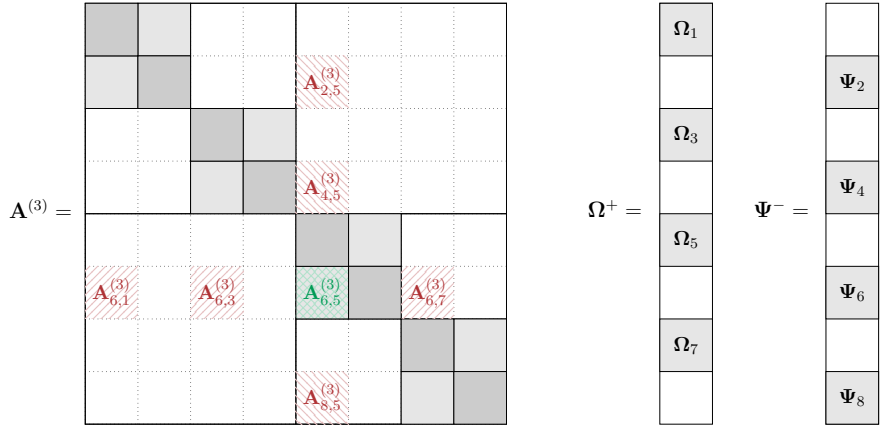

\clearpage

\section{Techniques}
\label{sec:techniques}

As discussed in \Cref{sec:error}, when the peeling algorithm is applied to a matrix $\vec A$ that is not exactly HODLR, errors at previous levels can pollute the matrix-vector products at a given level. 
The aim of this paper is to show that these errors can be controlled in such a way that the peeling algorithm can still solve \cref{prob:approx} to high accuracy.
To do this we must understand: (1) how matrix-vector query algorithms for low-rank approximation are impacted by noise in their matvecs and (2) how this noise can be controlled in the setting of peeling.\footnote{
In \cite{BoulleHalikiasTownsend:2023}, a similar approach is taken to analyze an infinite dimensional variant of the peeling algorithm for the task of approximating the Green's function of an elliptic partial differential (PDE) operator by a HODLR operator. 
However, \cite{BoulleHalikiasTownsend:2023} relies strongly on the fact that the off-diagonal blocks of the Green's function have rapid spectral decay and are thus low-rank, up to exponentially small error~\cite{BebendorfHackbusch:2003}. 
This allows the overall error to be controlled, even if the error can grow exponentially across levels.
In contrast, we make no assumptions about $\vec{A}$.
}
In the next two subsections we outline the high-level ideas we use to address each question.

\subsection{A perturbation bound for low-rank approximation}\label{sec:aperturbationBound}

To understand how matvec query errors impact low-rank approximation algorithms, including RSVD and Generalized Nystr\"om (see \cref{sec:LRA_algs}), we develop the following  perturbation bound, which we believe will be of general interest. 
The bound is proven in \cref{section:perturbation_rsvd}.

\newcommand{\RSVDperturb}{Let $\vec{B}\in\mathbb{R}^{m_1\times m_2}$, $\vec{\Omega}\in\mathbb{R}^{m_2\times s}$, $\vec{E}_1\in\mathbb{R}^{m_1\times s}$, and $\vec{E}_2\in\mathbb{R}^{s\times m_2}$, 
    and let $\vec{Q} = \operatorname{orth}(\vec{B}\vec{\Omega} + \vec{E}_1)$.
    Write $\vec B$ in its singular value decomposition as
    \begin{equation*}
        \vec{B} = \vec U \vec \Sigma \Vec{V^\T} =  \begin{bmatrix} \vec{U}_{\textup{top}} & \vec{U}_{\textup{bot}} \end{bmatrix} \begin{bmatrix} \vec{\Sigma}_{\textup{top}} & \\ & \vec{\Sigma}_{\textup{bot}} \end{bmatrix} \begin{bmatrix} \vec{V}_{\textup{top}}^\T \\\vec{V}_{\textup{bot}}^\T\end{bmatrix},
    \end{equation*}
    where $\vec U$ and $ \vec V$ are orthonormal and $\vec \Sigma$ is diagonal, and the ``top'' blocks represent the top $k$ columns; i.e., $\vec U_{\textup{top}} \in \R^{m_1 \times k}, \vec V_{\textup{top}} \in \R^{m_2 \times k},$ and $\vec\Sigma_{\textup{top}} \in \R^{k \times k} $.  
    Define also
    \begin{equation*}
        \vec{\Omega}_{\textup{top}} = \vec{V}_{\textup{top}}^\T \vec{\Omega}, \quad \vec{\Omega}_{\textup{bot}} = \vec{V}_{\textup{bot}}^\T \vec{\Omega}. 
    \end{equation*}    
    Then, assuming $\rank(\vec{\Omega}_{\textup{top}}) = k$,
    \begin{align*}
        \|\vec{B} - \vec{Q}\llbracket\vec{Q}^\T\vec{B} + \vec{E}_2 \rrbracket_k\|_{\F} 
        \leq \|\vec{E}_1 \vec{\Omega}_{\textup{top}}^{\dagger}\|_\F + 2 \|\vec{E}_2\|_\F + \big(\|\vec{\Sigma}_{\textup{bot}}\|_\F^2 + \|\vec{\Sigma}_{\textup{bot}} \vec{\Omega}_{\textup{bot}} \vec{\Omega}_{\textup{top}}^{\dagger}\|_{\F}^2\big)^{1/2}.
    \end{align*}}
\begin{theorem}
    \label{thm:RSVD_perturb}
    \RSVDperturb
\end{theorem}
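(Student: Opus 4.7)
The plan is to isolate the two noise sources and bound each separately. First, I would show that the effect of $\vec{E}_2$ can be absorbed into an additive $2\|\vec{E}_2\|_\F$ term, reducing the problem to bounding the noise-free-style approximation error $\|\vec{B} - \vec{Q}\llbracket\vec{Q}^\T\vec{B}\rrbracket_k\|_\F$. Then I would bound this error by a noisy analog of the classical Halko--Martinsson--Tropp (HMT) argument, in which $\vec{E}_1$ enters as a clean additive term filtered through $\vec{\Omega}_{\textup{top}}^\dagger$.

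\emph{Peeling off $\vec{E}_2$.} Since $\vec{Q}$ has orthonormal columns, a Pythagorean decomposition along $\operatorname{range}(\vec{Q})$ and its orthogonal complement gives
\begin{equation*}
\|\vec{B} - \vec{Q}\llbracket\vec{Q}^\T\vec{B} + \vec{E}_2\rrbracket_k\|_\F^2 = \|(\vec{I} - \vec{Q}\vec{Q}^\T)\vec{B}\|_\F^2 + \|\vec{Q}^\T\vec{B} - \llbracket\vec{Q}^\T\vec{B} + \vec{E}_2\rrbracket_k\|_\F^2.
\end{equation*}
The second summand can be bounded by $\|\vec{Q}^\T\vec{B} - \llbracket\vec{Q}^\T\vec{B}\rrbracket_k\|_\F + 2\|\vec{E}_2\|_\F$ using the triangle inequality together with Eckart--Young optimality (comparing $\llbracket\vec{Q}^\T\vec{B} + \vec{E}_2\rrbracket_k$ against the rank-$k$ matrix $\llbracket\vec{Q}^\T\vec{B}\rrbracket_k$). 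The elementary inequality $\sqrt{a^2 + (b+2c)^2} \le \sqrt{a^2+b^2} + 2c$ for $a,b,c\ge 0$, combined with a second application of Pythagoras to recognize $\sqrt{\|(\vec{I}-\vec{Q}\vec{Q}^\T)\vec{B}\|_\F^2 + \|\vec{Q}^\T\vec{B} - \llbracket\vec{Q}^\T\vec{B}\rrbracket_k\|_\F^2} = \|\vec{B} - \vec{Q}\llbracket\vec{Q}^\T\vec{B}\rrbracket_k\|_\F$, then collapses the whole expression to $\|\vec{B} - \vec{Q}\llbracket\vec{Q}^\T\vec{B}\rrbracket_k\|_\F + 2\|\vec{E}_2\|_\F$.

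\emph{HMT-style bound with $\vec{E}_1$.} The matrix $\vec{Q}\llbracket\vec{Q}^\T\vec{B}\rrbracket_k$ is the Frobenius-optimal rank-$\le k$ approximation of $\vec{B}$ with column space contained in $\operatorname{range}(\vec{Q}) = \operatorname{range}(\vec{B}\vec{\Omega} + \vec{E}_1)$, so it suffices to exhibit any rank-$\le k$ test matrix $\vec{M}$ in that column span and bound $\|\vec{B} - \vec{M}\|_\F$. I would choose $\vec{M} = (\vec{B}\vec{\Omega} + \vec{E}_1)\vec{\Omega}_{\textup{top}}^\dagger \vec{V}_{\textup{top}}^\T$, which has rank at most $k$ thanks to $\vec{V}_{\textup{top}}^\T \in \R^{k\times m_2}$ and lies in $\operatorname{range}(\vec{B}\vec{\Omega} + \vec{E}_1)$ by construction. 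Using the block split of $\vec{V}^\T\vec{\Omega}$ into $\vec{\Omega}_{\textup{top}}$ and $\vec{\Omega}_{\textup{bot}}$ together with the full-row-rank assumption $\vec{\Omega}_{\textup{top}}\vec{\Omega}_{\textup{top}}^\dagger = \vec{I}_k$, a direct calculation yields
\begin{equation*}
\vec{B} - \vec{M} = \vec{U}_{\textup{bot}}\vec{\Sigma}_{\textup{bot}}\vec{V}_{\textup{bot}}^\T - \vec{U}_{\textup{bot}}\vec{\Sigma}_{\textup{bot}}\vec{\Omega}_{\textup{bot}}\vec{\Omega}_{\textup{top}}^\dagger\vec{V}_{\textup{top}}^\T - \vec{E}_1\vec{\Omega}_{\textup{top}}^\dagger\vec{V}_{\textup{top}}^\T.
\end{equation*}
The first two summands have row spaces inside the orthogonal subspaces $\operatorname{range}(\vec{V}_{\textup{bot}})$ and $\operatorname{range}(\vec{V}_{\textup{top}})$, respectively, so they are Frobenius-orthogonal and combine to exactly $(\|\vec{\Sigma}_{\textup{bot}}\|_\F^2 + \|\vec{\Sigma}_{\textup{bot}}\vec{\Omega}_{\textup{bot}}\vec{\Omega}_{\textup{top}}^\dagger\|_\F^2)^{1/2}$. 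The third summand has norm $\|\vec{E}_1\vec{\Omega}_{\textup{top}}^\dagger\|_\F$ (absorbing $\vec{V}_{\textup{top}}^\T$ using orthonormality of its columns), and is added in via the triangle inequality. Concatenating with the previous reduction gives the claimed bound.

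The main subtlety I anticipate is arranging for both noise sources to enter \emph{additively}, rather than multiplicatively with the HMT term. For $\vec{E}_2$, this is precisely why the Pythagorean collapse paired with $\sqrt{a^2 + (b+2c)^2} \le \sqrt{a^2+b^2} + 2c$ is needed, since a naive triangle inequality would introduce an extra $\|\vec{\Sigma}_{\textup{bot}}\|_\F$ outside the square root. For $\vec{E}_1$, the key design decision is the trailing factor $\vec{V}_{\textup{top}}^\T$ in $\vec{M}$: it is what forces the two dominant error components into orthogonal row subspaces and thereby reproduces exactly the classical HMT geometry, leaving the noise to appear only as $\vec{E}_1$ filtered through the same pseudo-inverse $\vec{\Omega}_{\textup{top}}^\dagger$ that governs the clean case.
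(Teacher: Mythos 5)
Your proof is correct and follows the same two-stage structure as the paper's: first peel off $\vec{E}_2$ at a cost of $2\|\vec{E}_2\|_\F$ to reduce to the noise-free-style error $\|\vec{B} - \vec{Q}\llbracket\vec{Q}^\T\vec{B}\rrbracket_k\|_\F$, then bound that by comparing against the rank-$\le k$ test matrix $(\vec{B}\vec{\Omega}+\vec{E}_1)\vec{\Omega}_{\textup{top}}^\dagger\vec{V}_{\textup{top}}^\T$ lying in $\operatorname{range}(\vec{Q})$. The differences are cosmetic: for the first stage you use a Pythagorean decomposition plus the elementary inequality $\sqrt{a^2+(b+2c)^2}\le\sqrt{a^2+b^2}+2c$ where the paper instead introduces the auxiliary matrix $\vec{C}=\vec{B}+\vec{Q}\vec{E}_2$ and applies a chain of triangle inequalities; for the second stage you compute $\vec{B}-\vec{M}$ in closed form where the paper passes through the orthogonal projector onto $\operatorname{range}(\vec{M})$ and the split $\vec{B}=\vec{B}\vec{\Omega}(\vec{\Omega}_{\textup{top}}^\dagger\vec{V}_{\textup{top}}^\T)+\vec{B}(\vec{I}-\vec{\Omega}\vec{\Omega}_{\textup{top}}^\dagger\vec{V}_{\textup{top}}^\T)$; both packagings yield exactly the same final bound.
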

\Cref{thm:RSVD_perturb} is most naturally a perturbation bound for the RSVD method, which, as discussed in \cref{sec:LRA_algs}, first computes an orthonormal basis $\vec Q$ for $\vec B \vec \Omega$ where $\vec \Omega$ is a random sketching matrix, and then outputs $\vec{Q}\llbracket\vec{Q}^\T\vec{B} \rrbracket_k$ -- the best rank-$k$ approximation to $\vec B$ lying in the span of $\vec Q$. 
The error term $\vec E_1$ captures error that occurs during the initial sketching step, i.e., due to noise in computing the matrix-vector products $\vec B \vec \Omega$. 
Similarly, $\vec E_2$ captures noise in the second projection step. 
\Cref{thm:RSVD_perturb} can be used to recover the standard bounds for  RSVD implemented with exact matrix-vector products (i.e., where $\vec{E}_1$ and $\vec{E}_2$ are zero). 
In particular, in this case the RSVD Frobenius error is bounded by: 
\[
    \|\vec{B} - \vec{Q}\llbracket\vec{Q}^\T\vec{B} \rrbracket_k\|_{\F} \le \big(\|\vec{\Sigma}_{\textup{bot}}\|_\F^2 + \|\vec{\Sigma}_{\textup{bot}} \vec{\Omega}_{\textup{bot}} \vec{\Omega}_{\textup{top}}^{\dagger}\|_{\F}^2\big)^{1/2} = \big(\|\vec B - \llbracket \vec B \rrbracket_k\|_\F^2 + \|\vec{\Sigma}_{\textup{bot}} \vec{\Omega}_{\textup{bot}} \vec{\Omega}_{\textup{top}}^{\dagger}\|_{\F}^2\big)^{1/2}.
\]
When the sketching matrix $\vec \Omega$ is Gaussian, we can observe that $\vec{\Omega}_{\textup{bot}}$ and $\vec{\Omega}_{\textup{top}}$ are independent Gaussian matrices since $\vec V_{\textup{top}}$ and $\vec V_{\textup{bot}}$ are orthogonal to each other.
This allows us to apply a standard bound (see \cref{thm:gaussian_expectations}) to argue that when $s = O(k/\beta)$, $\EE[\|\vec{\Sigma}_{\textup{bot}} \vec{\Omega}_{\textup{bot}} \vec{\Omega}_{\textup{top}}^{\dagger}\|_{\F}^2] \le \beta \|\vec{\Sigma}_{\textup{bot}}\|_\F^2 = \beta \|\vec B - \llbracket \vec B \rrbracket_k\|_\F^2$, which ultimately gives that $\EE[\|\vec{B} - \vec{Q}\llbracket\vec{Q}^\T\vec{B} \rrbracket_k\|_{\F}] \le (1+\beta) \|\vec B - \llbracket \vec B \rrbracket_k\|_\F.$

\Cref{thm:RSVD_perturb} can also be used as a perturbation bound for the Generalized Nystr\"om Method. 
In this case, $\vec E_2$ is used to account for the error due to using an approximate projection onto $\vec{Q}$ via a sketched regression problem, as well as the errors in the matvecs with $\vec{A}^\T$ used to set up the regression problem.
We discuss this further in \cref{sec:perturb_GNYS}. As with RSVD, \cref{thm:RSVD_perturb} matches the current best known bounds for Generalized Nystr\"om implemented with exact matvecs.

\subsection{Low-rank approximation in the peeling algorithm}\label{sec:overviewPerturb}

We now consider how \Cref{thm:RSVD_perturb} can be applied to control the propagation of error in the peeling algorithm implemented with different base algorithms for low-rank approximation of the off-diagonal blocks..

\smallskip

\noindent\textbf{Randomized SVD.} 
First consider implementing peeling with RSVD as the base low-rank approximation algorithm. In this case, $\vec E_1$ and $\vec E_2$ will be nonzero since we cannot compute exact matrix-vector products with an off-diagonal low-rank block $\vec B$ due to noise from previous levels (i.e., we cannot exactly compute $\vec B \vec \Omega$ and $\vec Q^\T \vec B$). Fortunately, the error  $\vec E_1$ that arises in peeling has a particular structure that can be used to bound the  
$\|\vec{E}_1 \vec{\Omega}_{\textup{top}}^{\dagger}\|_\F$ term in \cref{thm:RSVD_perturb}. In particular, recalling \cref{fig:peeling_err}, we will have 
\begin{equation}\label{eqn:error_form}
    \vec E_1 = \sum_{j} \vec{M}_j \vec{\Omega}_j 
    = 
    \vec{M}\vec{\widetilde \Omega}
    ,\qquad
    \vec{M}=
    \begin{bmatrix}
        \vec{M}_1 & 
        \vec{M}_2 & 
        \cdots 
    \end{bmatrix}
    ,\qquad 
    \vec{\widetilde \Omega} = 
    \begin{bmatrix}
        \vec{\Omega}_1 \\
        \vec{\Omega}_2 \\
        \vdots
    \end{bmatrix},
\end{equation}
where the $\vec{M}_j$ are fixed matrices that depend on the recovery error at the previous levels of peeling, and the $\vec \Omega_j$ are independent Gaussian matrices.\footnote{In \cref{fig:peeling_err}, the $\vec M_j \vec \Omega_j$ terms when $\vec B = \vec{A}^{(3)}_{6,5}$ are $\vec{A}^{(3)}_{6,1}\vec{\Omega}_1$, $\vec{A}^{(3)}_{6,3}\vec{\Omega}_3$, and $\vec{A}^{(3)}_{6,7}\vec{\Omega}_7$.} 
$\vec{\widetilde \Omega}$ is itself a Gaussian matrix that is independent from $\vec{\Omega}_{\textup{top}}$, allowing us to bound the $\EE[\|\vec{E}_1 \vec{\Omega}_{\textup{top}}^{\dagger}\|_\F^2]$ term in \Cref{thm:RSVD_perturb} by $\EE[\|\vec{E}_1 \vec{\Omega}_{\textup{top}}^{\dagger}\|_\F^2] = \EE[\|\vec{M}  \vec{\widetilde \Omega} \vec{\Omega}_{\textup{top}}^{\dagger}\|_\F^2]] \le \beta \|\vec M \|_\F^2$ when $s = O(k/\beta)$, again using \cref{thm:gaussian_expectations}. I.e., with a large enough sketch size, we can drive down the error term $\vec E_1$ to be arbitrarily small, and ensure that it does not accumulate too much across the levels of peeling.

Unfortunately, we cannot handle the $\|\vec E_2\|_\F$ term in \cref{thm:RSVD_perturb} in the same way. We have $\vec E_2 = \sum_{j} \vec M_j \vec Q_j$ for orthogonal matrices $\vec Q_j$ computed for each off-diagonal block at the current level. The $\vec Q_j$ and $\vec M_j$ matrices may be arbitrarily correlated, and so it is not clear that $\|\vec E_2\|_\F$ can be made small compared to the noise from the previous levels, $\|\vec M\|_\F$. In fact, as discussed in \cref{sec:hard_instance}, theoretical and empirical evidence suggests the existence of hard instances where standard peeling with RSVD can suffer exponential error blow-up across levels. 

\smallskip

\noindent\textbf{Generalized Nystr\"om.}
\label{sec:perturb_GNYS}
 Our first key observation is that when peeling is implemented with the Generalized Nystr\"om method as the base low-rank approximation algorithm, rather than RSVD, the $\| \vec E_2 \|_\F$ error term in \cref{thm:RSVD_perturb} can in fact be bounded. For peeling implemented with Generalized Nystr\"om, $\vec E_2$ encapsulates two sources of error: the first is due to the fact that Generalized Nystr\"om does not exactly return $\vec{Q}\llbracket\vec{Q}^\T\vec{B} \rrbracket_k$, but rather computes $\vec X = \argmin_{\vec{Z}}\|\vec{\Psi}^\T\vec{B} - \vec{\Psi}^\T\vec{Q}\vec{Z}\|_\F$ for a random sketching matrix $\vec \Psi$ and returns $\vec{Q}\llbracket\vec{X} \rrbracket_k$. We can think of $\vec X$ as an approximation to $\vec Q^\T \vec B = \argmin_{\vec Z} \|\vec B - \vec Q \vec Z\|_\F$. It is well known that this source of error can be controlled by setting the size of the sketch $\vec \Psi$ large enough -- this leads to the standard analysis of Generalized Nystr\"om from the literature \cite{TroppYurtzeverUdellCevher:2016}.

The second source of error in $\vec E_2$ is due to noise in computing the matrix-vector products $\vec \Psi^\T \vec B$. Fortunately, analogously to how we bounded $\vec E_1$ for RSVD in \cref{sec:aperturbationBound}, we can leverage the fact that in peeling, this noise has the structure of equation \eqref{eqn:error_form} -- it is the product of a fixed matrix $\vec M$ (the recovery error at previous levels of peeling) and a random Gaussian matrix $\vec{\widetilde \Psi}$ that is independent of $\vec \Psi$. Using this structure, we are able to bound its impact on the final approximation quality. In particular, we use \cref{thm:RSVD_perturb} to give the following bound for the Generalized Nystr\"om Method with Gaussian errors as in \cref{eqn:error_form}.
This bound is proven in \cref{section:perturbation_rsvd}.

\newcommand{\GNperturbexp}{Let $\vec{B}\in\mathbb{R}^{m_1\times m_2}$, $\vec{M}\in\mathbb{R}^{m_1\times p}$, and $\vec{N}\in\mathbb{R}^{q\times m_2}$ be fixed matrices, and let $\vec{\Omega}\sim\Gaussian(m_2,\sright)$, $\widetilde{\vec{\Omega}}\sim\Gaussian(p,\sright)$, $\vec{\Psi}\sim\Gaussian(m_1,\sleft)$, and $\widetilde{\vec{\Psi}}\sim\Gaussian(q,\sleft)$ be independent.
Define 
\[
\vec{Q} := \operatorname{orth}(\vec{B}\vec{\Omega} + \vec{M}\widetilde{\vec{\Omega}})
,\qquad
\vec{X} := ( \vec{\Psi}^\T\vec{Q})^\dagger ( \vec{\Psi}^\T \vec{B} + \widetilde{\vec{\Psi}}^\T \vec{N} ).
\]
Then, provided $\sright > 2k+1$ and $\sleft > 2\sright+1$
\begin{align*}
    \EE\Bigl[\|\vec{B} - \vec{Q}\llbracket \vec{X} \rrbracket_k\|_{\F}^2 \Big]
    \leq E_1 + E_2 + 2\sqrt{E_1 E_2},
\end{align*}
where
\begin{align*}
    E_1 &:= \left(1+\frac{k}{\sright-k-1} \right) \| \vec{B} - \llbracket \vec{B} \rrbracket_k \|_\F^2
    \\
    E_2 &:= \frac{18k}{\sright-k-1} \|\vec{M}\|_\F^2 + \frac{8\sright}{\sleft-\sright-1}\|\vec{N}\|_\F^2 + \frac{32\sright}{\sleft-\sright-1} \| \vec{B} - \llbracket \vec{B} \rrbracket_k \|_\F^2.
\end{align*}}
\begin{theorem}\label{thm:GN_perturb_exp}
\GNperturbexp
\end{theorem}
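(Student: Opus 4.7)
Our plan is to match the definitions of $\vec{Q}$ and $\vec{X}$ to the hypotheses of \cref{thm:RSVD_perturb}, apply it to get a pointwise bound, split the bound into a ``clean'' part (what would remain if $\vec{M}=\vec{N}=\vec{0}$) and a ``noise'' part (the contributions of $\vec{M}$ and $\vec{N}$), take squared expectations, and use Cauchy--Schwarz on the cross term to produce the claimed $(\sqrt{E_1}+\sqrt{E_2})^2$ shape. Each expected square is then bounded via standard Gaussian moment identities.

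Setting $\vec{E}_1 := \vec{M}\widetilde{\vec{\Omega}}$ matches $\vec{Q} = \operatorname{orth}(\vec{B}\vec{\Omega}+\vec{E}_1)$ to the form required by \cref{thm:RSVD_perturb}. For the right sketch, substitute the orthogonal decomposition $\vec{\Psi}^\T\vec{B} = \vec{\Psi}^\T\vec{Q}\vec{Q}^\T\vec{B} + \vec{\Psi}^\T(\vec{I}-\vec{Q}\vec{Q}^\T)\vec{B}$ into the definition of $\vec{X}$ to obtain $\vec{X} = \vec{Q}^\T\vec{B}+\vec{E}_2$ with
\begin{equation*}
\vec{E}_2 = (\vec{\Psi}^\T\vec{Q})^\dagger\bigl[\vec{\Psi}^\T(\vec{I}-\vec{Q}\vec{Q}^\T)\vec{B} + \widetilde{\vec{\Psi}}^\T\vec{N}\bigr].
\end{equation*}
The hypothesis $\sright>2k+1$ ensures the $k\times\sright$ standard Gaussian $\vec{\Omega}_{\textup{top}}=\vec{V}_{\textup{top}}^\T\vec{\Omega}$ has full row rank almost surely, so \cref{thm:RSVD_perturb} applies and gives a pointwise bound $\|\vec{B}-\vec{Q}\llbracket\vec{X}\rrbracket_k\|_F \leq T_{\textup{clean}}+T_{\textup{noise}}$ after grouping the $\vec{M}$- and $\vec{N}$-free part as
\begin{align*}
T_{\textup{clean}} &:= \bigl(\|\vec{\Sigma}_{\textup{bot}}\|_F^2 + \|\vec{\Sigma}_{\textup{bot}}\vec{\Omega}_{\textup{bot}}\vec{\Omega}_{\textup{top}}^\dagger\|_F^2\bigr)^{1/2}, \\
T_{\textup{noise}} &:= \|\vec{E}_1\vec{\Omega}_{\textup{top}}^\dagger\|_F + 2\|\vec{E}_2\|_F.
\end{align*}

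Squaring and taking expectations yields $\EE[\|\vec{B}-\vec{Q}\llbracket\vec{X}\rrbracket_k\|_F^2] \leq \EE[T_{\textup{clean}}^2] + \EE[T_{\textup{noise}}^2] + 2\sqrt{\EE[T_{\textup{clean}}^2]\EE[T_{\textup{noise}}^2]}$ (the last step is Cauchy--Schwarz on the cross term), so it suffices to show $\EE[T_{\textup{clean}}^2]\leq E_1$ and $\EE[T_{\textup{noise}}^2]\leq E_2$. The bound $\EE[T_{\textup{clean}}^2]\leq E_1$ follows from the independence of $\vec{\Omega}_{\textup{top}}$ and $\vec{\Omega}_{\textup{bot}}$ (because $\vec{V}_{\textup{top}}$ and $\vec{V}_{\textup{bot}}$ have orthogonal column spans) together with the Gaussian moment identity $\EE[\|\vec{\Sigma}_{\textup{bot}}\vec{\Omega}_{\textup{bot}}\vec{\Omega}_{\textup{top}}^\dagger\|_F^2] = \tfrac{k}{\sright-k-1}\|\vec{\Sigma}_{\textup{bot}}\|_F^2$. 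For $\EE[T_{\textup{noise}}^2]\leq E_2$, the crucial structural observations are that $\widetilde{\vec{\Omega}}$ is independent of $(\vec{\Omega},\vec{M},\vec{B})$ and $\widetilde{\vec{\Psi}}$ is independent of $(\vec{\Psi},\vec{Q},\vec{N})$. Conditioning on $\vec{\Omega}_{\textup{top}}$ and on $(\vec{\Psi},\vec{Q})$ respectively and applying the same identity yields $\EE[\|\vec{M}\widetilde{\vec{\Omega}}\vec{\Omega}_{\textup{top}}^\dagger\|_F^2] = \tfrac{k}{\sright-k-1}\|\vec{M}\|_F^2$ and $\EE[\|(\vec{\Psi}^\T\vec{Q})^\dagger\widetilde{\vec{\Psi}}^\T\vec{N}\|_F^2] = \tfrac{\sright}{\sleft-\sright-1}\|\vec{N}\|_F^2$, while the regression residual $\|(\vec{\Psi}^\T\vec{Q})^\dagger\vec{\Psi}^\T(\vec{I}-\vec{Q}\vec{Q}^\T)\vec{B}\|_F^2$ contributes the $\|\vec{B}-\llbracket\vec{B}\rrbracket_k\|_F^2$ term together with an extra $\|\vec{M}\|_F^2$ piece from the bound $\|(\vec{I}-\vec{Q}\vec{Q}^\T)\vec{B}\|_F^2 \leq 2\|\vec{M}\widetilde{\vec{\Omega}}\vec{\Omega}_{\textup{top}}^\dagger\|_F^2 + 2\|\vec{\Sigma}_{\textup{bot}}\|_F^2 + 2\|\vec{\Sigma}_{\textup{bot}}\vec{\Omega}_{\textup{bot}}\vec{\Omega}_{\textup{top}}^\dagger\|_F^2$ obtained by applying \cref{thm:RSVD_perturb} to the approximation $\vec{Q}\vec{Q}^\T\vec{B}$. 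The explicit constants $18,8,32$ in $E_2$ emerge from this bookkeeping together with $(a+2b)^2\leq 2a^2+8b^2$ applied to $T_{\textup{noise}}$.

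The main obstacle will be controlling the regression residual $(\vec{\Psi}^\T\vec{Q})^\dagger\vec{\Psi}^\T(\vec{I}-\vec{Q}\vec{Q}^\T)\vec{B}$, since $\vec{Q}$ depends on all of $\vec{\Omega},\widetilde{\vec{\Omega}},\vec{M},\vec{B}$, coupling the pseudoinverse to the residual in intricate ways. The key trick is to condition on $\vec{Q}$ and exploit rotational invariance: $\vec{\Psi}$ is Gaussian and independent of $\vec{Q}$, so conditional on $\vec{Q}$ the matrices $\vec{\Psi}^\T\vec{Q}$ and $\vec{\Psi}^\T(\vec{I}-\vec{Q}\vec{Q}^\T)$ are independent Gaussians because $\vec{Q}$ and $\vec{I}-\vec{Q}\vec{Q}^\T$ project onto orthogonal subspaces. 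This decouples $\|(\vec{\Psi}^\T\vec{Q})^\dagger\|_F$ from $\|\vec{\Psi}^\T(\vec{I}-\vec{Q}\vec{Q}^\T)\vec{B}\|_F$, and the oversampling hypotheses $\sright>2k+1$ and $\sleft>2\sright+1$ then guarantee finite second moments for $\|\vec{\Omega}_{\textup{top}}^\dagger\|_F^2$ and $\|(\vec{\Psi}^\T\vec{Q})^\dagger\|_F^2$, producing the explicit constants appearing in $E_2$.
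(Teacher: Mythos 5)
Your proof is correct and takes essentially the same route as the paper: it applies the deterministic perturbation bound of \cref{thm:RSVD_perturb}, groups the resulting terms into a clean part and a noise part, uses Cauchy--Schwarz on the cross term, and controls the noise via the Gaussian moment identity (\cref{thm:gaussian_expectations}) together with the conditioning and rotational-invariance argument that decouples $\vec{\Psi}^\T\vec{Q}$ from $\vec{\Psi}^\T(\vec{I}-\vec{Q}\vec{Q}^\T)$. The only cosmetic difference is that the paper packages the regression-residual computation as a standalone lemma (\cref{thm:right_factor}) before plugging it in, whereas you carry out that computation inline.
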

We can see from \cref{thm:GN_perturb_exp} that, for fixed noise matrices $\vec M$ and $\vec N$, if we set $\sright$ large enough compared to $k$ and $\sleft$ large enough compared to $\sright$, we can drive the expected error of the rank-$k$ approximation $\vec{Q}\llbracket \vec{X} \rrbracket_k$ arbitrarily close to the error of the best possible rank-$k$ approximation $\llbracket \vec{B} \rrbracket_k$. % by suitably choosing the sketching dimensions $\sright$ and $\sleft$. 
Formally, in \cref{sec:analysis} (\cref{thm:main_full}) we use \cref{thm:GN_perturb_exp} to show that for any $\beta \in (0,1)$, if we set $\sright = O(k/\beta^2)$ and $\sleft = O(\sright/\beta^2) = O(k/\beta^4)$, then at each level of peeling, our approximation error blows up by at most a $(1+\beta)$ factor. Over $L+1$ levels, we obtain final accuracy $\Gamma = (1+\beta)^{L+1}$. This matches our main result, \cref{thm:main}, but with a slightly worse dependence on $\beta$: we require $O(k/\beta^4 \cdot L)$ rather than $O(k/\beta^3 \cdot L)$ total matvecs. In \cref{sec:sketching_dist} we show how to give the improved $\beta$ dependence using a novel sketching approach to further control the noise terms $\vec M$ and $\vec N$.

We note that the algorithm described above, which simply implements standard peeling with Generalized Nystr\"om is particularly simple and performs well experimentally -- see \cref{sec:examples}. It would be interesting to understand if our bounds on $\sright$ and $\sleft$ are tight -- we suspect that they are not. In particular, even in the case of exact matrix-vector products, we suspect that $\sleft = O(k/\beta^2)$ suffices to achieve a $(1+\beta)$ approximation, even though current best known bound is $O(k/\beta^3)$. Giving an improved bound in this setting would very likely yield an improved bound in our setting.

\subsection{Randomly perforated Gaussian sketching}
\label{sec:sketching_dist}

We next discuss how we can further improve the query complexity of peeling with Generalized Nystr\"om-based low-rank approximation by altering the sketching distribution to explicitly reduce matvec query errors that arise due to inexact recovery at previous levels. 

Referring back to \cref{fig:peeling,fig:peeling_err}, we observe that at each level the peeling algorithm employs two random sketching matrices on the right: $\vec \Omega^+$ and $\vec \Omega^-$ and two on the left: $\vec \Psi^+$ and $\vec \Psi^-$. 
For sake of exposition we focus here just on the right sketches, $\vec \Omega^+$ and $\vec \Omega^-$, which are both  block matrices with $2^{\ell}$ blocks, alternating between random Gaussian blocks (i.e., $\vec{\Omega}_1$, $\vec{\Omega}_2$, $\vec{\Omega}_3$, ...) and zero blocks. 
In particular, $\vec \Omega^+$ has even blocks set to zero, while $\vec \Omega^-$ has odd blocks set to zero. 

This ``perforated'' structure arises naturally from \hyperlink{obs:first}{Observation 1} and \hyperlink{obs:recurse}{Observation 2}. 
Intuitively, since $\vec \Omega^+$ has even blocks set to zero, it has no interaction with the diagonal blocks at level $\ell$ with even indices. 
Thus, it can be used to simultaneously produce right sketches of every bottom-left off-diagonal block at level $\ell$, without incurring any error due to the unrecovered on-diagonal blocks. 

Of course, as discussed, when the peeling algorithm is applied to a non-HODLR matrix, the sketch does incur error due to inexact recovery at the previous levels. 
In particular, referring to \cref{fig:peeling_err}, when approximating a given off-diagonal block at level $\ell$ (e.g., $\vec A_{6,5}^{(3)}$ in the figure), the sketch incurs error from $2^{\ell-1}-1$ blocks that were only approximately recovered in previous levels of peeling (i.e., $\vec A_{6,1}^{(3)}, \vec A_{6,3}^{(3)},$ and $\vec A_{6,7}^{(3)}$ in the figure).

Our key idea to reduce this error is simple: we will increase the sparsity of $\vec \Omega^+$ and $\vec \Omega^-$, so that a higher fraction of blocks are set to zero.  
Thus, when recovering each block at level $\ell$, we will incur error due to a smaller number of inexactly recovered blocks from the previous levels. We still need non-zero blocks for each of the $2^\ell$ off-diagonal blocks that are recovered at level $\ell$, so our sparser matrices will have a large number of block columns than before.  
See \cref{fig:peeling_err_perf} for an illustration.

\begin{figure}[ht]
    \centering
    \scalebox{.7}{\begin{tikzpicture}
    \node[] at (-.85,4.1) {\large$\vec{A}^{(3)} = $};
    
    \draw[] (0,0) rectangle (8,8);
    
    \draw[] (0,0) rectangle (4,4);
    \draw[] (4,4) rectangle (8,8);

    \draw[] (0,4) rectangle (2,6);
    \draw[] (2,6) rectangle (4,8);
    \draw[] (4,0) rectangle (6,2);
    \draw[] (6,2) rectangle (8,4);

    \foreach \x in {0,...,7}{
    \draw[dotted,line width=.25pt] (\x,0) -- (\x,8);
    \draw[dotted,line width=.25pt] (0,\x) -- (8,\x);
    }
    
    \draw[fill=black!10] (0,6) rectangle (1,7) node[pos=.5] {};
    \draw[fill=black!10] (1,7) rectangle (2,8) node[pos=.5] {};
    \draw[fill=black!10] (2,4) rectangle (3,5) node[pos=.5] {};
    \draw[fill=black!10] (3,5) rectangle (4,6) node[pos=.5] {};
    \draw[preaction={fill, black!10},pattern=crosshatch, pattern color = Green!30] (4,2) rectangle (5,3) node[pos=.5,text=Green] {$\vec{A}^{(3)}_{6,5}$};
    \draw[fill=black!10] (5,3) rectangle (6,4) node[pos=.5] {};
    \draw[fill=black!10] (6,0) rectangle (7,1) node[pos=.5] {};
    \draw[fill=black!10] (7,1) rectangle (8,2) node[pos=.5] {};

    \draw[fill=black!20] (0,7) rectangle (1,8);
    \draw[fill=black!20] (1,6) rectangle (2,7);
    \draw[fill=black!20] (2,5) rectangle (3,6);
    \draw[fill=black!20] (3,4) rectangle (4,5);
    \draw[fill=black!20] (4,3) rectangle (5,4);
    \draw[fill=black!20] (5,2) rectangle (6,3);
    \draw[fill=black!20] (6,1) rectangle (7,2);
    \draw[fill=black!20] (7,0) rectangle (8,1);

    \fill[pattern=north east lines, pattern color = Red!30] (0,2) rectangle (1,3) node[pos=.5,text=Red] {$\vec{A}^{(3)}_{6,1}$};

\end{tikzpicture}
\hspace{1em}
\begin{tikzpicture}
    \node[] at (-.85,4.1) {\large$\vec{\Omega}^{+} = $};

    \draw[] (0,0) rectangle (3,8);
    \draw[dashed] (0,0) rectangle (1,8);

    \foreach \x in {1,2}{
        \draw[dotted,line width=.25pt] (\x,0) -- (\x,8);
    }
    \foreach \y in {1,2,...,7}{
        \draw[dotted,line width=.25pt] (0,\y) -- (3,\y);
    }
    
    \foreach \x in {1,5}{
    \draw[fill=black!10] (0,{8-\x}) rectangle (1,{8-\x+1}) node[pos=.5] {$\vec{\Omega}_{\x}$};
    }
    \foreach \x in {3}{
    \draw[fill=black!10] (1,{8-\x}) rectangle (2,{8-\x+1}) node[pos=.5] {$\vec{\Omega}_{\x}$};
    }
    \foreach \x in {7}{
    \draw[fill=black!10] (2,{8-\x}) rectangle (3,{8-\x+1}) node[pos=.5] {$\vec{\Omega}_{\x}$};
    }
\end{tikzpicture}
\hspace{1em}
\begin{tikzpicture}
    \node[] at (-.85,4.1) {\large$\vec{\Psi}^{-} = $};

    \draw[] (0,0) rectangle (3,8);
    \draw[dashed] (1,0) rectangle (2,8);

    \foreach \x in {1,2}{
        \draw[dotted,line width=.25pt] (\x,0) -- (\x,8);
    }
    \foreach \y in {1,2,...,7}{
        \draw[dotted,line width=.25pt] (0,\y) -- (3,\y);
    }

    \foreach \x in {4}{
    \draw[fill=black!10] (0,{8-\x}) rectangle (1,{8-\x+1}) node[pos=.5] {$\vec{\Psi}_{\x}$};
    }    
    \foreach \x in {6}{
    \draw[fill=black!10] (1,{8-\x}) rectangle (2,{8-\x+1}) node[pos=.5] {$\vec{\Psi}_{\x}$};
    }
    \foreach \x in {2,8}{
    \draw[fill=black!10] (2,{8-\x}) rectangle (3,{8-\x+1}) node[pos=.5] {$\vec{\Psi}_{\x}$};
    }

\end{tikzpicture}}
    \caption{
    As in \cref{fig:peeling,fig:peeling_err}, we aim to obtain the sketches $\vec{A}^{(3)}_{j\pm 1,j}\vec{\Omega}^+_j$ and $(\vec{A}^{(3)}_{j\pm 1,j})^\T\vec{\Psi}^-_{j\pm1}$ from the sketches $\vec{A}^{(3)}\vec{\Omega}^+$ and $(\vec{A}^{(3)})^\T\vec{\Psi}^-$.
    The key observation is that by using \emph{randomly perforated Gaussian sketches}, which decrease the number of nonzero blocks per block-column (while increasing the number of column.
    This results in less error.
    We illustrate the error for block $\vec A_{6,5}^{(3)}$. 
    Note that we obtain sketches ${\color{Green}\vec{A}^{(3)}_{6,5}\vec{\Omega}_5} + {\color{Red}\vec{A}^{(3)}_{6,1}\vec{\Omega}_1}$ and ${\color{Green}(\vec{A}^{(3)}_{6,5})^\T\vec{\Psi}_6}$.
    Thus, the error is decreased compared to \cref{fig:peeling_err}.
    }
    \label{fig:peeling_err_perf}
\end{figure}
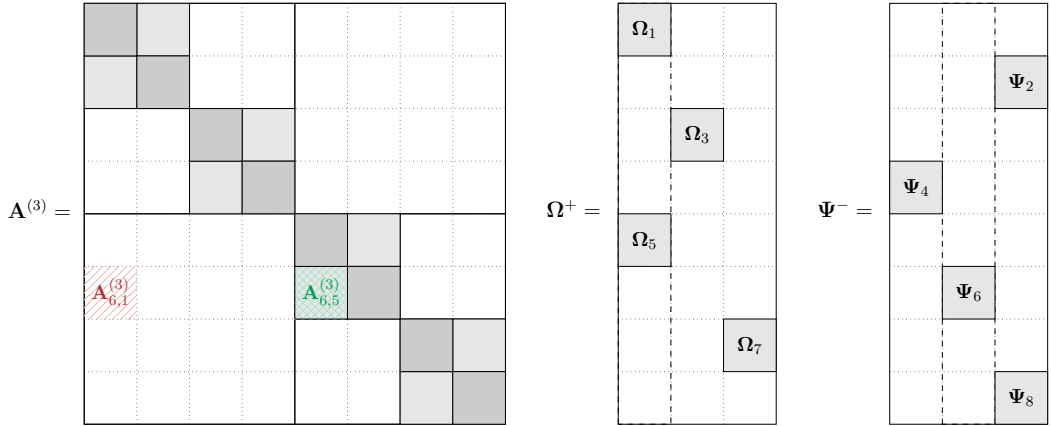

Of course, just decreasing the number of inexactly recovered blocks that introduce error into each of our sketches may not decrease the magnitude of this error if the  error is all concentrated on a few blocks. 
Thus, we will chose the nonzero blocks of our sketches randomly, ensuring that the expected error when recovering each block at level $\ell$ is small.
Formally, our modified sketches $\vec \Omega^+, \vec \Omega^-, \vec \Psi^+,$ and $\vec \Psi^-$ will be $2^\ell \times t$ block matrices, with a single Gaussian block placed randomly in each odd (resp. even) block row. 
They can be thought of as block Kronecker products of Gaussian matrices with \emph{Count-sketch-like} sparse sketching matrices -- see \cref{sec:sketching_dists} for a more complete description. 
We call our sketches \emph{randomly perforated Gaussian sketches}.

By increasing the number of block columns $t$ in our randomly perforated sketches, we decrease the expected matvec error introduced by inexactly recovered blocks at each level.
In particular, we show that the expected noise added to each matvec query is decreased by a factor of $1/t$ in the squared Frobenius norm. 
We describe our variant of Generalized Nystr\"om-based peeling with perforated sketching in \cref{alg:main} and analyze it in \cref{sec:analysis} (\cref{thm:main_full}), using the perturbation bound for Generalized Nystr\"om described in \cref{sec:overviewPerturb}. Ultimately, this approach gives  our main result \cref{thm:main}. 

We note that perforated sketching can also be combined with vanilla RSVD-based peeling, giving a similar bound to \cref{thm:main}. 
We describe the resulting algorithm in \cref{sec:RSVD_peel}.
We numerically compare our main Generalized Nystr\"om Method-based algorithm with the RSVD-based algorithm in \cref{sec:examples}.

\section{The Generalized Nystr\"om Method Peeling Algorithm}
\label{sec:alg}

We now describe our variant of the peeling algorithm, which is summarized as \cref{alg:main}.
In \Cref{sec:alg_notation,sec:sketching_dists} we describe preliminary notation and give pseudocode for the algorithm.
Then, in \Cref{sec:left_subspace,sec:lra_from_subspace}  we describe how the two key stages of the Generalized Nystr\"om Method are approximately implemented in the peeling algorithm and introduce notation we will use in the analysis. 
Finally, in \cref{sec:diag_recover} we discuss how to recover the diagonal blocks at the final level. Analysis of the method appears in \Cref{sec:analysis}.

\subsection{Notation for sketching distributions}
\label{sec:sketching_dists}

We begin by defining notation needed to describe the ``randomly perforated Gaussian'' sketches used in our variant of the peeling algorithm.

\begin{definition}\label{def:product}
Let $\vec{X}\in \mathbb{R}^{p\times v}$ and $\vec{Y}\in\mathbb{R}^{pu\times t}$. We define the product $\vec{X}\bullet\vec{Y} \in \mathbb{R}^{pu\times vt}$ by
\[
\underbrace{
\begin{bmatrix}
    x_{1,1} & \cdots & x_{1,v} \\
    \vdots & & \vdots \\
    x_{p,1} & \cdots & x_{p,v} \\
\end{bmatrix}}_{\vec{X}}
\bullet 
\underbrace{
\begin{bmatrix}
    \vec{Y}_{1} \\
    \vdots \\
    \vec{Y}_{p} \\
\end{bmatrix}}_{\vec{Y}}
= 
\underbrace{
\begin{bmatrix}
    x_{1,1}\vec{Y}_{1} & \cdots & x_{1,v}\vec{Y}_{1} \\
    \vdots & & \vdots \\
    x_{p,1}\vec{Y}_{p} & \cdots & x_{p,v}\vec{Y}_{p} \\
\end{bmatrix}}_{\vec{X}\bullet\vec{Y}},
\]
where $\vec{Y}_1, \ldots, \vec{Y}_p \in \mathbb{R}^{u\times t}$.
\end{definition}
Note that ``~$\bullet$~'' is a block row-wise Kronecker product; i.e. the block-rows of $\vec{X}\bullet\vec{Y}$ are obtained as the Kronecker product of the rows of $\vec{X}$ and the block-rows of $\vec{Y}$.

\begin{definition}
We say $\vec{\Omega} \sim \Gaussian(p,q)$ if each entry of $\vec{\Omega}\in\mathbb{R}^{p\times q}$ is independently a standard normal random variable.
\end{definition}

\begin{definition}\label{def:CS}
We say $\bm{\xi}\sim\operatorname{CountSketch}(d,t)$ if $\bm{\xi}\in\{0,1\}^{d\times t}$ is such that each row of $\bm{\xi}$ independently has exactly one nonzero entry in a uniformly random column. 
We respectively denote the $(j,i)$ entry of $\bm{\xi}$ as $\xi_{j,i}$.
\end{definition}

\begin{definition}\label{def:perforatedCS}
We say $\bm{\xi}^+,\bm{\xi}^-\sim\operatorname{PerfCountSketch}(d,t)$ if 
\begin{equation*}
    \bm{\xi}^{+} 
    =
    \begin{bmatrix}
        1&0&1&0&\cdots&
    \end{bmatrix}^\T
    \bullet
    \bm{\xi}
    ,\qquad
    \bm{\xi}^{-} 
    =
    \begin{bmatrix}
        0&1&0&1&\cdots&
    \end{bmatrix}^\T
    \bullet
    \bm{\xi},
\end{equation*}
where $\bm{\xi}\sim\operatorname{CountSketch}(d,t)$.
We respectively denote the $(j,i)$ entry of $\bm{\xi}^{+}$ and $\bm{\xi}^{-}$ as $\xi^+_{j,i}$ and $\xi^-_{j,i}$.
\end{definition}

We are now ready to formally define the randomly perforated Gaussian sketching distribution. % that we will use in our variant of the peeling algorithm.
\begin{definition}\label{def:RandPerfGaussian}
We say $\bm{\xi}^+,\bm{\xi}^-,\vec{\Omega}^+,\vec{\Omega}^-\sim\operatorname{RandPerfGaussian}(n,d,s,t)$ if 
\[
\vec{\Omega}^+ = \bm{\xi}^+ \bullet  \begin{bmatrix}
    \vec{\Omega}_{1} \\
    \vdots \\
    \vec{\Omega}_{d} \\
\end{bmatrix},
\qquad
\vec{\Omega}^- = \bm{\xi}^- \bullet  \begin{bmatrix}
    \vec{\Omega}_{1} \\
    \vdots \\
    \vec{\Omega}_{d} \\
\end{bmatrix},
\]
where $\bm{\xi}^+,\bm{\xi}^-\sim\operatorname{PerfCountSketch}(d,t)$ and each $\vec{\Omega}_i\sim \Gaussian(n/d,s)$ independently.
\end{definition}

The sketching matrices $\vec{\Omega}^+$ and $\vec{\Psi}^-$ shown in \cref{fig:peeling_err_perf} illustrate the sparsity structure of the randomly perforated Gaussian sketching distribution (\cref{def:RandPerfGaussian}) with $t=3$.
In the case that $t=1$, then the randomly perforated Gaussian sketching matrices are of the form used by the standard peeling algorithm -- see \cref{fig:peeling}.

\subsection{Notation for iteration and pseudocode}
\label{sec:alg_notation}

At level $\ell$, the peeling algorithm aims to approximate the $2^\ell$ low-rank off-diagonal blocks of $\vec{A}$, each of which is of size $n/2^\ell$.

The resulting approximation to these blocks is placed in a matrix $\vec{H}^{(\ell)}$.
This is repeated for $L := \lceil \log_2(n/k) \rceil \leq O(\log(n/k))$ levels, at which points the blocks are of size at most $k$. 
At the final level $\ell=L$, the algorithm also obtains an approximation $\widehat{\vec{H}}$ to the $2^L$ diagonal blocks.
The final approximation is then expressed in terms of the approximations at each level as 
\begin{equation}\label{eqn:final_approx}
    \widetilde{\vec{A}} = \vec{H}^{(1)} + \cdots + \vec{H}^{(L)} + \widehat{\vec{H}}.
\end{equation}

Suppose we are at level $\ell$, and let $\vec{A}^{(\ell)}$ be the matrix $\vec{A}$ at step $\ell$ after subtracting the approximations $\vec{H}^{(\ell-1)}, \ldots, \vec{H}^{(1)}$ from the previous levels; that is 
\[
\vec{A}^{(\ell)} 
:= \vec{A} - (\vec{H}^{(\ell-1)} + \cdots + \vec{H}^{(1)}).
\]
Partition $\vec{A}^{(\ell)}$ into a $2^\ell\times 2^\ell$ block matrix with blocks of size $(n/2^\ell)\times (n/2^\ell)$:
\begin{equation}
\label{eqn:Aell_decomp}
    \vec{A}^{(\ell)}
=
\begin{bmatrix}
    \vec{A}^{(\ell)}_{1,1} & \vec{A}^{(\ell)}_{1,2} & \cdots & \vec{A}^{(\ell)}_{1,d} \\
    \vec{A}^{(\ell)}_{2,1} & \vec{A}_{2,2} & \cdots & \vec{A}^{(\ell)}_{2,2^\ell} \\
    \vdots & \vdots && \vdots \\
    \vec{A}^{(\ell)}_{d,1} & \vec{A}^{(\ell)}_{2^\ell,2} & \cdots & \vec{A}^{(\ell)}_{2^\ell,2^\ell} \\
\end{bmatrix}.
\end{equation}
We will use the Generalized Nystr\"om Method to simultaneously approximate the relevant blocks
\begin{equation}
    \label{eqn:desired_factors}
\vec{A}^{(\ell)}_{2,1},~
\vec{A}^{(\ell)}_{1,2},~
\vec{A}^{(\ell)}_{4,3},~
\vec{A}^{(\ell)}_{3,4},~
,\ldots,~
\vec{A}^{(\ell)}_{2^\ell,2^\ell-1},
\vec{A}^{(\ell)}_{2^\ell-1,2^\ell},
\end{equation}
with low-rank matrices $\vec{H}^{(\ell)}_1\approx \vec{A}^{(\ell)}_{2,1}$, $\vec{H}^{(\ell)}_2\approx \vec{A}^{(\ell)}_{1,2}$, ..., $\vec{H}^{(\ell)}_d\approx \vec{A}^{(\ell)}_{2^\ell-1,2^\ell}$.
The low-rank matrices obtained will then be placed in the block tridiagonal matrix
\begin{equation}
    \label{eqn:Hell_decomp}
\vec{H}^{(\ell)} =
\operatorname{block-tridiag}\hspace{-.25em}
    \left(\hspace{-1.25em} 
    \begin{array}{c}
        \begin{NiceArray}[columns-width = 2.5em]{cccccc} \vec{H}^{(\ell)}_2 & \vec{0} & \vec{H}^{(\ell)}_4 & \vec{0} & \cdots & \vec{H}^{(\ell)}_{2^\ell} \end{NiceArray} \\
        \begin{NiceArray}[columns-width = 2.5em]{ccccccc} \vec{0} & \vec{0} &  \vec{0}& \vec{0}& \vec{0}& \cdots & \vec{0} \end{NiceArray} \\
        \begin{NiceArray}[columns-width = 2.5em]{cccccc} \vec{H}^{(\ell)}_1 & \vec{0} & \vec{H}^{(\ell)}_3 & \vec{0} & \cdots & \vec{H}^{(\ell)}_{2^\ell-1} \end{NiceArray}
    \end{array} 
    \hspace{-1.25em}\right),
\end{equation}
and the algorithm proceeds to the next level.

Note that the sketching matrices used by the peeling algorithm to recover the matrices in \cref{eqn:desired_factors} depends on the parity of the column index $j$: when the first index is even the second index is one smaller and when the first index is odd the second index is one larger.
To handle the two cases simultaneously in our analysis, we will introduce the following notation.
Fix a value $j\in\{1, \ldots, d\}$.
We will use ``$\pm$'' and ``$\mp$'' to indicate addition or subtraction depending on the parity of $j$:
\begin{equation}
\label{eqn:pm_def}
\pm %= \pm(j)
:= \begin{cases}
    + & \text{$j$ odd} \\
    - & \text{$j$ even}
\end{cases}
,\qquad
\mp %= \mp(j)
:= \begin{cases}
    - & \text{$j$ odd} \\
    + & \text{$j$ even}
\end{cases}.
\end{equation}
For instance example $j\pm 1$ means $j+1$ if $j$ is odd and $j-1$ if $j$ is even, and similarly $\xi_{j}^{\pm}$ means $\xi_{j}^+$ if $j$ is odd and $\xi_{j}^-$ if $j$ is even.

The diagonal blocks $\vec{A}^{(\ell)}_{j,j}$ may have large norm, and the sketching matrices used by the peeling algorithm are constructed explicitly to avoid touching these diagonal blocks. 
Assuming the algorithm has not accumulated much error, then besides the diagonal blocks and sub/super diagonal blocks we aim to recover, the other blocks in \cref{eqn:desired_factors}  will be on the scale of the best possible error.
In particular, as described in \cref{sec:peeling_intro}, if $\vec{A}$ is exactly HODLR, then these blocks are all zero.

At the final level $\ell=L$, we must also approximate the on-diagonal blocks.
Let $\widehat{\vec{A}}^{(L)}$ be the matrix after removing our approximation to the off-diagonal blocks of $\vec{A}^{(L)}$; i.e. $\widehat{\vec{A}}^{(L)} := \vec{A} - (\vec{H}^{(L)} + \cdots + \vec{H}^{(1)})$. 
Partition $\widehat{\vec{A}}^{(L)}$ into a $2^L\times 2^L$ block matrix with blocks of size $(n/2^L)\times (n/2^L)$:
\begin{equation}
\label{eqn:AL_decomp}
\widehat{\vec{A}}^{(L)}
=
\begin{bmatrix}
    \widehat{\vec{A}}^{(L)}_{1,1} & \widehat{\vec{A}}^{(L)}_{1,2} & \cdots & \widehat{\vec{A}}^{(L)}_{1,2^L} \\
    \widehat{\vec{A}}^{(L)}_{2,1} & \widehat{\vec{A}}_{2,2} & \cdots & \widehat{\vec{A}}^{(L)}_{2,2^L} \\
    \vdots & \vdots && \vdots \\
    \widehat{\vec{A}}^{(L)}_{2^L,1} & \widehat{\vec{A}}^{(L)}_{2^L,2} & \cdots & \widehat{\vec{A}}^{(L)}_{2^L,2^L} \\
\end{bmatrix}.
\end{equation}
We note that $\widehat{\vec{A}}^{(L)}_{i,j} = \vec{A}^{(L)}_{i,j}$ for all $i\neq \jp$; indeed, the approximation $\vec{H}^{(L)}$ to the off-diagonal low-rank blocks at level $L$ only updates the blocks $\vec{A}^{(L)}_{\jp,j}$.
We will approximate the blocks $\widehat{\vec{A}}^{(L)}_{j,j}$ with matrices $\widehat{\vec{H}}_j$ which we place in the matrix
\begin{equation}\label{eqn:Hhat_def}
    \widehat{\vec{H}}
=
\operatorname{block-diag}
    \Bigl(\hspace{-.5em}
        \begin{array}{cccccc} \widehat{\vec{H}}_1 & \widehat{\vec{H}}_2 & \widehat{\vec{H}}_3 & \cdots & \widehat{\vec{H}}_{2^L}
    \end{array} \hspace{-.5em}\Big).
\end{equation}

We now have the notation needed to fully describe \cref{alg:main}.
In \cref{sec:left_subspace,sec:lra_from_subspace,sec:diag_recover} we describe in more detail the main conceptual pieces of the algorithm and introduce additional notation used in our analysis.

\begin{algorithm}[ht]
\caption{Generalized Nystr\"om Method Peeling algorithm for HODLR approximation}\label{alg:main}
\fontsize{11}{15}\selectfont
\begin{algorithmic}[1]
\Procedure{GeneralizedNystr\"omPeeling}{$\vec{A},k,\sright,\tright,\sleft,\tleft$}
\State Set $L = \lceil \log_2(n/k) \rceil$
\Comment{Final level blocks of size at most $k$}
\For{$\ell=1,2,\ldots, L$}
\State Allocate and partition $\vec{H}^{(\ell)}$ as in \cref{eqn:Hell_decomp} \Comment{blocks of size $n/2^\ell\times n/2^\ell$}
\State $\bm{\xi}^+,\bm{\xi}^-,\vec{\Omega}^+,\vec{\Omega}^- \sim \operatorname{RandPerfGaussian}(n,2^\ell,\sright,\tright)$ \Comment{as in \cref{def:RandPerfGaussian}}
\State $\bm{\zeta}^+,\bm{\zeta}^-,\vec{\Psi}^+,\vec{\Psi}^- \sim \operatorname{RandPerfGaussian}(n,2^\ell,\sleft,\tleft)$ \Comment{as in \cref{def:RandPerfGaussian}}

\State Compute $\vec{A}^{(\ell)}\vec{\Omega}^\pm$
\Comment{$2\sright\tright$ matvecs with $\vec{A}$}
\State Compute $(\vec{\Psi}^\pm)^\T \vec{A}^{(\ell)}$
\Comment{$2\sleft\tleft$ matvecs with $\vec{A}^\T$}

\For{$j=1,2,\ldots, 2^\ell$}
\State Set $\rho$ such that $\xi_{j,\rho}^\pm = 1$
\State Set $\sigma$ such that $\zeta^{\mp}_{\jp,\sigma}=1$
\State $\vec{Q}^{(\ell)}_j = \operatorname{orth}(\vec{Y}^{(\ell)}_j)$\Comment{$\vec{Y}^{(\ell)}_j$ is $(\jp,\rho)$ block of $\vec{A}^{(\ell)}\vec{\Omega}^\pm$}
\State $\vec{X}^{(\ell)}_j = \big((\vec{\Psi}^\mp_{\jp})^\T \vec{Q}^{(\ell)}_j \big)^\dagger \vec{Z}^{(\ell)}_j$ \Comment{$\vec{Z}^{(\ell)}_j$ is $(\sigma,j)$ block of $(\vec{\Psi}^\pm)^\T \vec{A}$}
\State $\vec{H}^{(\ell)}_j = \vec{Q}^{(\ell)}_j\llbracket\vec{X}^{(\ell)}_j\rrbracket_k$  \Comment{$\vec{H}^{(\ell)}_j$ is $(\jp,j)$-th block of $\vec{H}^{(\ell)}$}
\EndFor
\EndFor
\State Allocate and partition $\widehat{\vec{H}}$ as in \cref{eqn:Hhat_def} \Comment{blocks of size $n/2^L\times n/2^L$}
\State $\widehat{\bm{\zeta}}^+,\widehat{\bm{\zeta}}^-,\widehat{\vec{\Psi}}^+,\widehat{\vec{\Psi}}^- \sim \operatorname{RandPerfGaussian}(n,2^L,\sleft,\tleft)$ \Comment{as in \cref{def:RandPerfGaussian}}
\State $\widehat{\bm{\zeta}} = \widehat{\bm{\zeta}}^+ + \widehat{\bm{\zeta}}^-$, $\widehat{\vec{\Psi}} = \widehat{\vec{\Psi}}^+ + \widehat{\vec{\Psi}}^-$
\State Compute $\widehat{\vec{\Psi}}^\T\widehat{\vec{A}}^{(L)}$ \Comment{$\sleft\tleft$ matvecs with $\vec{A}^\T$}
\For{$j=1,2,\ldots, 2^L$} 
\State Set ${\sigma}$ such that ${\widehat \zeta}_{j,\sigma}=1$
\State $\widehat{\vec{X}}_j = \big((\vec{\Psi}_{j})^\T \big)^\dagger\widehat{\vec{Z}}_j$  \Comment{$\widehat{\vec{Z}}_j$ is $(\sigma,j)$ block of $(\widehat{\vec{\Psi}})^\T \widehat{\vec{A}}^{(L)}$}
\State $\widehat{\vec{H}}_j = \widehat{\vec{X}}_j$  \Comment{$\widehat{\vec{H}}_j$ is $(j,j)$-th block of $\widehat{\vec{H}}$}
\EndFor
\State \Return $\widetilde{\vec{A}} = \vec{H}^{(1)} + \cdots + \vec{H}^{(L)} + \widehat{\vec{H}}$
\EndProcedure
\end{algorithmic}
\end{algorithm}

\subsection{Range approximation}
\label{sec:left_subspace}

We first describe how to obtain an approximate range $\vec{Q}^{(\ell)}_j$ for each $\vec{A}_{\jp,j}$ at level $\ell$.
Towards this end, let
\[
\bm{\xi}^+,\bm{\xi}^-,\vec{\Omega}^+,\vec{\Omega}^-
\sim \operatorname{RandPerfGaussian}(n,d,\sright,\sright)
\]
as defined in \cref{def:RandPerfGaussian}.
We will access $\vec{A}$ via the sketches $\vec{A}^{(\ell)}\vec{\Omega}^+$ and $\vec{A}^{(\ell)}\vec{\Omega}^-$, each of which can be computed using $\sright\tright$ matvecs with $\vec{A}$.

Fix $j$ and let $\rho = \rho(j)$ be the (unique) index such that $\xi_{j,\rho}^\pm = 1$.
We will try to recover a good approximation to the range of $\vec{A}^{(\ell)}_{\jp,j}$ using the information from the $(\jp,\rho)$ block of the sketch $\vec{A}\vec{\Omega}^\pm$.
This is illustrated in \cref{fig:peeling_err_perf}.
In particular, we will use the sketch
\begin{equation*}
    \vec{Y}^{(\ell)}_j
    := \vec{A}^{(\ell)}_{\jp,:} \vec{\Omega}^{\pm}_{:,\rho}
    = \sum_{i=1}^{d}  \vec{A}^{(\ell)}_{\jp,i} (\xi^{\pm}_{i,\rho}\vec{\Omega}_{i}).
\end{equation*}
It will be useful to write 
\begin{equation}
\label{eqn:Ej_def}
    \vec{Y}^{(\ell)}_j
    =\vec{A}^{(\ell)}_{\jp,j} \vec{\Omega}_j + \vec{E}^{(\ell)}_j
    ,\qquad 
    \vec{E}^{(\ell)}_j := \sum_{i\neq j} \xi^{\pm}_{i,\rho} \vec{A}^{(\ell)}_{\jp,i} \vec{\Omega}_{i}.
\end{equation}
We will then set 
\begin{equation}\label{eqn:V_def}
    \vec{Q}^{(\ell)}_{j} = \operatorname{orth}(\vec{Y}^{(\ell)}_j)
    = \operatorname{orth}(\vec{A}^{(\ell)}_{\jp,j} \vec{\Omega}_{j} + \vec{E}^{(\ell)}_j)
    ,
\end{equation}
which will still be an approximate top subspace of $\vec{A}^{(\ell)}_{\jp,j}$, provided the noise term $\vec{E}^{(\ell)}_j$ is not too large. 
In particular, note that if $\vec{A}$ is exactly HODLR, then $\vec{E}^{(\ell)}_j$ is zero.
This is because in this setting, $\vec{A}^{(\ell)}_{\jp,i} = \vec{0}$ except when $i = j$ or $i = j \pm 1$, and $\xi^\pm_{\jp,\rho} = 0$ so that there is no contribution from the on-diagonal block $\vec{A}^{(\ell)}_{\jp,\jp}$.

\subsection{Low-rank approximation from given subspace}
\label{sec:lra_from_subspace}

We now describe how to obtain the low-rank approximations $\vec{H}^{(\ell)}_j$ to each $\vec{A}^{(\ell)}_{\jp,j}$ at level $\ell$, given the approximate left subspaces $\vec{Q}^{(\ell)}_j$ computed by the procedure described in \cref{sec:left_subspace}. Let
\[
\bm{\zeta}^+,\bm{\zeta}^-,\vec{\Psi}^+,\vec{\Psi}^-
\sim \operatorname{RandPerfGaussian}(n,d,\sleft,\tleft)
\]
as defined in \cref{def:RandPerfGaussian}.
We will access $\vec{A}$ via the sketches $(\vec{A}^{(\ell)})^\T\vec{\Psi}^+$ and $(\vec{A}^{(\ell)})^\T\vec{\Psi}^-$, each of which can be computed using $\sleft\tleft$ matvecs with $\vec{A}^\T$.

Fix $j$ and let $\sigma = \sigma(j)$ be the (unique) index such that $\zeta^{\mp}_{\jp,\sigma}=1$.
We will try to recover a low-rank approximation of $\vec{A}^{(\ell)}_{\jp,j}$ whose column space is $\vec{Q}^{(\ell)}_j$ from the $(\sigma,j)$ block of the sketch $(\vec{\Psi}^{\mp})^\T \vec{A}^{(\ell)}$.
This is illustrated in \cref{fig:peeling_err_perf}.
In particular,  we will use the sketch
\begin{equation*}
    \vec{Z}^{(\ell)}_j := (\vec{\Psi}^{\mp}_{:,\sigma})^\T \vec{A}^{(\ell)}_{:,j}  
    = \sum_{i=1}^{d} (\zeta^\mp_{i,\sigma}\vec{\Psi}_{i})^\T \vec{A}^{(\ell)}_{i,j}.
\end{equation*}
Similar to above, we therefore have
\begin{equation}
\label{eqn:Fj_def}
    \vec{Z}^{(\ell)}_j = (\vec{\Psi}^{\mp}_{\jp})^\T\vec{A}^{(\ell)}_{\jp,j}  + \vec{F}^{(\ell)}_j
    ,\qquad 
    \vec{F}^{(\ell)}_j := \sum_{i\neq \jp} \zeta^{\mp}_{i,\sigma} (\vec{\Psi}_{i})^\T \vec{A}^{(\ell)}_{i,j}.
\end{equation}
Now, we obtain obtain the right factors by 
\begin{equation}\label{eqn:X_def}
    \vec{X}^{(\ell)}_j := \big((\vec{\Psi}^{\mp}_{\jp})^\T \vec{Q}^{(\ell)}_j \big)^\dagger \vec{Z}^{(\ell)}_j 
    = \argmin_{\vec{X}} \|  (\vec{\Psi}_{\jp})^\T \vec{Q}^{(\ell)}_j \vec{X} - \vec{Z}^{(\ell)}_j \|_\F.
\end{equation}
Finally, we obtain an approximation $\vec{H}^{(\ell)}_j = \vec{Q}^{(\ell)}_j \llbracket \vec{X}^{(\ell)}_j\rrbracket_k$ to $\vec{A}^{(\ell)}_{\jp,j}$.

If $\vec{F}^{(\ell)}_j$ is small, then $\vec{Q}^{(\ell)}_j\vec{X}^{(\ell)}_j$ will be nearly the best approximation to $\vec{A}^{(\ell)}_{\jp,j}$ with range equal to the column span of $\vec{Q}^{(\ell)}_j$. 
Similar to before, if $\vec{A}$ is exactly HODLR, the noise term $\vec{F}^{(\ell)}_j$ is zero.

\subsection{Recovering the diagonal blocks}
\label{sec:diag_recover}

The strategy we use to recover the diagonal blocks is similar to the off-diagonal blocks. 
However, since the blocks have at most $k$ columns, there is no need to obtain the left subspace.\footnote{This is also true for the off-diagonal blocks in levels where $n/2^\ell \leq \sright$.
In such cases $\vec{Q}_j^{(\ell)}$ will be (square) orthogonal, so it could be set to the identity a priori. 
We do not separate this case to simplify the notation in our analysis.} We let
\[
\widehat{\bm{\zeta}}^+,\widehat{\bm{\zeta}}^-,\widehat{\vec{\Psi}}^+,\widehat{\vec{\Psi}}^-
\sim \operatorname{RandPerfGaussian}(n,d,\sleft,\tleft)
\]
as defined in \cref{def:RandPerfGaussian}.
Next, define
\[
\widehat{\bm{\zeta}} = \widehat{\bm{\zeta}}^+ + \widehat{\bm{\zeta}}^-
,\qquad
\widehat{\vec{\Psi}} = \widehat{\vec{\Psi}}^+ + \widehat{\vec{\Psi}}^-
\]

Fix $j$ and let $\sigma = \sigma(j)$  be the (unique) index such that $\widehat{\zeta}_{j,\sigma} = 1$. 
We will try to recover an approximation to $\widehat{\vec{A}}^{(L)}_{j,j}$ using the $(\sigma,j)$  block of the sketch $(\vec{A}^{(L+1)})^\T \widehat{\vec{\Psi}}$.
In particular, we will use the sketch 
\begin{equation*}
    \widehat{\vec{Z}}_j := (\widehat{\vec{\Psi}}_{:,\sigma})^\T \widehat{\vec{A}}^{(L)}_{:,j}  
    = \sum_{i=1}^{d} \xi_{i,\sigma} (\vec{\Psi}_{i})^\T \widehat{\vec{A}}^{(L)}_{i,j} 
\end{equation*}
Similar to above, we therefore have 
\begin{equation}
    \widehat{\vec{Z}}_j = (\vec{\Psi}_{j})^\T\widehat{\vec{A}}^{(L)}_{j,j}  + \widehat{\vec{F}}_j
    ,\qquad 
    \widehat{\vec{F}}_j := \sum_{i\neq j}  \xi_{i,\sigma} (\vec{\Psi}_{i})^\T \widehat{\vec{A}}^{(L)}_{i,j}.
\end{equation}
We obtain the diagonal blocks by 
\begin{equation}\label{eqn:X_def_diag}
    \widehat{\vec{X}}_j := \big((\vec{\Psi}_{j})^\T  \big)^\dagger \widehat{\vec{Z}}_j 
    = \argmin_{\vec{H}} \|  (\vec{\Psi}_{j})^\T  \vec{H} - \widehat{\vec{Z}}_j \|_\F.
\end{equation}
Finally, we obtain an approximation $\widehat{\vec{H}}_j = \widehat{\vec{X}}_j$ to $\widehat{\vec{A}}^{(L)}_{j,j}$.

\section{Analysis}
\label{sec:analysis}

We are now prepared to prove our main accuracy guarantee for \cref{alg:main}, stated as \cref{thm:main_full}.
The simplified \cref{thm:main} stated in \Cref{sec:contributions} will follow as a direct corollary.

In \cref{section:perturbation_rsvd} we first prove \cref{thm:RSVD_perturb}, a deterministic perturbation bound for  RSVD and Generalized Nystr\"{o}m implemented with inexact matvec queries (see discussion in \cref{sec:aperturbationBound}). We then use this bound to prove \cref{thm:GN_perturb_exp}, is a probabilistic error bound that holds for Generalized Nystr\"{o}m implemented with random Gaussian sketches.
In \cref{sec:main_analysis} we use this bound to prove our main approximation guarantees.
In particular, our analysis applies \cref{thm:GN_perturb_exp} to each off-diagonal block at each level. 
Assuming we have obtained near-optimal low-rank approximations to the off-diagonal blocks at the previous levels, we show that we can obtain near-optimal low-rank approximations to the off-diagonal blocks at the current level.

\subsection{Perturbation bound for the RSVD}\label{section:perturbation_rsvd}

We begin with proving \cref{thm:RSVD_perturb}, which we restate below.

\medskip

\begin{restatethm}[\cref*{thm:RSVD_perturb}]
\RSVDperturb
\end{restatethm}

\begin{proof}
    Consider the matrix $\vec{C}:= (\vec{I} - \vec{Q} \vec{Q}^\T) \vec{B} + \vec{Q}(\vec{Q}^\T \vec{B} + \vec{E}_2)$. 
    The best rank-$k$ Frobenius-norm approximation to $\vec{C}$ with range contained in $\operatorname{range}(\vec{Q})$ is  $\vec{Q}\llbracket\vec{Q}^\T\vec{C}\rrbracket_k$ \cite[Theorem 3.5]{Gu:2015}.
    Using this, the triangle inequality, and the relations $\|\vec{B} - \vec{C}\|_{\F} = \|\vec Q \vec{E}_2\|_\F = \|\vec{E}_2\|_\F$ and $\vec{Q}^\T \vec{C} = \vec{Q}^\T \vec{B} + \vec{E}_2$ we obtain
    \begin{align}
        \|\vec{B} - \vec{Q}\llbracket\vec{Q}^\T\vec{B} + \vec{E}_2 \rrbracket_k\|_{\F}
        &=\|\vec{B} - \vec{Q}\llbracket\vec{Q}^\T\vec{C} \rrbracket_k\|_{\F}
        \nonumber\\&\leq \|\vec{B} - \vec{C}\|_\F + \|\vec{C} - \vec{Q}\llbracket\vec{Q}^\T\vec{C}\rrbracket_k\|_{\F} 
        \nonumber\\&\leq \|\vec{B} - \vec{C}\|_\F + \|\vec{C} - \vec{Q}\llbracket\vec{Q}^\T\vec{B}\rrbracket_k\|_{\F}  
        \nonumber\\&\leq 2\|\vec{B} - \vec{C}\|_\F + \|\vec{B} - \vec{Q}\llbracket\vec{Q}^\T\vec{B}\rrbracket_k\|_{\F}
        \nonumber\\&=2\|\vec{E}_2\|_\F + \|\vec{B} - \vec{Q}\llbracket\vec{Q}^\T\vec{B}\rrbracket_k\|_{\F}.\label{eq:inequality1}
    \end{align}
    Now let $\vec{M} := \vec{\Omega}_{\textup{top}}^{\dagger} \vec{V}_{\textup{top}}^\T$. Note that $\rank((\vec{B}\vec{\Omega} + \vec{E}_1)\vec{M}) \leq k$ and let $\vec{P} \in \R^{m_1 \times m_1}$ denote the orthogonal projector onto $\text{range}((\vec{B}\vec{\Omega} + \vec{E}_1)\vec{M})$.  
    Again, using \cite[Theorem 3.5]{Gu:2015} and the triangle inequality, we obtain
\begin{align}
    \|\vec{B} - \vec{Q} \llbracket\vec{Q}^\T \vec{B}\rrbracket_k\|_\F 
    &\leq \|(\vec{I} - \vec{P}) \vec{B}\|_\F
    \nonumber\\&\leq \|(\vec{I} - \vec{P}) \vec{B} \vec{\Omega} \vec{M}\|_\F + \|(\vec{I} - \vec{P}) \vec{B}(\vec{I} - \vec{\Omega} \vec{M})\|_\F 
    \nonumber\\&\leq \|(\vec{I} - \vec{P}) \vec{B} \vec{\Omega} \vec{M}\|_\F + \|\vec{B}(\vec{I} - \vec{\Omega} \vec{M})\|_\F.\label{eq:inequality2}
\end{align}
Now, following \cite[Chapter 5]{Connolly:2023} we can bound each term \eqref{eq:inequality2}. First note that
\begin{equation*}
    (\vec{I}-\vec{P}) \vec{B} \vec{\Omega} \vec{M} = (\vec{I}-\vec{P}) (\vec{B} \vec{\Omega} + \vec{E}_1) \vec{M} - (\vec{I}-\vec{P}) \vec{E}_1 \vec{M} = -(\vec{I}-\vec{P})\vec{E}_1\vec{M}.
\end{equation*}
Hence, 
\begin{equation}\label{eq:inequality3}
    \|(\vec{I}-\vec{P}) \vec{B} \vec{\Omega} \vec{M}\|_{\F} \leq \|\vec{E}_1\vec{M}\|_{\F} = \|\vec{E}_1 \vec{\Omega}_{\textup{top}}^{\dagger}\|_{\F}.
\end{equation}
For the second term, observe that since we have assumed $\rank(\vec{\Omega}_{\textup{top}})= k$, $\vec{V}_{\textup{top}} \vec{V}_{\textup{top}}^\T\vec{\Omega}\vec M  = \vec{V}_{\textup{top}} \vec{V}_{\textup{top}}^\T\vec{\Omega}(\vec{V}_{\textup{top}}^\T\vec{\Omega})^\dagger \vec{V}_{\textup{top}}^\T =\vec{V}_{\textup{top}} \vec{V}_{\textup{top}}^\T$. We thus have:
\begin{equation*}
    \vec{B}(\vec{I} - \vec{\Omega} \vec{M}) = \vec{B}\vec{V}_{\textup{bot}} \vec{V}_{\textup{bot}}^\T(\vec{I} - \vec{\Omega} \vec{M}) = \vec{B}\vec{V}_{\textup{bot}} \vec{V}_{\textup{bot}}^\T - \vec{B}\vec{V}_{\textup{bot}} \vec{V}_{\textup{bot}}^\T \vec{\Omega} \vec{M}.
\end{equation*}
By the Pythagorean theorem then we have
\begin{align*}
    \|\vec{B}(\vec{I} - \vec{\Omega} \vec{M})\|_\F & = \big(\|\vec{B}\vec{V}_{\textup{bot}} \vec{V}_{\textup{bot}}^\T\|_\F^2 + \|\vec{B}\vec{V}_{\textup{bot}} \vec{V}_{\textup{bot}}^\T \vec{\Omega} \vec{M}\|_\F^2\big)^{1/2} 
    \\
    & = \big(\|\vec{\Sigma}_{\textup{bot}}\|_\F^2 + \|\vec{\Sigma}_{\textup{bot}} \vec{\Omega}_{\textup{bot}} \vec{\Omega}_{\textup{top}}^{\dagger}\|_{\F}^2\big)^{1/2} \numberthis \label{eq:inequality4}.
\end{align*}
Combining \eqref{eq:inequality1} and \eqref{eq:inequality4} yields the desired inequality.
\end{proof}

\subsubsection{Probabilistic bounds}

To apply \Cref{thm:RSVD_perturb} to a setting with Gaussian sketching matrices, we make use of the following well-known fact about the Frobenius norm of Gaussian matrices and their pseudoinverses,  proofs of which can be found throughout the literature; see for instance \cite[Proposition A.5]{HalkoMartinssonTropp:2011}.
\begin{theorem}\label{thm:gaussian_expectations}
    Let $\vec{X}$ be $u\times v$ and $\vec{G}\sim\Gaussian(v,q)$ and $\vec{H} \sim \Gaussian(p,q)$ independently. 
    Then, if $q > p+1$,
    \begin{equation*}
        \EE\Bigl[ \| \vec{X} \vec{G}\vec{H}^\dagger \|_\F^2 \Bigr]
        =\|\vec{X}\|_\F^2 \cdot \EE\Bigl[\|\vec{H}^{\dagger}\|_\F^2\Bigr] =\frac{p}{q-p-1}\| \vec{X} \|_\F^2.
    \end{equation*}
\end{theorem}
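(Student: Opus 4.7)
The plan is to condition on $\vec{H}$ and reduce to two well-known identities: (i) the expectation of $\|\vec{A}\vec{G}\vec{B}\|_\F^2$ when $\vec{G}$ is standard Gaussian and $\vec{A},\vec{B}$ are deterministic, and (ii) the expectation of $\|\vec{H}^\dagger\|_\F^2$ for a Gaussian matrix $\vec{H}$, which is the standard inverse-Wishart trace identity cited in Halko--Martinsson--Tropp.

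First I would condition on $\vec{H}$. Since $\vec{G}$ and $\vec{H}$ are independent, under this conditioning $\vec{H}^\dagger$ is a fixed $q\times p$ matrix and $\vec{G}$ remains a $v\times q$ matrix with i.i.d.\ standard normal entries. The key identity is that for deterministic matrices $\vec{A}\in\mathbb{R}^{u\times v}$ and $\vec{B}\in\mathbb{R}^{q\times p}$ and $\vec{G}\sim\Gaussian(v,q)$,
\begin{equation*}
    \EE\bigl[\|\vec{A}\vec{G}\vec{B}\|_\F^2\bigr] = \|\vec{A}\|_\F^2 \cdot \|\vec{B}\|_\F^2.
\end{equation*}
This is verified by writing $\EE[\|\vec{A}\vec{G}\vec{B}\|_\F^2]=\operatorname{tr}(\vec{A}\,\EE[\vec{G}\vec{B}\vec{B}^\T\vec{G}^\T]\,\vec{A}^\T)$, expanding entrywise, and using $\EE[G_{ik}G_{jl}]=\delta_{ij}\delta_{kl}$ to obtain $\EE[\vec{G}\vec{B}\vec{B}^\T\vec{G}^\T]=\|\vec{B}\|_\F^2\,\vec{I}$. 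Applying this with $\vec{A}=\vec{X}$ and $\vec{B}=\vec{H}^\dagger$ gives
\begin{equation*}
    \EE\bigl[\|\vec{X}\vec{G}\vec{H}^\dagger\|_\F^2 \,\big|\, \vec{H}\bigr] = \|\vec{X}\|_\F^2 \cdot \|\vec{H}^\dagger\|_\F^2.
\end{equation*}
Taking the outer expectation over $\vec{H}$ yields the first equality in the statement.

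For the second equality, I would use that $\vec{H}\vec{H}^\T$ is almost surely invertible when $q\ge p$, so $\vec{H}^\dagger=\vec{H}^\T(\vec{H}\vec{H}^\T)^{-1}$ and
\begin{equation*}
    \|\vec{H}^\dagger\|_\F^2 = \operatorname{tr}\!\bigl((\vec{H}\vec{H}^\T)^{-1}\vec{H}\vec{H}^\T(\vec{H}\vec{H}^\T)^{-1}\bigr) = \operatorname{tr}\!\bigl((\vec{H}\vec{H}^\T)^{-1}\bigr).
\end{equation*}
The matrix $\vec{H}\vec{H}^\T$ follows a Wishart distribution with $q$ degrees of freedom, and the classical inverse-Wishart trace formula (Halko--Martinsson--Tropp, Proposition A.5) gives $\EE[\operatorname{tr}((\vec{H}\vec{H}^\T)^{-1})]=p/(q-p-1)$ whenever $q>p+1$, which is exactly what is needed.

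The only mildly technical step is the inverse-Wishart identity, but since it is precisely the cited reference, I would simply invoke it. The rest of the argument is a conditioning plus one line of Gaussian moment calculation, so no substantive obstacle is anticipated.
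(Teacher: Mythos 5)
Your proof is correct, and since the paper itself gives no argument for this lemma --- it simply points the reader to the literature (Halko--Martinsson--Tropp, Proposition~A.5) --- there is no in-paper proof to compare against. Your argument is the standard one underlying that reference: condition on $\vec{H}$ to factor the expectation into $\|\vec{X}\|_\F^2 \cdot \EE[\|\vec{H}^\dagger\|_\F^2]$ via the Gaussian second-moment identity $\EE[\vec{G}\vec{B}\vec{B}^\T\vec{G}^\T] = \|\vec{B}\|_\F^2 \vec{I}$, then reduce $\EE[\|\vec{H}^\dagger\|_\F^2]$ to $\EE[\operatorname{tr}((\vec{H}\vec{H}^\T)^{-1})]$ (valid a.s.\ since $q > p+1$ guarantees full row rank) and invoke the inverse-Wishart trace formula. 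All steps check out, including the dimension bookkeeping ($\vec{H}$ is $p\times q$, so $\vec{H}\vec{H}^\T$ is $p\times p$ Wishart with $q$ degrees of freedom). This is exactly the proof one would reconstruct from the cited source.
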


We proceed with proving the following lemma, which relates to the error of the approximate projection step of the Generalized Nystr\"om Method, when implemented with approximate matvecs as they arise in the peeling algorithm -- see discussion in \Cref{sec:overviewPerturb}.

\begin{lemma}\label{thm:right_factor}
Let $\vec{B}\in\mathbb{R}^{m_1\times m_2}$, 
$\vec{Q} \in\mathbb{R}^{m_1\times \sright}$ with orthonormal columns, and
$\vec{N} \in\mathbb{R}^{q\times m_2}$ be fixed matrices and $\vec{\Psi}\sim \Gaussian(m_1,\sleft)$ and $\widetilde{\vec{\Psi}} \sim \Gaussian(q,\sleft)$ be independent. 
Then, provided $\sleft > \sright+1$,
\[
\EE\Bigl[ \| \vec{Q}^\T \vec{B} - ( \vec{\Psi}^\T\vec{Q})^\dagger ( \vec{\Psi}^\T \vec{B} + \widetilde{\vec{\Psi}}^\T \vec{N} )  \|_\F^2 \Bigr]
=
\frac{\sright}{\sleft-\sright-1} 
\Bigl(
\| \vec{B} - \vec{Q}\vec{Q}^\T \vec{B} \|_\F^2  + \| \vec{N}\|_\F^2 \Bigr).
\]
\end{lemma}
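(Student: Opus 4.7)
The plan is to reduce the error to a product of an inverse Gaussian sketch times an independent Gaussian matrix, then invoke the standard formulas for the expected squared Frobenius norm of Gaussian matrices and their pseudoinverses.

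First I would simplify the error expression algebraically. Since $\sleft > \sright+1$, the matrix $\vec{\Psi}^\T\vec{Q}$ has full column rank almost surely, so $(\vec{\Psi}^\T\vec{Q})^\dagger(\vec{\Psi}^\T\vec{Q}) = \vec{I}$. Writing $\vec{P}_\perp := \vec{I} - \vec{Q}\vec{Q}^\T$ and splitting $\vec{B} = \vec{Q}\vec{Q}^\T\vec{B} + \vec{P}_\perp\vec{B}$ inside $\vec{\Psi}^\T\vec{B}$, the terms involving $\vec{Q}\vec{Q}^\T\vec{B}$ cancel $\vec{Q}^\T\vec{B}$, leaving the clean identity
\[
\vec{Q}^\T \vec{B} - (\vec{\Psi}^\T\vec{Q})^\dagger(\vec{\Psi}^\T\vec{B} + \widetilde{\vec{\Psi}}^\T\vec{N}) = -(\vec{\Psi}^\T\vec{Q})^\dagger\bigl(\vec{\Psi}^\T \vec{P}_\perp\vec{B} + \widetilde{\vec{\Psi}}^\T\vec{N}\bigr).
\]

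Next I would decouple the Gaussians using rotational invariance. Let $\vec{Q}_\perp\in\mathbb{R}^{m_1\times(m_1-\sright)}$ be an orthonormal basis for $\operatorname{range}(\vec{Q})^\perp$, so $\vec{P}_\perp = \vec{Q}_\perp\vec{Q}_\perp^\T$ and $[\vec{Q},\vec{Q}_\perp]$ is orthogonal. Because $\vec{\Psi}$ has i.i.d.\ standard normal entries, the rotated matrix $[\vec{Q},\vec{Q}_\perp]^\T\vec{\Psi}$ again has i.i.d.\ standard normal entries, so $\vec{G} := (\vec{Q}^\T\vec{\Psi})^\T = \vec{\Psi}^\T\vec{Q} \sim \Gaussian(\sleft,\sright)$ and $\widetilde{\vec{G}} := \vec{Q}_\perp^\T\vec{\Psi} \sim \Gaussian(m_1-\sright,\sleft)$ are independent (and both independent of $\widetilde{\vec{\Psi}}$). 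Substituting $\vec{\Psi}^\T\vec{P}_\perp\vec{B} = \widetilde{\vec{G}}^\T\vec{Q}_\perp^\T\vec{B}$, the quantity of interest becomes
\[
\EE\Bigl[\bigl\| \vec{G}^\dagger\bigl(\widetilde{\vec{G}}^\T\vec{Q}_\perp^\T\vec{B} + \widetilde{\vec{\Psi}}^\T\vec{N}\bigr)\bigr\|_\F^2\Bigr].
\]

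Finally I would condition on $\vec{G}$ and evaluate. The cross term $2\,\EE[\langle \vec{G}^\dagger\widetilde{\vec{G}}^\T\vec{Q}_\perp^\T\vec{B},\,\vec{G}^\dagger\widetilde{\vec{\Psi}}^\T\vec{N}\rangle_\F\mid\vec{G}]$ vanishes by independence and zero means of $\widetilde{\vec{G}},\widetilde{\vec{\Psi}}$. Applying the standard identity $\EE[\|\vec{A}\vec{G}_0\vec{C}\|_\F^2] = \|\vec{A}\|_\F^2\|\vec{C}\|_\F^2$ for an i.i.d.\ Gaussian $\vec{G}_0$ (which applies to $\widetilde{\vec{G}}^\T$ and to $\widetilde{\vec{\Psi}}^\T$), the conditional expectation becomes $\|\vec{G}^\dagger\|_\F^2\bigl(\|\vec{Q}_\perp^\T\vec{B}\|_\F^2 + \|\vec{N}\|_\F^2\bigr)$. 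Taking total expectation and using $\EE[\|\vec{G}^\dagger\|_\F^2] = \tfrac{\sright}{\sleft-\sright-1}$ (the $\vec{X}=\vec{I}$ case of \cref{thm:gaussian_expectations}) together with $\|\vec{Q}_\perp^\T\vec{B}\|_\F^2 = \|\vec{B}-\vec{Q}\vec{Q}^\T\vec{B}\|_\F^2$ yields the claimed equality.

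There is no real obstacle here; the only subtle step is recognizing that $\vec{\Psi}^\T\vec{Q}$ and $\vec{\Psi}^\T\vec{P}_\perp\vec{B}$ are independent despite both depending on $\vec{\Psi}$, which follows immediately from the rotational invariance of the Gaussian distribution under the orthogonal transformation $[\vec{Q},\vec{Q}_\perp]^\T$.
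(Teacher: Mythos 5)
Your proof is correct and follows essentially the same route as the paper: both decompose $\vec{\Psi}$ via rotational invariance into the independent pieces $\vec{\Psi}^\T\vec{Q}$ and $\vec{\Psi}^\T\vec{Q}_\perp$, reduce the error to $-(\vec{\Psi}^\T\vec{Q})^\dagger(\vec{\Psi}^\T\vec{P}_\perp\vec{B}+\widetilde{\vec{\Psi}}^\T\vec{N})$, and then evaluate by conditioning and applying the standard Gaussian Frobenius-norm identities. The only cosmetic difference is that the paper invokes a packaged lemma (their equation \eqref{eqn:frobenius_sum}) to handle the sum, while you expand directly and note the cross term vanishes; the mathematical content is identical.
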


\begin{proof}
Extend $\vec{Q}$ to a square orthogonal matrix $[\vec{Q} ~ \widetilde{\vec{Q}}]$.
Write $\vec{\Psi}_1 = \vec{\Psi}^\T \vec{Q}$ and $\vec{\Psi}_2 = \vec{\Psi}^\T \widetilde{\vec{Q}}$. 
Since $\vec{Q}$ and $\widetilde{\vec{Q}}$ are orthogonal, $\vec{\Psi}_1  \sim \Gaussian(\sleft,\sright)$ and $\vec{\Psi}_2 \sim \Gaussian(\sleft,m_1-\sright)$ are independent.

Further, since $[\vec{Q} ~ \widetilde{\vec{Q}}]$ is orthogonal, $\vec{Q}\vec{Q}^\T + \widetilde{\vec{Q}} \widetilde{\vec{Q}}^\T$ is the identity, and 
\[
\vec{\Psi}^\T \vec{B}
=\vec{\Psi}^\T (\vec{Q}\vec{Q}^\T + \widetilde{\vec{Q}}\widetilde{\vec{Q}}^\T)\vec{B}
= \vec{\Psi}_1 \vec{Q}^\T \vec{B} + \vec{\Psi}_2 \widetilde{\vec{Q}}^\T \vec{B}.
\]
Therefore, recalling our definition of $\vec{\Psi}_1 = \vec{\Psi}^\T \vec Q$, 
\begin{equation*}
    \vec{\Psi}_1^\dagger ( \vec{\Psi}^\T \vec{B} + \vec{\widetilde{\Psi}}^\T \vec{N} )
    = 
    \vec{\Psi}_1^\dagger \vec{\Psi}_1 \vec{Q}^\T \vec{B} + \vec{\Psi}_1^\dagger \vec{\Psi}_2 \widetilde{\vec{Q}}^\T \vec{B}
    + \vec{\Psi}_1^\dagger \widetilde{\vec{\Psi}}^\T \vec{N}.
\end{equation*}
Since $\sleft\geq \sright$, with probability 1, $\vec{\Psi}_1^\dagger \vec{\Psi}_1 = \vec{I}$. Thus, rearranging the above equation, we find that 
\begin{align}\label{eq:cam1}
\vec{Q}^\T \vec{B} - \vec{\Psi}_1^\dagger(\vec{\Psi}^\T \vec{B} + \widetilde{\vec{\Psi}}^\T \vec{N}) = - \vec{\Psi}_1^{\dagger} \vec{\Psi}_2 \widetilde{\vec{Q}}^\T \vec{B} - \bm{\Psi}_1^{\dagger} \widetilde{\vec{\Psi}}^\T \vec{N}.
\end{align}
Next, as shown in \cite[Lemma A.2(i)]{PerssonBoulleKressner:2024}, for deterministic matrices $\vec{X},\vec{Y},\vec{Z}$ and a Gaussian matrix $\vec{\Psi}_2$, 
\begin{equation}\label{eqn:frobenius_sum}
    \EE\Bigl[\|\vec{X}\vec{\Psi}_2\vec{Y} + \vec{Z}\|_\F^2\Big]
= \|\vec{X}\|_\F^2 \| \vec{Y}\|_\F^2 + \|\vec{Z}\|_\F^2.
\end{equation}
Therefore, using that $\vec{\Psi}_1, \vec{\Psi}_2$ and $\widehat{\vec{\Psi}}$ are independent and applying \cref{eqn:frobenius_sum} and \cref{thm:gaussian_expectations} to bound the righthand side of \eqref{eq:cam1}, we obtain
\begin{align*}
\EE\Bigl[ \| \vec{Q}^\T \vec{B}-\vec{\Psi}_1^\dagger ( \vec{\Psi}^\T \vec{B} + \widetilde{\vec{\Psi}}^\T \vec{N} )
 \|_\F^2 \Bigr]
&= \EE\Bigl[ \| \vec{\Psi}_1^\dagger \vec{\Psi}_2 \widetilde{\vec{Q}}^\T \vec{B} + \vec{\Psi}_1^\dagger \widetilde{\vec{\Psi}}^\T \vec{N} \|_\F^2 \Bigr]
\\ & = \EE\Bigl[\EE\Bigl[ \| \vec{\Psi}_1^\dagger \vec{\Psi}_2 \widetilde{\vec{Q}}^\T \vec{B} + \vec{\Psi}_1^\dagger \widetilde{\vec{\Psi}}^\T \vec{N} \|_\F^2 \Big| \bm{\Psi}_1,\widetilde{\vec{\Psi}}\Bigr]\Bigr]
\\&= \EE\Bigl[  \| \vec{\Psi}_1^\dagger\|_\F^2 \|\widetilde{\vec{Q}}^\T \vec{B} \|_\F^2 + \|\vec{\Psi}_1^\dagger \widetilde{\vec{\Psi}} ^\T\vec{N} \|_\F^2 \Bigr].
\\&= \frac{\sright}{\sleft-\sright-1} \| \widetilde{\vec{Q}}^\T \vec{B} \|_\F^2 + \frac{\sright}{\sleft-\sright-1} \| \vec{N} \|_\F^2.
\end{align*}
Noting that $\| \widetilde{\vec{Q}}^\T \vec{B} \|_\F^2 = \|\vec{B} - \vec{Q}\vec{Q}^\T \vec{B}\|_{\F}^2$ yields the desired result. 
\end{proof}

Finally, we use \cref{thm:RSVD_perturb,thm:right_factor} to prove \cref{thm:GN_perturb_exp}.

\medskip

\begin{restatethm}[\cref*{thm:GN_perturb_exp}]
    \GNperturbexp
\end{restatethm}

\begin{proof}
Let $\vec{B}$ have SVD as partitioned in \cref{thm:RSVD_perturb} and define $\vec{\Omega}_{\textup{top}} = \vec{V}_{\textup{top}}^\T \vec{\Omega}$ and $\vec{\Omega}_{\textup{bot}} = \vec{V}_{\textup{bot}}^\T \vec{\Omega}$.  Note that, by the unitary invariance of Gaussian random matrices $\vec{\Omega}_{\textup{top}} \sim \Gaussian(k,\sright)$ and $\vec{\Omega}_{\textup{bot}}\sim\Gaussian(m_2-k,\sright)$ independently of one another and $\widetilde{\vec{\Omega}}$.
Note that $\vec{\Omega}_{\textup{top}}$ has rank $k$ with probability one, and \cref{thm:RSVD_perturb} asserts 
\begin{equation}
    \|\vec{B} - \vec{Q}\llbracket\vec{X} \rrbracket_k\|_{\F} 
    \leq \underbrace{\|\vec{M}\widetilde{\vec{\Omega}} \vec{\Omega}_{\textup{top}}^{\dagger}\|_\F}_{A}  + \underbrace{2\|\vec{Q}^\T\vec{B} - \vec{X} \|_\F }_{B}
    + \underbrace{\big(\|\vec{\Sigma}_{\textup{bot}}\|_\F^2 + \|\vec{\Sigma}_{\textup{bot}} \vec{\Omega}_{\textup{bot}} \vec{\Omega}_{\textup{top}}^{\dagger}\|_{\F}^2\big)^{1/2}}_{C}.\label{eqn:RSVD_perturb_plugin}
\end{equation}
We will set $E_1 = \EE[C^2]$ and $E_2$ as an upper bound for for $2\mathbb{E}[A^2 + B^2]$ which is itself an upper bound for $\EE[(A+B)^2]$.
Then, by the linearity of expectation and Cauchy--Schwarz,
\begin{align*}
    \EE[(A+B+C)^2]
    &= \EE[(A+B)^2] + \EE[C^2] + 2\EE[(A+B)C]
    \\&\leq \EE[(A+B)^2] + \EE[C^2] + 2\sqrt{\EE[(A+B)^2]\EE[C^2]}
    \\&\leq E_2 + E_1 + 2\sqrt{E_2E_1}.
    %\\&\leq 2\EE[A^2+B^2] + \EE[C^2] + 2\sqrt{2\EE[A^2+B^2]\EE[C^2]}.
\end{align*}
It now remains to compute $\EE[A^2]$, $\EE[B^2]$, and $\EE[C^2]$.

First, using \cref{thm:gaussian_expectations} we have that
\begin{align}
    \EE[A^2] &= \EE\Bigl[ \| \vec{M} \widetilde{\vec{\Omega}} \vec{\Omega}_{\textup{top}}^\dagger \|_\F^2 \Bigr]
    =  \frac{k}{\sright-k-1} \|\vec{M}\|_\F^2;\label{eq:Abound}\\
    \EE[C^2] &=\left(1+\frac{k}{\sright-k-1}\right)\|\vec{\Sigma}_{\textup{bot}}\|_\F^2 = \left(1+\frac{k}{\sright-k-1}\right)\|\vec{B} - \llbracket\vec{B}\rrbracket_k\|_\F^2.\label{eq:Cbound}
\end{align}

Since $\vec{Q} \vec Q^\T \vec{B}$ is the best Frobenius-norm approximation to $\vec{B}$ with range contained in $\operatorname{range}(\vec{Q})$, using  \cref{thm:RSVD_perturb} along with the inequality $(x+y)^2 \leq 2x^2 + 2y^2$ gives
\begin{equation}
    \|\vec{B} - \vec{Q}\vec{Q}^\T \vec{B}\|_\F^2 
    \leq \|\vec{B} - \vec{Q}\llbracket\vec{Q}^\T \vec{B} \rrbracket_k\|_\F^2 
    \leq 2 \left( \|\vec{M}\vec{\Omega}\vec{\Omega}_{\textup{top}}^{\dagger}\|_\F^2 + \|\vec{\Sigma}_{\textup{bot}}\|_\F^2 +  \|\vec{\Sigma}_{\textup{bot}}\vec{\Omega}_{\textup{bot}}\vec{\Omega}_{\textup{top}}^{\dagger}\|_\F^2 \right).  \label{eq:projectionbound}
\end{equation}
Applying \Cref{thm:right_factor}, \eqref{eq:projectionbound}, and \Cref{thm:gaussian_expectations} yields
\begin{align*}
    \EE[B^2] &= \frac{4\sright}{\sleft-\sright-1}\Bigr(\EE\Bigr[\|\vec{B}-\vec{Q}\vec{Q}^\T \vec{B}\|_\F^2\Bigl] + \|\vec{N}\|_\F^2\Bigl)\\
    & \leq \frac{4\sright}{\sleft-\sright-1}\left(2\EE\Bigl[\|\vec{M}\widetilde{\vec{\Omega}}\vec{\Omega}_{\textup{top}}^{\dagger}\|_\F^2\Bigr] + 2\|\vec{\Sigma}_{\textup{bot}}\|_\F^2 + 2 \EE\Bigl[\|\vec{\Sigma}_{\textup{bot}}\vec{\Omega}_{\textup{bot}}\vec{\Omega}_{\textup{top}}^{\dagger}\|_\F^2\Bigr] + \|\vec{N}\|_\F^2\right)\\
    & = \frac{4\sright}{\sleft-\sright-1}\left(\frac{2k}{\sright-k-1}\|\vec{M}\|_\F^2 + 2\|\vec{\Sigma}_{\textup{bot}}\|_\F^2 + \frac{2k}{\sright-k-1}\|\vec{\Sigma}_{\textup{bot}}\|_\F^2  + \|\vec{N}\|_\F^2\right).
\end{align*}
Our choices of $k$, $\sright$, and $\sleft$ ensure $\frac{k}{\sright-k-1} \leq 1$ and $\frac{\sright}{\sleft-\sright-1} \leq 1$. Hence,
\begin{equation}\label{eq:Bbound}
    \EE[B^2] \leq \frac{8k}{\sright-k-1}\|\vec{M}\|_\F^2 + \frac{16\sright}{\sleft-\sright-1}\|\vec{\Sigma}_{\textup{bot}}\|_\F^2  +  \frac{4\sright}{\sleft-\sright-1}\|\vec{N}\|_\F^2.
\end{equation}
Combining \eqref{eq:Abound} and \eqref{eq:Bbound} and noting $\|\vec{\Sigma}_{\textup{bot}}\|_\F^2 = \|\vec{B} -\llbracket\vec{B}\rrbracket_k\|_\F^2$ yields
\begin{equation*}
    2\EE[A^2 + B^2] \leq \frac{18k}{\sright-k-1}\|\vec{M}\|_\F^2 + \frac{32\sright}{\sleft-\sright-1}\|\vec{\Sigma}_{\textup{bot}}\|_\F^2  +  \frac{8\sright}{\sleft-\sright-1}\|\vec{N}\|_\F^2 :=E_2,
\end{equation*}
as required. 
\end{proof}

\subsection{Analysis of Algorithm \ref*{alg:main}}
\label{sec:main_analysis}

We are now prepared to analyze \cref{alg:main}.

\begin{definition}
    For each $\ell=1,2,\ldots, L$, let $\mathcal{F}_{\ell}$ be the sigma algebra representing the information known to the algorithm at the start of the $\ell$-th step.
\end{definition}

We begin by proving the key approximation guarantee for an off-diagonal low-rank block at level $\ell$.
This shows that, even in the presence of noise, we can obtain a near-optimal low-rank approximation to each of the off-diagonal blocks (and the on-diagonal blocks at the final level $\ell=L$).
This can be viewed as a version of \cref{thm:GN_perturb_exp} adapted to the errors appearing within the peeling algorithm.
\begin{lemma}\label{thm:level_l_gurantee}
Fix a rank parameter $k$ and let $\beta\in(0,1)$. 
Suppose $\tleft\geq 1$ and $\sright$, $\tleft$, and $\sright$ are such that
\begin{align*}
\frac{k}{\sright-k-1}\leq \frac{\beta}{30}
,\qquad
\frac{k}{\sright-k-1}\cdot \frac{1}{\tright} \leq \frac{\beta^2}{30^2}
,\qquad 
\frac{\sright}{\sleft-\sright-1} \leq \frac{\beta^2}{30^2}.
\end{align*}
Then for each $\ell=1,2,\ldots, L$ the off-diagonal low-rank blocks obtained by \cref{alg:main} are nearly optimal in the sense that 
    \begin{align*}
        \EE\Bigl[
        \|\vec{A}^{(\ell)}_{\jp,j} - \vec{Q}^{(\ell)}_j \llbracket \vec{X}^{(\ell)}_j \rrbracket_k \|_\F^2 \Big| \mathcal{F}_{\ell}\Bigr]
        &\leq \| \vec{A}^{(\ell)}_{\jp,j} - \llbracket \vec{A}^{(\ell)}_{\jp,j} \rrbracket_k \|_\F^2 
        + \beta E^{(\ell)}_j .
    \end{align*}
    where
    \[
    E^{(\ell)}_j := \max\Bigg\{ 
    \| \vec{A}^{(\ell)}_{\jp,j} - \llbracket \vec{A}^{(\ell)}_{\jp,j} \rrbracket_k \|_\F^2 
    ,
    \sum_{\substack{i=1\\i\neq j,\jp}}^{2^\ell} \frac{1}{2}\|\vec{A}^{(\ell)}_{i,j}\|_\F^2
    ,
    \sum_{\substack{i=1\\i\neq j,\jp}}^{2^\ell}  \frac{1}{2}\|\vec{A}^{(\ell)}_{\jp,i}\|_\F^2
    \Bigg\}.
    \]
    Moreover, at the final level $\ell=L$, the on-diagonal blocks obtained by \cref{alg:main} are nearly exact in  that 
    \[
    \EE\Bigl[
    \|\widehat{\vec{A}}^{(L)}_{j,j} - \widehat{\vec{X}}_j  \|_\F^2 \Big| \mathcal{F}_{L+1}\Bigr]
    \leq \beta  \sum_{\substack{i=1\\i\neq j}}^{2^L} \|\widehat{\vec{A}}^{(L)}_{i,j}\|_\F^2 .
    \]

\end{lemma}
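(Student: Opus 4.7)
The plan is to apply Theorem \ref{thm:GN_perturb_exp} to each off-diagonal block and Lemma \ref{thm:right_factor} to each diagonal block, conditioning first on $\mathcal{F}_\ell$ together with the Count-Sketch indicators $\bm{\xi}^\pm, \bm{\zeta}^\mp$ (which fixes the random sparsity pattern) and then taking expectation over those indicators. The conditioning step isolates the effect of the Gaussian sketching matrices; the subsequent expectation over the Count-Sketch indicators produces the $1/\tright$ and $1/\tleft$ factors in the noise terms. The primary subtlety is tracking which indices $i$ contribute to the noise for each block, which is governed by the parity structure of the perforated sketches.

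For the off-diagonal bound, fix $\ell$ and $j$, let $\rho, \sigma$ be the algorithm-selected indices, and set $S_\xi := \{i \neq j : \xi^\pm_{i,\rho} = 1\}$, $S_\zeta := \{i \neq \jp : \zeta^\mp_{i,\sigma} = 1\}$. From \eqref{eqn:Ej_def} and \eqref{eqn:Fj_def},
\[
\vec{Y}^{(\ell)}_j = \vec{A}^{(\ell)}_{\jp,j}\vec{\Omega}_j + \vec{M}\widetilde{\vec{\Omega}}, \qquad \vec{Z}^{(\ell)}_j = \vec{\Psi}_{\jp}^\T\vec{A}^{(\ell)}_{\jp,j} + \widetilde{\vec{\Psi}}^\T\vec{N},
\]
where $\vec{M}$ horizontally concatenates $\{\vec{A}^{(\ell)}_{\jp,i}\}_{i \in S_\xi}$, $\widetilde{\vec{\Omega}}$ vertically stacks the corresponding $\vec{\Omega}_i$, and $\vec{N}, \widetilde{\vec{\Psi}}$ stack row-blocks indexed by $S_\zeta$. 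Because different Gaussian blocks of the perforated sketches are independent, and since $j \notin S_\xi$, $\jp \notin S_\zeta$ by construction, conditional on $\mathcal{F}_\ell, \bm\xi, \bm\zeta$ the matrices $\vec{\Omega}_j, \widetilde{\vec{\Omega}}, \vec{\Psi}_{\jp}, \widetilde{\vec{\Psi}}$ are jointly independent standard Gaussians of the required dimensions. The assumptions imply $\sright > 2k+1$ and $\sleft > 2\sright + 1$, so Theorem \ref{thm:GN_perturb_exp} yields
\[
\EE\bigl[\|\vec{A}^{(\ell)}_{\jp,j} - \vec{Q}^{(\ell)}_j \llbracket \vec{X}^{(\ell)}_j \rrbracket_k\|_\F^2 \,\big|\, \mathcal{F}_\ell, \bm\xi, \bm\zeta\bigr] \le E_1 + E_2 + 2\sqrt{E_1 E_2}.
\]

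Next I take expectation over $\bm\xi, \bm\zeta$ and use Jensen to bound the cross term by $2\sqrt{E_1 \cdot \EE[E_2]}$ ($E_1$ is Count-Sketch-independent). By the parity structure of $\bm\xi^\pm$, every $i \in S_\xi$ satisfies $i \neq j, \jp$ and has the same parity as $j$; independence of the Count-Sketch rows gives $\PP[i \in S_\xi] = 1/\tright$ for every such $i$, whence
\[
\EE_\xi\bigl[\|\vec{M}\|_\F^2\bigr] = \tfrac{1}{\tright}\!\!\sum_{\substack{i \text{ same parity as } j \\ i \neq j}}\!\!\|\vec{A}^{(\ell)}_{\jp,i}\|_\F^2 \le \tfrac{2}{\tright} U_M, \qquad U_M := \tfrac{1}{2}\!\sum_{i \neq j, \jp}\!\|\vec{A}^{(\ell)}_{\jp,i}\|_\F^2,
\]
and analogously $\EE_\zeta[\|\vec{N}\|_\F^2] \le (2/\tleft) U_N$ with $U_N := \tfrac{1}{2}\sum_{i \neq j, \jp}\|\vec{A}^{(\ell)}_{i,j}\|_\F^2$. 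Plugging the three hypothesized bounds into $E_1$ and $\EE[E_2]$ yields $E_1 \le (1+\beta/30)T$ and $\EE[E_2] \le (\beta^2/10)E^{(\ell)}_j$ where $T := \|\vec A^{(\ell)}_{\jp,j} - \llbracket\vec A^{(\ell)}_{\jp,j}\rrbracket_k\|_\F^2$; then using $T \le E^{(\ell)}_j$ and $\sqrt{T\cdot E^{(\ell)}_j} \le E^{(\ell)}_j$ collapses everything to $T + \beta E^{(\ell)}_j$ with constants leaving a comfortable margin ($1/30 + 1/10 + 2\sqrt{31/300} \approx 0.78 < 1$).

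For the diagonal block at level $L$, the argument is simpler. Apply Lemma \ref{thm:right_factor} with $\vec Q = \vec I_{n/2^L}$ and $\vec B = \widehat{\vec A}^{(L)}_{j,j}$; since $\|\vec B - \vec Q \vec Q^\T \vec B\|_\F = 0$ and $n/2^L \le k$, this gives
\[
\EE\bigl[\|\widehat{\vec A}^{(L)}_{j,j} - \widehat{\vec X}_j\|_\F^2 \,\big|\, \mathcal{F}_{L+1}, \widehat{\bm\zeta}\bigr] = \frac{n/2^L}{\sleft - n/2^L - 1}\|\widehat{\vec N}\|_\F^2 \le \frac{k}{\sleft - k - 1}\|\widehat{\vec N}\|_\F^2,
\]
where $\widehat{\vec N}$ vertically stacks the relevant $\widehat{\vec A}^{(L)}_{i,j}$ blocks. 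Since $\widehat{\bm\zeta} = \widehat{\bm\zeta}^+ + \widehat{\bm\zeta}^-$ is a full (unperforated) Count-Sketch, every $i \neq j$ appears in the error set with probability $1/\tleft$, so $\EE_{\widehat\zeta}[\|\widehat{\vec N}\|_\F^2] = (1/\tleft)\sum_{i \neq j}\|\widehat{\vec A}^{(L)}_{i,j}\|_\F^2$. The hypothesis $\sright/(\sleft - \sright - 1) \le \beta^2/900$ then yields $(k/(\sleft - k - 1))(1/\tleft) \le \beta^2/900 \le \beta$ for $\beta \in (0,1)$, completing the claim. The main technical obstacle throughout is the careful bookkeeping of parity-induced exclusions: the perforation forces $S_\xi$ and $S_\zeta$ to hit only half the indices, which is precisely what justifies the factor of $2$ that appears when passing from the literal noise expectations to the symmetric quantities $U_M, U_N$ in $E^{(\ell)}_j$.
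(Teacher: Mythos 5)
Your proof is correct and follows essentially the same route as the paper's: condition on $\mathcal{F}_\ell$ and the CountSketch indicators, apply \cref{thm:GN_perturb_exp} to the off-diagonal block, then take expectation over the indicators using the parity structure to produce the $1/\tright$, $1/\tleft$ factors, and finally handle the cross term by the concavity of the square root. The paper also invokes \cref{thm:GN_perturb_exp} conditioned on $\bm\xi,\bm\zeta$, evaluates the expected sketch-noise norms $\EE[\|\vec M\|_\F^2]$ and $\EE[\|\vec N\|_\F^2]$, and then completes the argument via Cauchy--Schwarz. Your arithmetic is slightly tighter on the $E_1$ factor (using $E_1 \le (31/30)T$ rather than $E_1 \le 2E_j^{(\ell)}$), which gives you a more comfortable margin ($\approx 0.78$ vs.\ the paper's $\approx 0.991$), and your diagonal argument routes through \cref{thm:right_factor} with $\vec Q = \vec I$ instead of appealing to \cref{thm:gaussian_expectations} directly, but these are cosmetic differences, not different ideas.
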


\begin{proof}
We begin with the first result. Let $d = 2^{\ell}$ and let $\vec{\Omega}^{\pm}$ and $\vec{\Psi}^{\pm}$ be the sketches used at level $\ell$ of \Cref{alg:main}. Fix $j$ and recall that $\rho$ and $\sigma$ are the unique indices so that $\xi_{j,\rho}^{\pm} = 1$ and $\zeta_{j\pm1,\sigma}^{\mp} = 1$. Define 
\begin{align*}
    \vec{M} &=
\begin{bmatrix}
    \xi^\pm_{1,\rho}\vec{A}^{(\ell)}_{\jp,1} & \cdots & \xi^\pm_{j-1,\rho}\vec{A}^{(\ell)}_{\jp,j-1} & \xi^\pm_{j+1,\rho}\vec{A}^{(\ell)}_{\jp,j+1} & \cdots & \xi^\pm_{d,\rho}\vec{A}^{(\ell)}_{\jp,d},
\end{bmatrix}
\\
\vec{N} &=
\begin{bmatrix}
    \zeta^\mp_{1,\sigma} \vec{A}^{(\ell)}_{1,j} & \cdots & \zeta^\mp_{\jp-1,\sigma} \vec{A}^{(\ell)}_{\jp-1,j} & \zeta^\mp_{\jp+1,\sigma} \vec{A}^{(\ell)}_{\jp+1,j} & \cdots & \zeta^\mp_{d,\sigma} \vec{A}^{(\ell)}_{d,j},
\end{bmatrix}
\end{align*}
and the random Gaussian matrices
\begin{align*}
    \widetilde{\vec{\Omega}} &= 
    \begin{bmatrix} (\vec{\Omega}_1)^\T & \cdots & (\vec{\Omega}_{j-1})^\T & (\vec{\Omega}_{j+1})^\T & \cdots & (\vec{\Omega}_d)^\T \end{bmatrix}^\T
    \\
    \widetilde{\vec{\Psi}} &= 
    \begin{bmatrix} (\vec{\Psi}_1)^\T & \cdots & (\vec{\Psi}_{j\pm1-1})^\T & (\vec{\Psi}_{j\pm1+1})^\T & \cdots & (\vec{\Psi}_d)^\T \end{bmatrix}^\T.
\end{align*}
Then we can write 
\[
\vec{E}_j^{(\ell)} = \vec{M}\widetilde{\vec{\Omega}}
,\qquad
\vec{F}_j^{(\ell)} = \widetilde{\vec{\Psi}}^\T \vec{N},
\]
where $\vec{E}_j^{(\ell)}$ and $\vec{F}_j^{(\ell)}$ are are the additive perturbation terms as defined in \cref{eqn:Ej_def,eqn:Fj_def}. 
Furthermore, note that $\widetilde{\vec{\Omega}}$ and $\widetilde{\vec{\Psi}}$ are independent of $\vec{\Omega}_j$ and $\vec{\Psi}_{j\pm1}$. 
Recall that $\vec{Q}_{j}^{(\ell)}$ is an orthonormal basis for $\mathrm{range}(\vec{A}_{j\pm1,j} \vec{\Omega}_j + \vec{E}_j^{(\ell)})$ and $\vec{X}_{j}^{(\ell)} = (\vec{\Psi}_{j\pm 1}^\T \vec{Q}_{j}^{(\ell)})^{\dagger}(\vec{\Psi}_j^\T\vec{A}_{j\pm 1,j} + \widetilde{\vec{\Psi}}^\T \vec{N})$. 

The specified conditions on $k$, $\sright$, and $\sleft$ ensure that $\sleft \geq 2\sright+1$ and $\sright\geq 2k+1$.
Therefore, \Cref{thm:GN_perturb_exp} yields
\begin{equation}
    \label{eqn:bound_E1E2}
    \EE\Bigl[
    \|\vec{A}^{(\ell)}_{\jp,j} - \vec{Q}^{(\ell)}_j \llbracket \vec{X}^{(\ell)}_j \rrbracket_k \|_\F^2 \Big| \mathcal{F}_{\ell},\bm{\xi},\bm{\zeta} \Bigr]
    \leq E_1 + E_2 + 2\sqrt{E_1E_2},
\end{equation}
where
\begin{align*}
    E_1 &:= \left(1+\frac{k}{\sright-k-1} \right) \| \vec{A}^{(\ell)}_{\jp,j} - \llbracket \vec{A}^{(\ell)}_{\jp,j} \rrbracket_k \|_\F^2
    ,
    \\
    E_2 &:= \frac{18k}{\sright-k-1} \|\vec{M}\|_\F^2 + \frac{8\sright}{\sleft-\sright-1} \|\vec{N}\|_{\F}^2 + \frac{32\sright}{\sleft-\sright-1} \| \vec{A}^{(\ell)}_{\jp,j} - \llbracket \vec{A}^{(\ell)}_{\jp,j} \rrbracket_k \|_\F^2
    .
\end{align*}
By our choice of $\bm{\xi}^\pm$, the $\xi^\pm_{i,\rho}$ are independent for different $i$ and $\EE[\xi^\pm_{i,\rho}]$ is zero or $1/\tright$ depending on the parity of $\rho$. Thus, $\EE[\xi^\pm_{i,\rho}]\leq 1/\tright$. 
Furthermore, note that $\EE[\xi^{\pm}_{j\pm1,\rho}] = 0$ by construction.
Thus, 
\begin{align*}
    %\label{eqn:EM_bd}
    \EE\Bigl[\|\vec{M}\|_\F^2\Big] 
    &= \sum_{i\neq j} \EE[\xi^\pm_{i,\rho}] \|\vec{A}^{(\ell)}_{\jp,i}\|_\F^2 
    \leq  \sum_{\substack{i=1\\i\neq j,\jp}}^{d}  \|\vec{A}^{(\ell)}_{\jp,i}\|_\F^2.
\intertext{By a similar argument,}
    %\label{eqn:EN_bd}
    \EE\Bigl[\|\vec{N}\|_\F^2\Big] 
    &= \sum_{i\neq \jp} \EE[\zeta^{\mp}_{i,\sigma}] \|\vec{A}^{(\ell)}_{i,j}\|_\F^2 
    \leq \frac{1}{\tleft} \sum_{\substack{i=1\\i\neq j,\jp}}^{d} \|\vec{A}^{(\ell)}_{i,j}\|_\F^2 .
\end{align*}
Hence, by the law of total expectation, the Cauchy-Schwarz inequality, and \eqref{eqn:bound_E1E2} we have
\begin{equation}
    \label{eqn:bound_E1E2v2}
    \EE\Bigl[
    \|\vec{A}^{(\ell)}_{\jp,j} - \vec{Q}^{(\ell)}_j \llbracket \vec{X}^{(\ell)}_j \rrbracket_k \|_\F^2 \Big| \mathcal{F}_{\ell} \Bigr]
    \leq E_1 + E_2' + 2\sqrt{E_1E_2'},
\end{equation}
where
\begin{align*}
    E_2' &:= \frac{18k}{\sright-k-1}\cdot \frac{1}{\tright} \sum_{\substack{i=1\\i\neq j,\jp}}^{d} \|\vec{A}^{(\ell)}_{\jp,i}\|_\F^2 + \frac{8\sright}{\sleft-\sright-1}\cdot \frac{1}{\tleft}\sum_{\substack{i=1\\i\neq j,\jp}}^{d} \|\vec{A}^{(\ell)}_{i,j}\|_\F^2 + \frac{32\sright}{\sleft-\sright-1} \| \vec{A}^{(\ell)}_{\jp,j} - \llbracket \vec{A}^{(\ell)}_{\jp,j} \rrbracket_k \|_\F^2
    .
\end{align*}
Using that
\begin{equation*}
    \max\Biggl\{\| \vec{A}^{(\ell)}_{\jp,j} - \llbracket \vec{A}^{(\ell)}_{\jp,j} \rrbracket_k \|_\F^2, \frac{1}{2}\sum_{\substack{i=1\\i\neq j,\jp}}^{d}  \|\vec{A}^{(\ell)}_{\jp,i}\|_\F^2, \frac{1}{2} \sum_{\substack{i=1\\i\neq j,\jp}}^{d} \|\vec{A}^{(\ell)}_{i,j}\|_\F^2\Bigg\} \leq E_j^{(\ell)}
\end{equation*}
and our assumptions on $\tleft$, $\tright$, $\sleft$, and $\sright$ allows us to get bounds 
\begin{align*}
    \frac{k}{\sright-k-1} \| \vec{A}^{(\ell)}_{\jp,j} - \llbracket \vec{A}^{(\ell)}_{\jp,j} \rrbracket_k \|_\F^2
    &\leq \frac{\beta}{30} E_j^{(\ell)}
    &
    \frac{18k}{\sright-k-1}\cdot \frac{1}{\tright} \sum_{\substack{i=1\\i\neq j,\jp}}^{d} \|\vec{A}^{(\ell)}_{\jp,i}\|_\F^2 
    &\leq 36 \frac{\beta^2}{30^2}E_j^{(\ell)}
    \\
    \frac{8\sright}{\sleft-\sright-1}\cdot \frac{1}{\tleft}\sum_{\substack{i=1\\i\neq j,\jp}}^{d} \|\vec{A}^{(\ell)}_{i,j}\|_\F^2 
    &\leq 16 \frac{\beta^2}{30^2}E_j^{(\ell)}
    &
    \frac{32\sright}{\sleft-\sright-1} \| \vec{A}^{(\ell)}_{\jp,j} - \llbracket \vec{A}^{(\ell)}_{\jp,j} \rrbracket_k \|_\F^2
    &\leq 32 \frac{\beta^2}{30^2}E_j^{(\ell)}.
\end{align*}
Therefore, since $E_1\leq 2 E_j^{(\ell)}$ and $1/30 + (36+16+32)/30^2 + 2\sqrt{2(36+16+32)/30^2}\approx 0.991 \leq 1$, we get a bound
\begin{equation}\label{eq:finalineq}
    E_1 + E_2' + 2\sqrt{E_1E_2'} \leq \| \vec{A}^{(\ell)}_{\jp,j} - \llbracket \vec{A}^{(\ell)}_{\jp,j} \rrbracket_k \|_\F^2 + \beta E_j^{(\ell)}.
\end{equation}
Combining \eqref{eq:finalineq} with \eqref{eqn:bound_E1E2v2} yields the first inequality.

The proof of the second result is similar to the analysis of the Generalized Nystr\"om method \cref{thm:GN_perturb_exp}.
First define
\begin{equation*}
    \widehat{\vec{N}} = \begin{bmatrix} \widehat{\zeta}_{1,\rho} \widehat{\vec{A}}_{1,j}^{(L)} & \cdots & \widehat{\zeta}_{j-1,\rho} \widehat{\vec{A}}_{j-1,j}^{(L)} & \widehat{\zeta}_{j+1,\rho} \widehat{\vec{A}}_{j+1,j}^{(L)} & \cdots & \widehat{\zeta}_{2^L,\rho} \widehat{\vec{A}}_{2^L,j}^{(L)}\end{bmatrix}.
\end{equation*}
and the random Gaussian matrix
\begin{equation*}
    \widetilde{\vec{\Psi}} = \begin{bmatrix} (\widehat{\vec{\Psi}}_1)^\T & \cdots & (\widehat{\vec{\Psi}}_{j-1})^\T & (\widehat{\vec{\Psi}}_{j+1})^\T & \cdots & (\widehat{\vec{\Psi}}_{2^L})^\T \end{bmatrix}^\T.
\end{equation*}
Similar to above we have 
\[
\EE\bigl[\|\widehat{\vec{N}}\|_\F^2\bigr] 
\leq \frac{1}{\tleft} \sum\limits_{\substack{i=1\\i\neq j} }^{2^L}\|\widehat{\vec{A}}_{i,j}^{(L)}\|_\F^2.
\]
Since $\sleft \geq \sright>2k+1$ and $k \geq n/2^L$ we can apply \cref{thm:gaussian_expectations} to obtain
\begin{align*}
    \EE\Bigl[
    \|\widehat{\vec{A}}^{(L)}_{j,j} - \widehat{\vec{H}}_j \|_\F^2 \Big| \mathcal{F}_{L+1}\Bigr] 
    &= \EE\Bigl[\|(\widehat{\vec{\Psi}}_j^\T)^{\dagger}\widetilde{\vec{\Psi}}^\T \widehat{\vec{N}}\|_{\F}^2 \Big| \mathcal{F}_{L+1} \Big]\\
    &= \frac{n/2^L}{\sleft - n/2^L-1} \mathbb{E}\bigl[\|\widehat{\vec{N}}\|_{\F}^2\bigr]\\
    % &\leq \frac{n/2^L}{\sleft - n/2^L-1} \frac{1}{\tleft}\sum_{\substack{i=1\\i\neq j}}^{2^L} \|\widehat{\vec{A}}^{(L)}_{i,j}\|_\F^2\\
    & \leq  \frac{k}{\sright - k-1} \frac{1}{\tleft}\sum_{\substack{i=1\\i\neq j}}^{2^L} \|\widehat{\vec{A}}^{(L)}_{i,j}\|_\F^2\\
    &\leq \frac{\beta}{30}\sum_{\substack{i=1\\i\neq j}}^{2^L} \|\widehat{\vec{A}}^{(L)}_{i,j}\|_\F^2 
    \leq\beta  \sum_{\substack{i=1\\i\neq j}}^{2^L} \|\widehat{\vec{A}}^{(L)}_{i,j}\|_\F^2,
\end{align*}
as required.
\end{proof}

\pagebreak[3]

Finally, we prove our main theorem. 
\begin{theorem}\label{thm:main_full}
Fix a rank parameter $k$ and let $\beta\in(0,1)$. Suppose $\tleft\geq 1$ and $\sright$, $\tleft$, and $\sright$ are such that
\begin{align*}
\frac{k}{\sright-k-1}\leq \frac{\beta}{30}
,\qquad
\frac{k}{\sright-k-1}\cdot \frac{1}{\tright} \leq \frac{\beta^2}{30^2}
,\qquad 
\frac{\sright}{\sleft-\sright-1} \leq \frac{\beta^2}{30^2}.
\end{align*}
Let $L=\lceil \log_2(n/k) \rceil$.
Then, using $2L\sright\tright$ products with $\vec{A}$ and $(2L+1)\sleft\tleft$ products with $\vec{A}^\T$, \cref{alg:main} outputs a $\HODLR(k)$ matrix $\widetilde{\vec{A}}$ such that
    \[
    \mathbb{E}\Bigl[\| \vec{A} - \widetilde{\vec{A}} \|_\F^2\Bigr] \leq (1+\beta)^{L+1} \cdot \min_{\vec{H}\in\HODLR(k)} \|\vec{A} - \vec{H} \|_\F^2.
    \]
\end{theorem}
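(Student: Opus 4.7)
My approach is to control the total expected squared Frobenius error level-by-level via a recurrence driven by \Cref{thm:level_l_gurantee}. The starting point is the decomposition
\[
    \|\vec{A} - \widetilde{\vec{A}}\|_\F^2 = \sum_{\ell=1}^{L} \operatorname{Err}^{(\ell)} + \operatorname{Err}_{\textup{diag}},
    \qquad
    \operatorname{Err}^{(\ell)} := \sum_{j=1}^{2^\ell} \|\vec{A}^{(\ell)}_{\jp,j} - \vec{H}^{(\ell)}_j\|_\F^2,
\]
which holds because every off-diagonal position in $\vec{A}$ belongs to exactly one target block in the HODLR hierarchy. Each target block $\vec{A}^{(\ell)}_{\jp,j}$ is contained in a diagonal block at every coarser level, so it is untouched by the earlier peeling steps and equals the corresponding original block of $\vec{A}$. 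Defining $\OPT^{(\ell)} := \sum_j \|\vec{A}^{(\ell)}_{\jp,j} - \llbracket\vec{A}^{(\ell)}_{\jp,j}\rrbracket_k\|_\F^2$, this yields $\sum_{\ell=1}^{L} \OPT^{(\ell)} = \min_{\vec{H}\in\HODLR(k)}\|\vec{A} - \vec{H}\|_\F^2$.

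The main technical step, which I expect to be the principal obstacle, is to rewrite the noise quantity in \Cref{thm:level_l_gurantee}. Bounding the max in the definition of $E_j^{(\ell)}$ by the sum of its three arguments and re-indexing yields
\[
    \sum_{j=1}^{2^\ell} E_j^{(\ell)} \leq \OPT^{(\ell)} + R^{(\ell)},
    \qquad
    R^{(\ell)} := \sum_{\substack{(a,b):\, a\neq b,\, a\neq b\pm 1}} \|\vec{A}^{(\ell)}_{a,b}\|_\F^2,
\]
where the sum in $R^{(\ell)}$ ranges over non-target, non-diagonal level-$\ell$ block positions. I would then establish the key structural identity $R^{(\ell)} = \sum_{\ell' < \ell} \operatorname{Err}^{(\ell')}$. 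The argument is combinatorial: any non-target off-diagonal block $(a,b)$ at level $\ell$ must lie in a sub/super-diagonal target block at some unique coarser level $\ell' < \ell$, since otherwise every block containing $(a,b)$ at every coarser level would be diagonal, forcing $a=b$ or $a = b\pm 1$. Hence $\vec{A}^{(\ell)}_{a,b}$ is a sub-block of the coarser-level residual $\vec{A}^{(\ell')}_{\jp,j} - \vec{H}^{(\ell')}_j$, and summing squared Frobenius norms over all level-$\ell$ sub-blocks of a given coarser target reassembles the full error contribution of that target.

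Combining this identity with \Cref{thm:level_l_gurantee} gives the conditional bound $\mathbb{E}[\operatorname{Err}^{(\ell)} \mid \mathcal{F}_\ell] \leq (1+\beta)\OPT^{(\ell)} + \beta \sum_{\ell' < \ell} \operatorname{Err}^{(\ell')}$. Setting $T_\ell := \mathbb{E}[\operatorname{Err}^{(\ell)}]$ and $S_\ell := \sum_{\ell'\leq\ell} T_{\ell'}$ and taking total expectations produces the clean recurrence $S_\ell \leq (1+\beta) S_{\ell-1} + (1+\beta)\OPT^{(\ell)}$, which unrolls to $S_L \leq (1+\beta)^L \OPT$. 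The diagonal step is handled analogously using the second part of \Cref{thm:level_l_gurantee}: the same block-hierarchy argument applied at the finest level gives $\sum_j \sum_{i\neq j}\|\widehat{\vec{A}}^{(L)}_{i,j}\|_\F^2 = \sum_{\ell=1}^L \operatorname{Err}^{(\ell)}$, so $\mathbb{E}[\operatorname{Err}_{\textup{diag}}] \leq \beta S_L \leq \beta(1+\beta)^L \OPT$. Adding the two contributions produces the desired $(1+\beta)^{L+1}\OPT$ bound. Finally, the matvec accounting is immediate from the algorithm: $2\sright\tright$ queries with $\vec{A}$ and $2\sleft\tleft$ queries with $\vec{A}^\T$ at each of the $L$ levels, plus $\sleft\tleft$ further queries with $\vec{A}^\T$ for the diagonal sketch.
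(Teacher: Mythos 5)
Your proposal is correct and follows essentially the same approach as the paper: the same error/optimum decomposition by levels, the same use of \Cref{thm:level_l_gurantee} to bound the per-block error, the same structural identity that the non-target off-diagonal blocks at level $\ell$ partition exactly into the residuals from earlier levels, and the same handling of the final diagonal step. Your recurrence formulation $S_\ell \le (1+\beta)S_{\ell-1} + (1+\beta)\OPT^{(\ell)}$ is a cleaner repackaging of the paper's explicit bound $\EE[\ERR(\ell)^2]\le\OPT(\ell)^2 + \beta(1+\beta)^{\ell-1}\sum_{\ell'\le\ell}\OPT(\ell')^2$, but the underlying argument is identical.
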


\Cref{thm:main} is an immediate consequence of \cref{thm:main_full}.
\begin{nproof}[Proof of \cref{thm:main}]
We can instantiate  \cref{thm:main_full} with
\begin{equation}\label{eqn:main_thm_params}
\sright = O(k/\beta)
,\quad
\tright = O(1/\beta)
,\quad
\sleft = O(\sright/\beta^2) = O(k/\beta^3)
,\quad
\tleft = 1.
\end{equation}
Write $\ERR^2:=\| \vec{A} - \widetilde{\vec{A}}\|_\F^2$ and $\OPT^2 := \min_{\vec{H}\in\HODLR(k)} \|\vec{A} - \vec{H} \|_\F^2$.
Rearanging \cref{thm:main} we see $\EE[\ERR^2 - \OPT^2] \leq ((1 + \beta)^{L+1}-1) \OPT^2$.
Then, since $\ERR^2 - \OPT^2 \geq 0$, we can apply Markov's inequality and obtain a bound
\[
\PP\Bigl[ \ERR^2 - \OPT^2 > 100 \big((1 + \beta)^{L+1}-1\big) \OPT^2\Big] \leq \frac{1}{100}.
\]
Note that $1+100((1+\beta)^{L+1}-1)\leq (1+100\beta)^{L+1}$ for all $\beta,L > 0$. 
Therefore, $\ERR^2 \leq (1+100\beta)^{L+1} \OPT^2$, except with probability at most $1/100$.
Taking a square root and adjusting $\beta$ by a constant factor gives the result.
\end{nproof}

The parameters in \cref{eqn:main_thm_params} are chosen to minimize the beta dependencies in the resulting bound.
Another interesting parameter setting is 
\begin{equation}\label{eqn:GNpeel_params}
    \sright = O(k/\beta^2)
,\quad
\tright = 1
,\quad
\sleft = O(k/\beta^4)
,\quad
\tleft = 1,
\end{equation}
which corresponds to an implementation of the standard peeling algorithm with the Generalized Nystr\"om Method (i.e. does not use randomly perforated sketching) attaining the same $(1+\beta)^{L+1}$ approximation factor.
In the experiments in \cref{sec:examples}, we find that it does not seem necessary to set $\sleft = O(k/\beta^4)$, and that $\sleft = O(k/\beta^2)$ (with $\tleft=1$) seems sufficient.

\begin{nproof}[Proof of \cref{thm:main_full}]
First, note that at each of the $L$ levels we use $2\sright\tright$ matrix-vector products with $\vec{A}$ and $2\sleft\tleft$ matrix-vector products with $\vec{A}^\T$.
When recovering the diagonal we use an additional $\sleft\tleft$ products with $\vec{A}^\T$.
This results in the specified cost.

We now analyze the error.
For each $\ell = 1,2, \ldots, L$ partition $\vec{A}^{(\ell)}$ as in \cref{eqn:Aell_decomp} and $\vec{H}^{(\ell)}$ as in \cref{eqn:Hell_decomp}.
Define, 
\[
\OPT(\ell) := \bigg( \sum_{j=1}^{2^\ell} \| \vec{A}^{(\ell)}_{\jp,j} - \llbracket \vec{A}^{(\ell)}_{\jp,j} \rrbracket_k \|_\F^2 \bigg)^{1/2}.
\]
By the definition of HODLR matrices the on-diagonal blocks at level $L$ are of size at most $k$.
Therefore, since the different levels are disjoint,
\begin{equation}\label{eqn:opt_expansion}
\min_{\vec{H}\in\HODLR(k)} \| \vec{A} - \vec{H} \|_\F^2
= \OPT(1)^2 + \cdots + \OPT(L)^2.
\end{equation}
Next, define the error incurred in the off-diagonal low-rank blocks at level $\ell$,
\[
\ERR(\ell) := \bigg( \sum_{j=1}^{2^\ell} \| \vec{A}^{(\ell)}_{\jp,j} - \vec{H}^{(\ell)}_j \|_\F^2 \bigg)^{1/2}.
\]
Define also the error of the on-diagonal blocks at the final level,
\[
\widehat{\ERR} := \bigg( \sum_{j=1}^{2^L} \| \vec{A}^{(L)}_{j,j} - \widehat{\vec{H}}_j \|_\F^2 \bigg)^{1/2}.
\]
Since the approximations at each level are disjoint (see \cref{eqn:final_approx}) we have
\begin{equation}\label{eqn:err_expansion}    
    \|\vec{A} - \widetilde{\vec{A}}\|_\F^2
    = \ERR(1)^2 + \cdots + \ERR(L)^2 + \widehat{\ERR}^2.
\end{equation}
We claim that it suffices to prove that for each $\ell=1,2,\ldots, L$,
\begin{equation}\label{eqn:level_l_goal}
\EE\bigl[ \ERR(\ell)^2 \big]
\leq \OPT(\ell)^2 +  \beta \cdot(1+\beta)^{\ell-1} \cdot \bigl( \OPT(1)^2 + \cdots +\OPT(\ell)^2 \bigr).
\end{equation}
and that at the final level 
\begin{equation}\label{eqn:level_q_diag_goal}
\EE\bigl[ \widehat{\ERR}^2 \big]
\leq  \beta \cdot(1+\beta)^{L} \cdot \bigl( \OPT(1)^2 + \cdots + \OPT(L)^2 \bigr).
\end{equation}
Indeed, since $\ERR(\ell')^2\geq 0$, together \cref{eqn:level_l_goal,eqn:level_q_diag_goal} imply that 
\begin{align*}
\EE\bigl[ \ERR(1)^2 + \cdots + \ERR(L)^2 + \widehat{\ERR}^2 \big]
&\leq \Big(1+\beta\cdot\big(1 + (1+\beta) + \cdots + (1+\beta)^{L}\big) \Big)\cdot \Bigl( \OPT(1)^2 + \cdots + \OPT(L)^2 \Bigr)
\\&= ( 1+\beta )^{L+1}\cdot \Bigl( \OPT(1)^2 + \cdots + \OPT(L)^2 \Bigr).
\end{align*}
In light of \cref{eqn:err_expansion,eqn:opt_expansion}, this is the desired result.

We begin by proving \cref{eqn:level_l_goal} by induction. First note that \cref{eqn:level_l_goal} at level $\ell = 1$ we have incurrred no error from previous levels and \cref{thm:level_l_gurantee} gives $\EE[\ERR(1)] \leq (1+\beta) \OPT(1)$. 
Now suppose the analog of \cref{eqn:level_l_goal} holds for each level up to $\ell-1$.
Then, since $\ERR(\ell')^2\geq 0$,
\begin{align}
\EE\bigl[\ERR(1)^2 + \cdots + \ERR(\ell-1)^2  \bigr]
&\leq \big( 1 + \beta\cdot(1 + (1+\beta) + \cdots + (1+\beta)^{\ell-2}) \big) \cdot \big(\OPT(1)^2 + \cdots + \OPT(\ell-1)^2\big)
\nonumber\\&=(1+\beta)^{\ell-1} \cdot \big(\OPT(1)^2 + \cdots + \OPT(\ell-1)^2\big).
\label{eqn:err1_to_errlm1}
\end{align}
Applying \cref{thm:level_l_gurantee} for each $j=1,2, \ldots 2^\ell$ and summing over $j$ we find 
\begin{equation}
\label{eqn:errl}
    \EE\bigl[\ERR(\ell)^2 \big]
    = \EE\bigl[\EE\bigl[\ERR(\ell)^2\big| \mathcal{F}_{\ell} \big]\big]
    \leq \EE\Biggl[\OPT(\ell)^2 + \beta \sum_{j=1}^{2^\ell} E^{(\ell)}_j \Bigg].
\end{equation}
Observe that the sum over all blocks of $\vec{A}^{(\ell)}$ except the on-diagonal blocks and the off-diagonal low-rank blocks at level $\ell$ is expressed as 
\[
    \sum_{j=1}^{2^\ell}\sum_{\substack{i=1\\i\neq j,\jp}}^{2^\ell} \|\vec{A}^{(\ell)}_{i,j}\|_\F^2
    =
    \sum_{j=1}^{2^\ell}\sum_{\substack{i=1\\i\neq j,\jp}}^{2^\ell} \|\vec{A}^{(\ell)}_{\jp,i}\|_\F^2.
\]
Therefore,
\begin{align}
    \sum_{j=1}^{2^\ell} E^{(\ell)}_j
    &\leq \sum_{j=1}^{2^\ell}\Bigg(
    \| \vec{A}^{(\ell)}_{\jp,j} - \llbracket \vec{A}^{(\ell)}_{\jp,j} \rrbracket_k \|_\F^2 
    +
    \sum_{\substack{i=1\\i\neq j,\jp}}^{2^\ell} \frac{1}{2}\|\vec{A}^{(\ell)}_{i,j}\|_\F^2
    +
    \sum_{\substack{i=1\\i\neq j,\jp}}^{2^\ell} \frac{1}{2}\|\vec{A}^{(\ell)}_{\jp,i}\|_\F^2
    \Bigg)
    \nonumber\\&= \OPT(\ell)^2 + \sum_{j=1}^{2^\ell}
    \sum_{\substack{i=1\\i\neq j,\jp}}^{2^\ell} \|\vec{A}^{(\ell)}_{i,j}\|_\F^2
    \nonumber\\&= \OPT(\ell)^2 + \big(\ERR(1)^2 + \cdots + \ERR(\ell-1)^2\big).
    \label{eqn:Elj_sum}
\end{align}
In the final equality above we have used that the $(i,j)$ blocks for $i\neq j,\jp$ contain exactly the errors made at previous levels. 
Hence, taking the expectation of \cref{eqn:Elj_sum}, using \cref{eqn:err1_to_errlm1}, and plugging into \cref{eqn:errl} gives
\begin{align*}
    \EE\bigl[\ERR(\ell)^2 \big]
    &\leq \OPT(\ell)^2 + \beta \big( \OPT(\ell)^2 + (1+\beta)^{\ell-1} \big(\OPT(1)^2 + \cdots + \OPT(\ell-1)^2\big) \big)
    \\&\leq \OPT(\ell)^2 + \beta(1+\beta)^{\ell-1} \big(\OPT(1)^2 + \cdots + \OPT(\ell)^2\big),
\end{align*}
which gives \cref{eqn:level_l_goal}.

The proof of \cref{eqn:level_q_diag_goal} follows analogously. By \cref{thm:level_l_gurantee} we have
\begin{equation*}
    \EE\bigl[\widehat{\ERR}^2] \leq \beta \sum\limits_{j=1}^{2^L}\sum\limits_{\substack{i=1\\i \neq j}}^{2^L} \EE\bigl[\|\widehat{\vec{A}}_{i,j}^{(L)}\|_{\F}^2\bigr] = \beta \sum\limits_{\ell = 1}^{L} \EE\big[\ERR(\ell)^2] \leq \beta(1+\beta)^L(\OPT(1)^2+\cdots+\OPT(L)^2),
\end{equation*}
where we used that $\sum_{j=1}^{2^L}\sum_{i=1,i \neq j}^{2^L} \|\widehat{\vec{A}}_{i,j}^{(L)}\|_{\F}^2$ is precisely the errors made at previous levels and \eqref{eqn:err1_to_errlm1}. This gives the desired inequality. 
\end{nproof}

\section{Lower Bounds}
\label{sec:lower_bounds}
In this section, we prove that our main result on HOLDR matrix approximation (\Cref{thm:main}) is close to optimal in terms of the number of matrix-vector products needed to solve \Cref{prob:approx}. 

\subsection{Lower bound for exact recovery}
We begin with a simple lower bound in the setting where $\vec{A}$ is exactly HODLR, in which case solving \Cref{prob:approx} requires \emph{exactly} recovering $\vec{A}$. 
For this setting, we can appeal to prior work by Halikias and Townsend \cite{HalikiasTownsend:2023} on the query complexity of recovering matrices from \emph{linearly parameterized matrix families}. 
In particular, any set $\vec{B}_1, \ldots, \vec{B}_m$ of \emph{linearly independent} base matrices induces a linearly parameterized family $\mathcal{L}$ consisting of linear combinations of the base matrices; i.e. 
\begin{align*}
\mathcal{L} = \Bigg\{ \sum_{i=1}^m \theta_i \vec{B}_i: \bm{\theta} = [\theta_1, \ldots, \theta_m] \in \R^{m} \Bigg\}
\end{align*}
Halikias and Townsend observe (see \cite{HalikiasTownsend:2023}, Lemma 2.2) that recovering a matrix $\vec{A}\in \mathcal{L}$ requires at least $\lceil m/n\rceil$ matvec queries with $\vec{A}$.
In particular, any matrix-vector product with $\vec{A}$ or $\vec{A}^\T$ yields $n$ linear equations in entries of $\bm{\theta}$.
Since recovering $\vec{A}$ amounts to determining the corresponding $\bm{\theta}$, at least $m$ such equations are needed to uniquely determine $\vec{A}$. 
We leverage this observation to prove the following:
\begin{theorem}\label{thm:hodlr_lower_bound}
    Any algorithm that can recover any $n\times n$ matrix $\vec{A} \in\HODLR(k)$ from adaptive matvec queries must use at least $\lceil k \log_2(n/k)\rceil$ queries.
    Therefore, solving \Cref{prob:approx} requires $\Omega(k \log(n/k))$ matvec queries for any finite approximation factor $\Gamma$.
\end{theorem}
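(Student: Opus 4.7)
The plan is to invoke the Halikias--Townsend observation quoted just before the theorem by exhibiting a linearly parameterized family $\mathcal{L}\subseteq\HODLR(k)$ of large dimension; once $\dim(\mathcal{L})\geq nk\lceil\log_2(n/k)\rceil$, their lemma immediately forces any adaptive matvec algorithm to make at least $\lceil\dim(\mathcal{L})/n\rceil=\lceil k\log_2(n/k)\rceil$ queries to recover a generic element of $\mathcal{L}$, which establishes the first statement.

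For the construction, I would fix, at each level $\ell=1,\ldots,L$ with $L=\lceil\log_2(n/k)\rceil$ and at each of the $2^\ell$ off-diagonal block positions at that level, an arbitrary full-column-rank matrix $\vec{U}^{(\ell)}_{i,j}\in\R^{(n/2^\ell)\times k}$ (e.g., the first $k$ columns of the identity). Then let $\mathcal{L}$ consist of all $n\times n$ matrices whose level-$L$ on-diagonal blocks are zero and whose off-diagonal block at position $(i,j)$ and level $\ell$ has the form $\vec{U}^{(\ell)}_{i,j}\vec{X}^{(\ell)}_{i,j}$ for a free coefficient matrix $\vec{X}^{(\ell)}_{i,j}\in\R^{k\times(n/2^\ell)}$. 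Every element of $\mathcal{L}$ is $\HODLR(k)$ since each off-diagonal block factors as $\vec{U}\vec{X}$ with $\vec{U}$ of rank $k$, and the single-entry variations of the $\vec{X}^{(\ell)}_{i,j}$ provide a natural spanning set.

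I would then count dimensions. At level $\ell$, each of the $2^\ell$ off-diagonal blocks contributes $k\cdot(n/2^\ell)$ free parameters, so $kn$ parameters per level and $knL$ in total. Linear independence of the resulting basis reduces to two observations: (i) off-diagonal block regions at distinct levels occupy pairwise disjoint subrectangles of the $n\times n$ grid, since the level-$\ell$ off-diagonal blocks are nested strictly inside the level-$(\ell-1)$ on-diagonal blocks, so basis matrices supported on different blocks cannot cancel one another; and (ii) within a single block, the rank-one matrices $\vec{u}_a\vec{e}_b^\T$ obtained by varying the entries of $\vec{X}^{(\ell)}_{i,j}$ are linearly independent because the columns $\vec{u}_a$ of $\vec{U}^{(\ell)}_{i,j}$ are linearly independent. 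Hence $\dim(\mathcal{L})=nkL$ and the Halikias--Townsend bound yields $\lceil k\log_2(n/k)\rceil$ queries.

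For the second claim, I would simply note that if $\vec{A}\in\mathcal{L}\subseteq\HODLR(k)$ then $\min_{\vec{H}\in\HODLR(k)}\|\vec{A}-\vec{H}\|_\F=0$, so any algorithm solving \Cref{prob:approx} with a finite approximation factor $\Gamma$ must return $\widetilde{\vec{A}}=\vec{A}$ exactly on such inputs; the recovery lower bound transfers verbatim. The only place requiring care will be the linear-independence check across levels, but the nesting geometry of HODLR off-diagonal regions makes this a one-line observation, so I expect no real obstacle.
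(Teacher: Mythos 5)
Your overall strategy is the same as the paper's: exhibit a linearly parameterized subfamily $\mathcal{L}\subseteq\HODLR(k)$ of dimension roughly $nk\log_2(n/k)$ and invoke the Halikias--Townsend observation. The cross-level linear-independence point you flag as the delicate step is in fact unproblematic, exactly for the disjoint-support reason you give. The genuine issue is elsewhere, in the dimension count at level $L$.

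You set $L=\lceil\log_2(n/k)\rceil$ and posit a full-column-rank $\vec{U}^{(L)}_{i,j}\in\R^{(n/2^L)\times k}$ at the last level. But recall the paper's convention: $n=n_{\text{base}}\cdot 2^L$ with $n_{\text{base}}\in(k/2,k]$, so $n/2^L=n_{\text{base}}$ may be strictly less than $k$. In that case $\vec{U}^{(L)}_{i,j}$ has more columns than rows and cannot be full column rank; a level-$L$ off-diagonal block contributes only $(n/2^L)^2$ free parameters (the entire block, which is automatically rank $\le k$), not $k\cdot(n/2^L)$. So your total is $\dim(\mathcal{L})=nk(L-1)+n\cdot n_{\text{base}}$, not $nkL$. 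The corrected count does still give the theorem, but now you need $k(L-1)+n_{\text{base}}\ge\lceil k\log_2(n/k)\rceil$; writing $k\log_2(n/k)=kL-k\log_2(k/n_{\text{base}})$, this reduces to $\lfloor k\log_2(k/n_{\text{base}})\rfloor\ge k-n_{\text{base}}$, which holds via $\ln(1/r)\ge 1-r$ but is not the immediate arithmetic you present. The paper avoids this by also treating the $2^L$ level-$L$ diagonal blocks (which are unconstrained, since they have size $n_{\text{base}}\le k$) as free, giving $m=2n_{\text{base}}n+nk(L-1)$; since $2n_{\text{base}}>k$, one gets $m>nkL\ge nk\log_2(n/k)$ directly and the ceiling falls out cleanly. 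So: same idea, but your count of $nkL$ is only valid when $n/k$ is a power of two, and in general you either need the above floor argument or should keep the level-$L$ diagonal blocks in the family as the paper does.
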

\begin{proof}
The proof follows from observing that, while $\HODLR(k)$ is not itself a linearly parameterized family, it contains a  linearly parameterized family $\mathcal{L}$. In particular, consider the subset of $\HODLR(k)$ matrices where every rank-$k$ block in the matrix is zero everywhere except in its first $k$ columns. The first $k$ columns in each block can be chosen to have entries with any value such that the block is rank-$k$, and the matrix  is therefore $\HODLR(k)$.  As such, $\mathcal{L}$ is equal to the set of matrices that can be written as a linear combination of $\vec{B}_1, \ldots, \vec{B}_m$, where each $\vec{B}_i$ is a matrix that is zero everywhere, but has a single $1$ in one of the first $k$ columns in one of the rank-$k$ blocks of the HODLR structure. Recall that for an $n\times n$ matrix to be $\HODLR(k)$, it must be that $n = n_{\text{base}}\cdot 2^p$ for integers $n_{\text{base}} \in (k/2,k]$ and $p = \log_2(n/n_{\text{base}})$.\footnote{If $n\leq k/2$ then the bound is vacuously true.}
It can be checked that the total number of base matrices, $m$, equals:
\begin{align*}
m = 2n_{\text{base}}n + nk(p-1) > nkp \geq nk\log_2(n/k)
\end{align*}
So, by Lemma 2.2 in \cite{HalikiasTownsend:2023}, we require $\lceil nk \log_2(n/k)\rceil/ n\rceil  = \lceil k \log_2(n/k)\rceil$ matvec queries to exactly recover a given $\vec{A}\in \mathcal{L}$. Since $\mathcal{L} \subset \HODLR(k)$, the theorem follows. We note that to solve \Cref{prob:approx} for finite $\Gamma$, we must exactly recover all $\vec{A} \in \HODLR(k)$.
\end{proof}

\subsection{Lower bound for approximation}
Next, we prove a lower bound in the setting where $\vec{A}$ is not exactly HODLR, and we seek to solve \Cref{prob:approx} with error $\Gamma = (1+\varepsilon)$ for some $\varepsilon \in (0,1)$. A natural approach for such a lower bound might be to leverage lower bounds for near-optimal rank-$k$ approximation, since HODLR approximation is a strictly harder problem. However, as discussed in \Cref{sec:contributions}, the best known lower bound for Frobenius-norm error rank-$k$ approximation in the matrix-vector query model is just $O(k + 1/\varepsilon^{1/3})$ \cite{BakshiNarayanan:2023}. Such a lower bound does not show that the multiplicative relationship between $k$ and $\poly(1/\varepsilon)$ in our upper bound, \Cref{thm:main}, is necessary. 

We obtain a stronger lower bound by instead proving a reduction from the problem of \emph{fixed-pattern sparse matrix approximation}, which was recently studied in \cite{AmselChenDumanKelesHalikiasMuscoMusco:2024}. That work proves tight lower bounds in the matrix-vector query model for approximating matrices by a wide variety of matrix classes with fixed sparsity patterns, including diagonal matrices, banded matrices, and more. 
Below, we state a special case of the lower bound from \cite{AmselChenDumanKelesHalikiasMuscoMusco:2024} for approximation by \emph{block diagonal matrices}. This lower bound will be used  to obtain our lower bound for HODLR approximation. For integers $n$ and $b$, such that $b$ divides $n$, we let $\mathcal{B}(n,b)$ denote the set of $n\times n$ block-diagonal matrices with blocks of size $b\times b$. We have the following:

\begin{lemma}[{Corollary of Thm. 2 from \cite{AmselChenDumanKelesHalikiasMuscoMusco:2024}\protect\footnote{The result in \cite{AmselChenDumanKelesHalikiasMuscoMusco:2024} is stated for a particular choice of $n_0 = \Theta(b/\epsilon)$, but it can be seen to hold for all $n > n_0$ as well by simply padding the matrices in the hard input distribution with zeros to enlarge their size.}}]
\label{lem:block_diag}
There are absolute constants $c,C>0$ such that the following holds:
For any $\varepsilon > 0$ and any positive integers $b,n$ such that $b$ divides $n$ and $n \geq cb/\varepsilon$, there is a distribution over $n\times n$ matrices $\vec{A}$ such that any algorithm that accesses $\vec{A}$ with $\leq C b/\varepsilon$ adaptive matvec queries and returns an approximation $\widetilde{\vec{B}}\in \mathcal{B}(n,b)$ must have $\|\vec{A} - \widetilde{\vec{B}}\|_\F \geq (1+\varepsilon)\min_{\vec{B}\in \mathcal{B}(n,b)}\|{\vec{B}} - \vec{A}\|_\F$ with probabilty $\geq 24/25$.
The probability is taken over the randomness of $\vec{A}$, as well as possible randomness in the algorithm.
\end{lemma}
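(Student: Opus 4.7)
The plan is to derive \Cref{lem:block_diag} from the cited theorem of \cite{AmselChenDumanKelesHalikiasMuscoMusco:2024} via a zero-padding reduction. The main substantive content, namely the existence of a hard distribution for some specific $n_0 = \Theta(b/\varepsilon)$, is already provided by that prior work. The only additional step here is extending the lower bound to all $n \geq cb/\varepsilon$ with $b \mid n$.

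First, I would invoke the cited theorem to obtain a hard distribution $\mathcal{D}_0$ over $n_0 \times n_0$ matrices, for $n_0 = \Theta(b/\varepsilon)$. Then, for any $n \geq n_0$ with $b \mid n$, I would construct a distribution $\mathcal{D}$ over $n \times n$ matrices by drawing $\vec{A}_0 \sim \mathcal{D}_0$ and embedding it into the top-left corner of an otherwise-zero matrix, namely $\vec{A} = \bigl[\begin{smallmatrix} \vec{A}_0 & \vec{0} \\ \vec{0} & \vec{0} \end{smallmatrix}\bigr]$.

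The reduction hinges on three straightforward observations. \emph{Query simulation:} for any $\vec{x} \in \R^n$, the product $\vec{A}\vec{x}$ is obtained by applying $\vec{A}_0$ to the top $n_0$ entries of $\vec{x}$ and padding with zeros, and the analogous statement holds for $\vec{A}^\T$, so any adaptive matvec-query algorithm for $\vec{A}$ can be simulated on $\vec{A}_0$ using the same number of queries. \emph{OPT preservation:} since $b \mid n_0$, the block-diagonal structure of $\mathcal{B}(n,b)$ restricts cleanly to $\mathcal{B}(n_0,b)$ on the upper-left corner and to zero blocks elsewhere, so the optimal approximation to $\vec{A}$ places the optimal approximation of $\vec{A}_0$ in the top-left and zeros elsewhere, yielding $\min_{\vec{B} \in \mathcal{B}(n,b)}\|\vec{A}-\vec{B}\|_\F = \min_{\vec{B}_0 \in \mathcal{B}(n_0,b)}\|\vec{A}_0 - \vec{B}_0\|_\F$. \emph{Error dominance:} for any candidate $\widetilde{\vec{B}} \in \mathcal{B}(n,b)$, letting $\widetilde{\vec{B}}_0$ denote its top-left $n_0 \times n_0$ principal submatrix (which itself lies in $\mathcal{B}(n_0,b)$), blockwise Pythagoras gives $\|\vec{A}-\widetilde{\vec{B}}\|_\F^2 \geq \|\vec{A}_0 - \widetilde{\vec{B}}_0\|_\F^2$.

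Combining the three observations closes the reduction. If some algorithm used only $Cb/\varepsilon$ adaptive matvec queries to $\vec{A}$ and returned a $(1+\varepsilon)$-optimal approximation with probability exceeding $1/25$, then by query simulation we could run it on $\vec{A}_0 \sim \mathcal{D}_0$ with no query overhead, extract $\widetilde{\vec{B}}_0$, and — by error dominance together with OPT preservation — obtain a $(1+\varepsilon)$-optimal block-diagonal approximation to $\vec{A}_0$ with the same success probability, contradicting the cited theorem. I expect the main obstacle to be purely bookkeeping: tuning the absolute constants $c,C$ so that $n \geq cb/\varepsilon$ forces $n \geq n_0$, and ensuring the divisibility $b \mid n_0$ is compatible with $b \mid n$. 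No new technical ideas beyond the standard padding trick are needed.
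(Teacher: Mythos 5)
Your proposal is correct and is precisely the zero-padding reduction the paper itself alludes to in the footnote of \Cref{lem:block_diag}; the paper does not give a more detailed proof, so you've simply spelled out the same argument (query simulation, optimum preservation via $b \mid n_0$, and blockwise Pythagoras) that the authors leave implicit.
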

\Cref{lem:block_diag} establishes that, even to succeed with small positive probability, any algorithm for computing a near-optimal block-diagonal approximation  to an arbitrary matrix $\vec{A}$ requires $\Omega(b/\varepsilon)$ matvec queries to $\vec{A}$.
Intuitively this result is useful because block diagonal approximation is an \emph{easier} problem than HODLR approximation. 
In particular, it is not hard to verify that $\mathcal{B}(n,2k)\subset \HODLR(k)$; matrices in $\mathcal{B}(n,2k)$ are zero except on the diagonal and off-diagonal low-rank blocks of the final level of $\HODLR(k)$ matrices. Formally, we prove that, if we had an algorithm for finding a near-optimal HODLR approximation to a given matrix, the result could be post-processed via projection onto $\mathcal{B}(n,2k)$ to obtain a near-optimal block-diagonal approximation. If we found the near-optimal HODLR approximation with $o(k/\epsilon)$ matvec queries, we would violate \Cref{lem:block_diag}. This approach results in the following lower bound:

\begin{theorem}
\label{thm:lower_bound_inexact}
There are absolute constants $c,C>0$ such that the following holds:
For any $k$, $\varepsilon > 0$, and $n \geq ck/\varepsilon$,\footnote{Recall that $\HODLR(k)$ matrices are only defined for dimensions $n$ of the form $n = n_{\text{base}}2^p$ for $n_{\text{base}} \leq k$ and $p\geq 0$. Formally, the theorem holds for any such $n$ that is $\geq ck/\varepsilon$.} there is a distribution over $n\times n$ matrices $\vec{A}$ such that any algorithm which accesses $\vec{A}$ with $\leq C k/\varepsilon$ adaptive matvec queries fails to solve \cref{prob:approx} with probability $\geq 24/25$.
\end{theorem}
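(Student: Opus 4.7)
The plan is a reduction from the block-diagonal approximation lower bound of \Cref{lem:block_diag}, instantiated with block size $b = 2k$. The foundational observation is that $\mathcal{B}(n,2k) \subset \HODLR(k)$: any matrix supported only inside the $(2k)\times(2k)$ diagonal blocks is a $\HODLR(k)$ matrix whose off-diagonal low-rank blocks are all identically zero and hence trivially rank-$k$. Consequently $\OPT_H \leq \OPT_B$ for every input, where $\OPT_H := \min_{\vec{H}\in\HODLR(k)}\|\vec{A}-\vec{H}\|_\F$ and $\OPT_B := \min_{\vec{B}\in\mathcal{B}(n,2k)}\|\vec{A}-\vec{B}\|_\F$.

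Assume for contradiction that an algorithm $\mathcal{A}$ uses $T$ adaptive matvec queries, outputs $\widetilde{\vec{A}}\in\HODLR(k)$, and solves \Cref{prob:approx} with $\Gamma = 1+\varepsilon$ and success probability $>1/25$ on whatever distribution I feed it. I would post-process $\widetilde{\vec{A}}$ by zeroing every entry outside the $(2k)$-block-diagonal pattern to obtain $\widetilde{\vec{B}} \in \mathcal{B}(n,2k)$. Because the on-block-diagonal and off-block-diagonal supports are disjoint, the Frobenius norm splits, so
\[
\|\vec{A}-\widetilde{\vec{B}}\|_\F^2 \;=\; \|\vec{A}_{\mathrm{off}}\|_\F^2 + \|\vec{A}_{\mathrm{on}} - \widetilde{\vec{A}}_{\mathrm{on}}\|_\F^2 \;\leq\; \OPT_B^2 + (1+\varepsilon)^2 \OPT_H^2.
\]
If the hard distribution satisfies $\OPT_H^2 \leq c\varepsilon \cdot \OPT_B^2$ for a small constant $c$, then $\widetilde{\vec{B}}$ is a $(1+O(\varepsilon))$-optimal block-diagonal approximation computed in $T$ queries. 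Applying \Cref{lem:block_diag} with a suitably rescaled accuracy parameter (and zero-padding, as in its footnote, to reach any $n \geq ck/\varepsilon$) then forces $T = \Omega(k/\varepsilon)$, which together with the claimed success probability contradicts the conclusion of \Cref{lem:block_diag}.

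The main obstacle is ensuring that the input distribution can be chosen to satisfy $\OPT_H \ll \OPT_B$ while still being hard for block-diagonal approximation: without this, the projection bound above only yields a constant-factor block-diagonal approximation, a regime in which \Cref{lem:block_diag} gives no nontrivial lower bound. I would address this either by appealing to the precise form of the hard distribution from \cite{AmselChenDumanKelesHalikiasMuscoMusco:2024}---whose off-block-diagonal mass I expect is placed in a structured way that is compressible by the richer HODLR class even though the on-block-diagonal randomness remains hard to learn---or by modifying that distribution with a deterministic off-block-diagonal perturbation whose restriction to each HODLR off-diagonal block has rank at most $k$. Such a perturbation is absorbed almost entirely by $\HODLR(k)$ (keeping $\OPT_H$ small) while inflating $\OPT_B = \|\vec{A}_{\mathrm{off}}\|_\F$; because it is chosen independently of the on-block-diagonal entries and can be simulated by the learner using no additional queries, the $\Omega(k/\varepsilon)$ block-diagonal query lower bound transfers to the modified distribution. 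Verifying that the two properties---small $\OPT_H/\OPT_B$ and preserved block-diagonal hardness---can be satisfied simultaneously is the crux of the argument.
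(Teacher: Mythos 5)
Your reduction scaffold---embed a block-diagonal family $\mathcal{B}(n,b)\subset\HODLR(k)$, project the HODLR output onto the block-diagonal support, and invoke \Cref{lem:block_diag}---is exactly the route the paper takes. But your analysis of the projection step loses a constant factor that the paper avoids, and the way you propose to recover it (an extra distributional property $\OPT_H^2 \leq c\varepsilon\,\OPT_B^2$) is neither established nor actually needed.

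Concretely, your bound
\[
\|\vec{A}-\widetilde{\vec{B}}\|_\F^2 \;\leq\; \OPT_B^2 + (1+\varepsilon)^2\,\OPT_H^2
\]
is too weak on its own: the only relation you have is $\OPT_H \leq \OPT_B$, so the right-hand side can be as large as $(2+2\varepsilon+\varepsilon^2)\OPT_B^2$, i.e.\ roughly a $\sqrt{2}$-factor block-diagonal approximation, a regime where \Cref{lem:block_diag} says nothing. You correctly flag this gap, but the fix you float---that the Wishart-style hard distribution of \cite{AmselChenDumanKelesHalikiasMuscoMusco:2024} can be arranged to have $\OPT_H \ll \OPT_B$---is speculative and would need its own proof; your own discussion marks it as the ``crux'' you have not resolved.

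The observation you are missing closes the argument deterministically, with no assumption on the distribution at all. Because the entries of a $\HODLR(k)$ matrix inside the finest-level diagonal blocks (the block-diagonal support $\vec{S}$) are completely unconstrained, you may replace the on-support entries of the output $\widetilde{\vec{H}}$ by those of $\vec{A}$ and stay in $\HODLR(k)$; the error of the resulting HODLR matrix is exactly the off-support error of $\widetilde{\vec{H}}$, which must therefore be at least the optimal HODLR error:
\[
\|\vec A\circ\bar{\vec S} - \widetilde{\vec H}\circ\bar{\vec S}\|_\F^2 \;\geq\; \OPT_H^2.
\]
Substituting this into the orthogonal split of $\|\vec A - \widetilde{\vec H}\|_\F^2 \leq (1+\varepsilon)^2\OPT_H^2$ gives the far tighter on-support bound
\[
\|\vec A\circ\vec S - \widetilde{\vec H}\circ\vec S\|_\F^2 \;\leq\; (2\varepsilon+\varepsilon^2)\,\OPT_H^2 \;\leq\; (2\varepsilon+\varepsilon^2)\,\OPT_B^2,
\]
and hence $\|\vec A - \widetilde{\vec H}\circ\vec S\|_\F^2 = \OPT_B^2 + \|\vec A\circ\vec S - \widetilde{\vec H}\circ\vec S\|_\F^2 \leq (1+\varepsilon)^2\OPT_B^2$, exactly the $(1+\varepsilon)$-optimal block-diagonal guarantee \Cref{lem:block_diag} requires. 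As a secondary correction, the block size should be $b = 2n_{\text{base}}$ rather than $2k$ so that $b$ divides $n = n_{\text{base}}2^p$ and the block-diagonal pattern aligns with the leaves of the HODLR hierarchy; one still has $b \geq k$ since $n_{\text{base}} > k/2$.
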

\begin{proof}
First note that we may assume $\epsilon \leq 1$, as the result already holds by \Cref{thm:hodlr_lower_bound} for larger values of $\epsilon$.
Let $n = n_{\text{base}}2^p$ for $n_{\text{base}} \in \lfloor k/2 + 1\rfloor, \ldots, k$ and $p\geq 0$.
Let $b = 2n_\text{base}$ and observe that $b \geq k$, and for $c \geq 2$, $b \leq n$.
Let $\vec{S} \in \{0,1\}^{n\times n}$ be an indicator matrix for a block-diagonal sparsity pattern with $n/b$ blocks of size $b$. 
I.e., $\vec{S}$ is zero everywhere except that, for each $t\in 0, \ldots, n/b - 1$, $\vec{S}_{i,j} = 1$ for $i,j \in \{bt + 1, \ldots, bt+b\}$. Observe that for any block diagonal matrix $\vec{B}\in \mathcal{B}(n,b)$, $\vec{B}\circ \vec{S} = \vec{B}$, where ``\:$\circ$\:'' denotes the entrywise product. Let $\bar{\vec{S}}$ denote the compliment of $\vec{S}$: for all $i,j$, $\bar{\vec{S}}_{i,j} = 1-\vec{S}_{i,j}$. Observe that for
any matrix $\vec{H}$, we can write:
\begin{align}
\label{eq:bd_decomp}
    \|\vec A - \vec  H\|_\F^2 = \|\vec A \circ \vec S- \vec H \circ \vec S\|_\F^2 +  \|\vec A \circ \Bar{\vec S}- \vec H \circ \Bar{\vec S}\|_\F^2.
\end{align}
Define
\[
\vec B^* = \vec{A}\circ \vec{S}\argmin_{\vec B\in \mathcal{B}(n,b)} \|\vec A- \vec B\|_\F^2 
,\qquad 
\vec H^*=\argmin_{\vec H\in \HODLR(k)} \|\vec A- \vec H\|_\F^2.\]
Since $\mathcal{B}(n,b)\subset \HODLR(k)$, we have that $\|\vec{A} - \vec{H}^*\|_\F \leq \|\vec{A} - \vec{B}^*\|_\F$.

Our proof approach is to show that, if we can solve \Cref{prob:approx} with error $\Gamma = (1+\varepsilon)$ using $m$ matrix-vector products, i.e. if we can find some $\widetilde{\vec{H}}\in \HODLR(k)$ that satisfies
\begin{align}
\label{eq:condition_bd}
\|\vec{A} - \widetilde{\vec{H}}\|_\F \leq (1+\varepsilon)\|\vec{A} - \vec{H}^*\|_\F,
\end{align}
then $\widetilde{\vec{H}}\circ \vec{S}$ is a near-optimal block diagonal approximation to $\vec{A}$. 
Concretely, we will show that 
\begin{align}
\label{eq:to_prove_bd}
\|\vec{A} - \widetilde{\vec{H}}\circ \vec{S}\|_\F \leq (1+\varepsilon)\|\vec{A} - \vec{B}^*\|_\F.
\end{align}
Accordingly, we reach a contradiction to \Cref{lem:block_diag} unless $m = \Omega(k/\varepsilon) = \Omega(b/\varepsilon)$ matrix-vector products were used to compute $\widetilde{\vec{H}}$.

To see why \eqref{eq:condition_bd} implies \eqref{eq:to_prove_bd}, the main observation is that:
\begin{align}
\label{eq:main_observe_bd}
\|\vec A \circ \Bar{\vec S}- \widetilde{\vec{H}} \circ \Bar{\vec S}\|_\F^2 \geq \|\vec A- \vec H^*\|_\F^2.
\end{align}
\eqref{eq:main_observe_bd} follows from the fact that the entries in the bottom level of a $\HODLR(k)$ matrix (those corresponding to the ones in $\vec{S}$) can be set arbitrarily without violating the $\HODLR$ property. Accordingly, by adjusting those entries to match $\vec{A}$ exactly, we can obtain a HODLR approximation with error equal to $\|\vec A \circ \Bar{\vec S}- \widetilde{\vec{H}} \circ \Bar{\vec S}\|_\F^2$. The optimal approximation $\vec{H}^*$ cannot have larger error.

Then, using \eqref{eq:condition_bd}, \eqref{eq:bd_decomp}, and \eqref{eq:main_observe_bd}, we have:
    \begin{align*}
        (1+\varepsilon)^2 \|\vec A- \vec H^*\|_\F^2 \geq \|\vec A- \widetilde{\vec{H}} \|_\F^2 &= \|\vec A \circ \vec S- \widetilde{\vec{H}} \circ \vec S\|_\F^2 +  \|\vec A \circ \Bar{\vec S}- \widetilde{\vec{H}} \circ \Bar{\vec S}\|_\F^2 \\
        &\geq \|\vec A \circ \vec S- \widetilde{\vec{H}} \circ \vec S\|_\F^2 + \|\vec A- \vec H^*\|_\F^2.
    \end{align*}
    We conclude that
    \begin{align*}
    \|\vec A \circ \vec S- \widetilde{\vec{H}} \circ \vec S\|_\F^2 \leq (2\varepsilon + \varepsilon^2)\|\vec A- \vec H^*\|_\F^2 \leq (2\varepsilon + \varepsilon^2)\|\vec A- \vec B^*\|_\F^2.
    \end{align*}
    Moreover, $\|\vec A - \widetilde{\vec{H}} \circ \vec S\|_\F^2 = \|\vec A \circ \vec S- \widetilde{\vec{H}} \circ \vec S\|_\F^2 + \|\vec{A} - \vec{A}\circ\vec{S}\|_\F^2$, and recall that $\vec{A}\circ\vec{S} = \vec{B}^*$, So:
    \begin{align*}
    \|\vec A - \widetilde{\vec{H}} \circ \vec S\|_\F^2 \leq (2\varepsilon + \varepsilon^2)\|\vec A- \vec B^*\|_\F^2 + \|\vec A- \vec B^*\|_\F^2 = (1+\varepsilon)^2\|\vec A- \vec B^*\|_\F^2.
    \end{align*}
    Taking a square root on both sides proves \eqref{eq:to_prove_bd}, and as explained above, \Cref{thm:lower_bound_inexact} follows.
\end{proof}
We conclude the section by noting that \cref{thm:lowerbd} (stated in \Cref{sec:intro}) follows from combining \Cref{thm:hodlr_lower_bound} and \Cref{thm:lower_bound_inexact}. An interesting question for future work is to prove a lower bound that multiplicatively combines $k$, $\log(n/k)$, and $1/\varepsilon$; e.g., to prove that $m = \Omega(k\log(n/k)/\varepsilon)$ matrix-vector product queries are necessary to solve \Cref{prob:approx}. This is currently beyond the reach of our current approach and that of \cite{AmselChenDumanKelesHalikiasMuscoMusco:2024}. In particular, the lower bound instance in \cite{AmselChenDumanKelesHalikiasMuscoMusco:2024} is based on a random Wishart matrix, which has found applications in a number of prior results on lower bounds for adaptive matrix-vector query algorithms \cite{BravermanHazanSimchowitzWoodworth:2020}. It can be checked that a constant factor near-optimal HODLR approximation for such a matrix can be found by simply returning a near-optimal block diagonal approximation for block size $O(k)$. Since that can be done with $O(k)$ matrix-vector products using the algorithm from \cite{AmselChenDumanKelesHalikiasMuscoMusco:2024}, we cannot hope to use the Wishart instance to prove a lower bound that combines $k$ and $\log(n/k)$.

\section{Numerical experiments and examples}
\label{sec:examples}

In this section we provide several examples which provide insight into the behavior of peeling-based algorithms.
The code used to generate our figures is available at \url{https://github.com/tchen-research/HODLR_approx}.

In our experiments we consider two parameter choices for the Generalized Nystr\"om Method based method \cref{alg:main} and two parameter choices for a Randomized SVD based method \cref{alg:RSVD_peel} described in \cref{sec:RSVD_peel}.
The parameter values are summarized in \cref{tab:alg_params}.
The aim of our numerical experiments is to provide some basic insight into how the various parameters of peeling algorithms impact their performance. 
There are many reasonable parameter combinations, and a comprehensive understanding of the best choice of parameters is far beyond the scope of this paper, but would be an interesting topic for future work.

\begin{table}[ht]
    \centering
    \begin{tabular}{cccccccc}
         \toprule
         name & style & algorithm & $\sright$ & $\tright$ & $\sleft$  & $\tleft$ & \# matvecs \\
         \midrule
         GN1&\includegraphics[scale=.7]{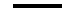}&\cref{alg:main} & $k/\beta$ & 1  & $k/\beta^2$& 1 & $O(k/\beta^2)$\\
         GN2&\includegraphics[scale=.7]{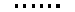}&\cref{alg:main} & $k/\beta$ & $1/\beta$ & $k/\beta^2$ & 1 & $O(k/\beta^2)$\\
         RSVD1&\includegraphics[scale=.7]{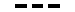}&\cref{alg:RSVD_peel} & $k/\beta$ & 1 & $\sim$ & 1 & $O(k/\beta)$ \\
         RSVD2&\includegraphics[scale=.7]{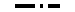}&\cref{alg:RSVD_peel} & $k/\beta$ & $1/\beta$ & $\sim$ & $1/\beta$ & $O(k/\beta^2)$ \\
         \bottomrule
    \end{tabular}
    \caption{Parameter choices used in our numerical experiments.}
    \label{tab:alg_params}
\end{table}
The RSVD1 and GN1 methods do not use random perforated sketches and can therefore be viewed as standard implementations of the peeling algorithm (e.g. as described in \cite{LinLuYing:2011,Martinsson:2016}).
The RSVD1 method uses the same sketching dimensions as required to obtain a $(1+O(\beta))$-optimal rank-$k$ approximation to a matrix \cite{HalkoMartinssonTropp:2011}.
Similarly, the GN1 method uses the sketching dimensions required for the Generalized Nystr\"om Method (without truncation to rank-$k$) to produce a low-rank approximation with error less than $(1+O(\beta))$ times the optimal rank-$k$ approximation to a matrix \cite{TroppYurtzeverUdellCevher:2017}.\footnote{As we discuss in \cref{sec:perturb_GNYS}, it is believed that this also produces a $(1+O(\beta))$-optimal approximation with truncation.}
The choice of paramaters for GN1 is also equivalent (after rescaling $\beta$) to the parameters \cref{eqn:GNpeel_params} needed for \cref{thm:main_full} to guarantee convergence for Generalized Nystr\"om peeling method without perforation.

Both RSVD2 and GN2 use the randomly perforated sketches described in \cref{sec:techniques}.
In particular, the GN2 method adds perforation to the sketch used to obtain the approximate range.
Compared to GN1, this does not increase the asymptotic cost but improves the bound for $\Gamma$ which can be obtained from \cref{thm:main_full}.
Interestingly, in our numerical experiments, the accuracy of GN1 and GN2 is very similar.
The RSVD2 method adds perforation to both sketches.
This increases the asymptotic cost over RSVD1. 
However, as illustrated in \cref{sec:hard_instance}, regardless of $\beta$, RSVD1 cannot solve \cref{prob:approx} for arbitrary $\Gamma>1$. 
On the other hand, RSVD2 can.

For an approximation $\widetilde{\vec{A}}$ to $\vec{A}$, we will consider the relative and absolute errors defined by:
\[
    \textup{relative error:}~ \frac{\|\vec{A} - \widetilde{\vec{A}}\|_\F - \|\vec{A} - \vec{A}^\star\|_\F^2}{\|\vec{A} - \vec{A}^\star\|_\F^2}
    ,\qquad
    \textup{absolute error:}~ \|\vec{A} - \widetilde{\vec{A}}\|_\F,
\]
where $\vec{A}^\star := \min_{\vec{H}\in\HODLR(k)}\|\vec{A} - \vec{H}\|_\F^2$ is the best possible $\HODLR(k)$ approximation to $\vec{A}$.
When $\widetilde{\vec{A}}$ is $\HODLR(k)$, the relative error corresponds to the smallest value of $\varepsilon$ for which \cref{prob:approx} is solved with $\Gamma = (1+\varepsilon)$.

\subsection{Poisson's equation}

In this example, we take $\vec{A}$ as the discretized solution operator to a  differential equation. 
In particular, we consider the 2-dimensional Poisson's equation 
\[
\frac{\partial^2}{\partial x^2} u(x,y) + \frac{\partial^2}{\partial y^2} u(x,y) = f(x,y)
\]
on a periodic domain $(x,y)\in[0,1]^2$ (i.e., with boundary conditions $u(x,0) = u(x,1)$ and $u(0,y) = u(1,y)$.
We discretize the problem on a uniform $t\times t$ grid $x_i = i/t$, $y_j = j/t$ for $i,j=0,1,\ldots, t-1$ (so that $n=t^2$).
The matrix $\vec{A}$ is defined as the $n\times n$ linear map taking forcing data $\vec{f} = \{ f(x_i,y_j)\}_{i,j=1}^{t}\in\mathbb{R}^{n}$ to the solution approximate $\vec{u} = \{ u_{i,j} \}_{i,j=1}^{t}$.
Matrix-vector products with $\vec{A}$ (and $\vec{A}^\T$) can be efficiently computed using a FFT-based 2D Poisson solver.
Specifically, given $\vec{f} \in \mathbb{R}^n = \mathbb{R}^{t\times t}$, we define $\vec{A}$ by
\begin{equation}
\label{eqn:poisson_def}
    \vec{A}\vec{f} 
= 
\operatorname{IDFT}_2(\vec{D} \operatorname{DFT}_2(\vec{f})),
\end{equation}
where $\vec{D}:\mathbb{R}^{t\times t} \to \mathbb{R}^{t\times t}$ is the diagonal map with diagonal entries $\vec{D}_{i,j} = -1/(\kappa_i^2 + \kappa_j^2)$, where the harmonics $\kappa_i$ are defined by $\kappa_i = i$ if $i\leq t/2-1$ and $\kappa_i = m-t$ if $i>t/2-1$.
Here $\operatorname{DFT}_2$ and $\operatorname{IDFT}_2$ are respectively the 2-dimensional Discrete Fourier Transform and Inverse Discrete Fourier Transform.
In our experiment we set $t=32$ so that $n = t^2 = 1024$.

\begin{figure}[tb]
    \centering
    \includegraphics[scale=.7,trim={0 0 0 0cm},clip]{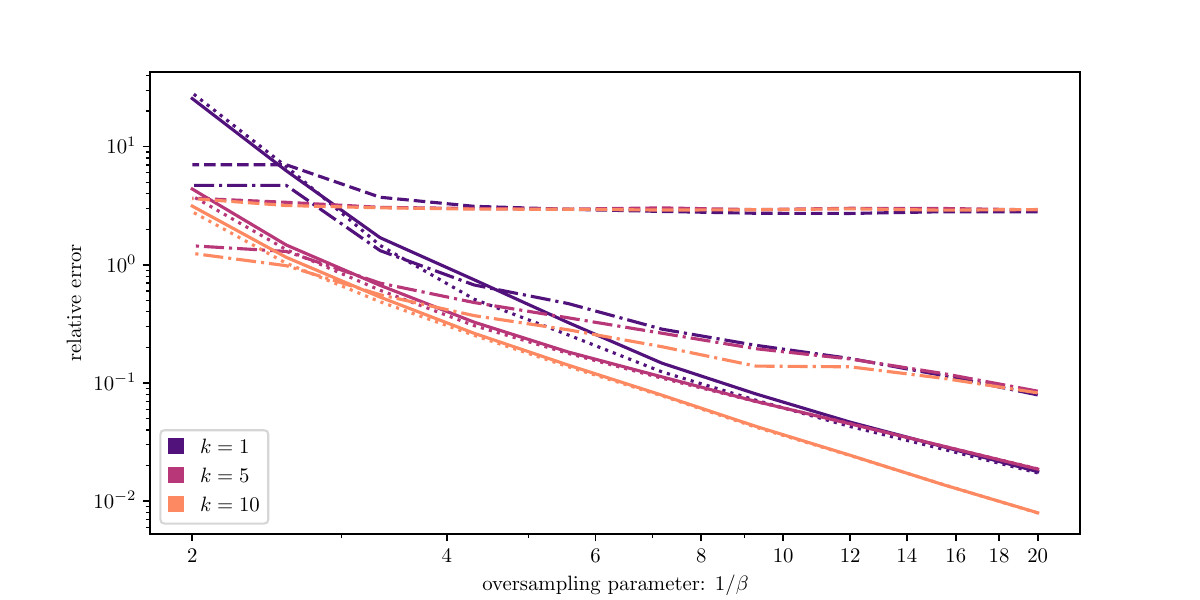}
    \caption{Relative error of peeling algorithms from \cref{tab:alg_params} on discrete inverse Poisson matrix described in \cref{eqn:poisson_def} as a function of the the oversampling parameter $\beta$ for several choices of the rank-parameter $k$.
    \emph{Legend}:
    GN1 (\includegraphics[scale=.7]{imgs/legend/solid.pdf}),
    GN2 (\includegraphics[scale=.7]{imgs/legend/dot.pdf}),
    RSVD1 (\includegraphics[scale=.7]{imgs/legend/dash.pdf}),
    RSVD2 (\includegraphics[scale=.7]{imgs/legend/dashdot.pdf}).
    \emph{Takeaway}: Both Generalized Nystr\"om Method-based variants seem to perform similarly, and produce an increasingly accurate output as $1/\beta$ increases. 
    On the other hand, the standard randomized SVD-based variant, RSVD1, stagnates.
    The RSVD2 variant, which uses randomly perforated sketches, converges. 
    }
    \label{fig:poisson_oversample}
\end{figure}

\begin{figure}[htb]
    \centering
    \includegraphics[scale=.7,trim={0 0 0 0cm},clip]{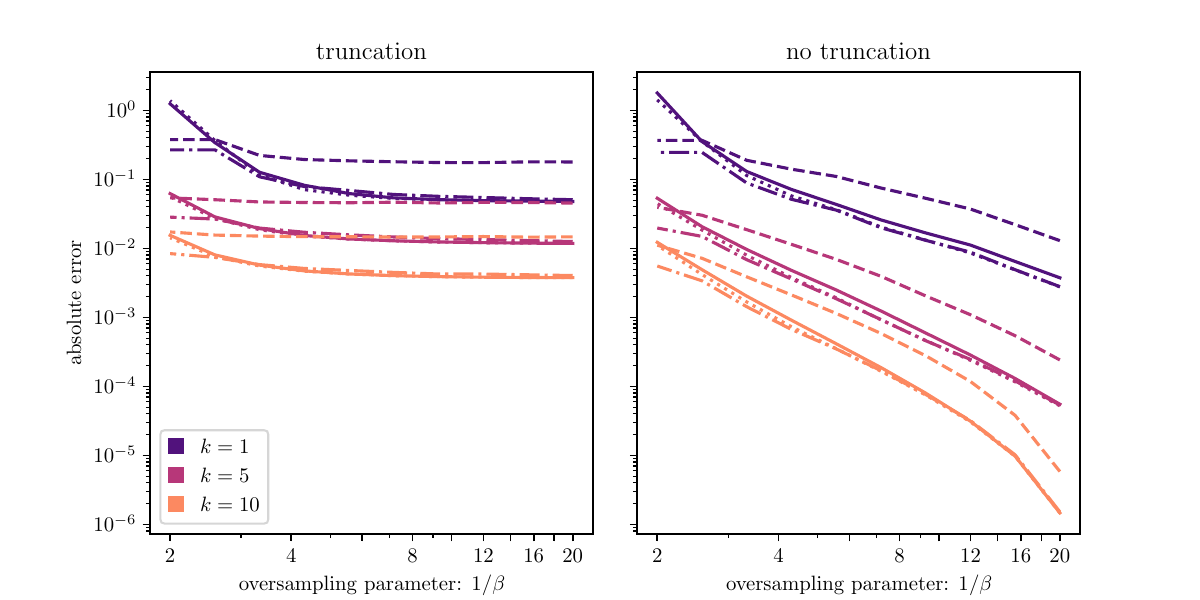}
    \caption{Absolute error of peeling algorithms from \cref{tab:alg_params} on discrete inverse Poisson matrix described in \cref{eqn:poisson_def} as a function of the oversampling parameter $\beta$ with and without truncation to rank-$k$ for several choices of the rank-parameter $k$.
    \emph{Legend}:
    GN1 (\includegraphics[scale=.7]{imgs/legend/solid.pdf}),
    GN2 (\includegraphics[scale=.7]{imgs/legend/dot.pdf}),
    RSVD1 (\includegraphics[scale=.7]{imgs/legend/dash.pdf}),
    RSVD2 (\includegraphics[scale=.7]{imgs/legend/dashdot.pdf}).
    \emph{Takeaway}:
    Without truncation, all of the algorithms can perform significantly better. 
    However, the resulting approximations are not $\HODLR(k)$, and are therefore more expensive to store and work with.
    }
    \label{fig:poisson_oversample_nt}
\end{figure}

In \cref{fig:poisson_oversample} we show the relative error of the algorithms from \cref{tab:alg_params} for various ranks $k$ as a function of the oversampling parameters $\beta$ (averaged over 20 trials). 
As expected, as $\beta \to 0$, the relative errors of GN1, GN2, and RSVD1 all converge to zero. 
On the other hand, RSVD1 stagnates. 
In the left panel of \cref{fig:poisson_oversample_nt} we show the absolute error. 
Here we see that while the RSVD1 algorithm has a much higher relative error than the other algorithms, the absolute error is not significantly larger. 
Finally, in the right panel of \cref{fig:poisson_oversample_nt} we show the absolute error of the algorithms without truncation to rank-$k$. 
This results in a much better approximations for the same number of matrix-vector products, but the resulting approximations have HODLR rank larger than $k$.
The best tradeoff between HODLR rank and matvecs depends on the problem at hand and the computing environment.

\begin{figure}[b]
    \centering
    \includegraphics[scale=.7]{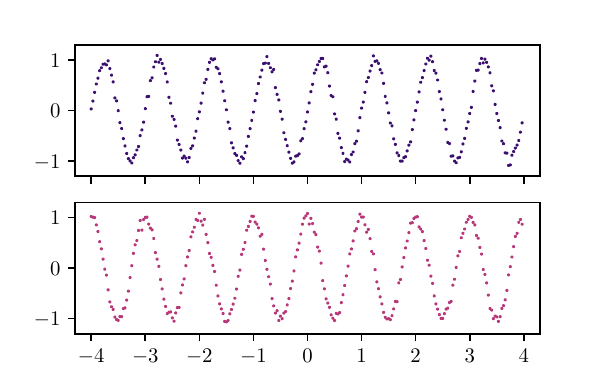}
    \hspace{1cm}
    \includegraphics[scale=.7]{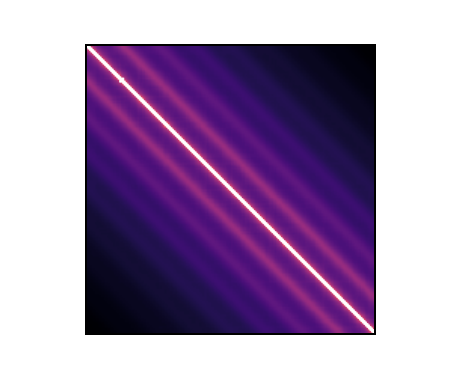}
    \caption{\emph{Top/Bottom Left}: $(x_i,y_i)$ and $(x_i,z_i)$, ordered by $x$-value.
    \emph{Right}: Log-magnitude of entries of $\vec{A}$ defined in \cref{eqn:kernel_def}.
    In both plots we subsample the data to $256$ points for visual clarity.}
    \label{fig:distances_setup}
\end{figure}

\subsection{Kernel Matrix}

In this example we consider a kernel matrix $\vec{A}$ defined by
\begin{equation}
    \label{eqn:kernel_def}
[\vec{A}]_{i,j} 
= \begin{cases}
1/\|\vec{x}_i - \vec{x}_j \|_2 & i\neq j \\
0 & i=j
\end{cases},
\end{equation}
where $\{\vec{x}_i\}_{i=1}^n$ are a collection of points.
Matrix-vector products with matrices such as $\vec{A}$ can be efficiently performed using algorithms such as the Fast Multipole Method \cite{GreengardRokhlin:1987}. 

In our numerical experiments, we set $n=16384$ and sample points $\vec{x}_i = (x_i,y_i,z_i)$ by taking $x_i$ as uniformly spaced points in $[-4,4]$, $y_i = \sin(2\pi x_i) + 0.05 \xi_i$, and $z_i = \cos(2\pi x_i) + 0.05 \zeta_i$, where $\xi_i$ and $\zeta_i$ are independent standard normal random variables.
Products with $\vec{A}$ are performed using the Flatiron Institute's FMM3D code \cite{fmm3d}.
The left panel of \cref{fig:distances_setup} shows a sample of the points we use, and the right panel shows the magnitude of the entries of the kernel matrix \cref{eqn:kernel_def} induced by these points.
In \cref{fig:distances_oversample} we show the absolute error of the GN1 and RSVD1 methods as a function of the target rank, for several values of $\beta$.

\subsection{Hard instances for RSVD-based peeling}
\label{sec:hard_instance}

In this section we provide two examples which illustrate that a RSVD-based implementation of the peeling algorithm, which truncates the low-rank approximations to rank-$k$ at every level, cannot solve \cref{prob:approx} for certain values of $\Gamma$.

We emphasize that the purpose of this section is simply to illustrate a potential failure mode which must be addressed by an algorithm provably solving \cref{prob:approx}.
In particular, these hard instances are not hard instances for more practical RSVD-based variants which do not truncate.

\subsubsection{An illustrative example}
\label{ex:RSVD_hard}

Our first hard instance illustrates that all of the error from a one level can propagate to the next level.
This example provides clear intuition for why the Randomized SVD-based peeling algorithm (with sketches of size $\sright$ and truncation to rank-$k$ at every level) cannot solve \cref{prob:approx} for arbitrary $\Gamma>1$.
We emphasize that the hard instance depends on the truncation rank-$k$ used by the algorithm. 
If we allow the algorithm to use an adaptively chosen rank or do not use truncation (both of which are often done in practice \cite{LinLuYing:2011,LevittMartinsson:2024a}) then it would no longer be a hard instance.

Define, for some $\eta \gg 1$,
\[
\vec{X} = \begin{bmatrix}
    \vec{I}_k & \vec{0} \\
    \vec{0} & \vec{0}
\end{bmatrix}
,\qquad
\vec{Y} = \eta \begin{bmatrix}
    \vec{0} & \vec{0} \\
    \vec{0} & \vec{I}_k
\end{bmatrix}.
\]
Construct the matrix
\[
\vec{A} 
= 
\begin{bNiceArray}{ccIcc}[margin,custom-line = {letter=I,tikz = dotted}]
    \vec{0} & \vec{X} & \vec{Y} & \vec{X}  \\
    \vec{X} & \vec{0} & \vec{X} & \vec{0}   \\ \Hline[tikz = dotted]
    \vec{Y}&\vec{X}&\vec{0} & \vec{X}  \\
    \vec{X}&\vec{0}&\vec{X} & \vec{0}  \\
\end{bNiceArray}.
\]
For notational convenience, we write equality for the limits as $\eta\to\infty$.
In particular, when $\eta\to\infty$, Randomized SVD (with any $\sright\geq k$) will recover optimal rank-$k$ approximations to the bottom left and top right blocks.

We then remove the low-rank components we found at the first level to obtain 
\[
\begin{bNiceArray}{ccIcc}[margin,custom-line = {letter=I,tikz = dotted}]
    \vec{0} & \vec{X} & \vec{Y} & \vec{X}  \\
    \vec{X} & \vec{0} & \vec{X} & \vec{0}   \\ \Hline[tikz = dotted]
    \vec{Y}&\vec{X}&\vec{0} & \vec{X}  \\
    \vec{X}&\vec{0}&\vec{X} & \vec{0}  \\
\end{bNiceArray}
-
\begin{bNiceArray}{ccIcc}[margin,custom-line = {letter=I,tikz = dotted}]
    \vec{0} & \vec{0} & \vec{Y} & \vec{0}  \\
    \vec{0} & \vec{0} & \vec{0} & \vec{0}   \\ \Hline[tikz = dotted]
    \vec{Y}&\vec{0}&\vec{0} & \vec{0}  \\
    \vec{0}&\vec{0}&\vec{0} & \vec{0}  \\
\end{bNiceArray}
=
\begin{bNiceArray}{ccIcc}[margin,custom-line = {letter=I,tikz = dotted}]
    \vec{0} & \vec{X} &\vec{0} & \vec{X} \\
    \vec{X} & \vec{0} &\vec{X} & \vec{0} \\ \Hline[tikz = dotted]
    \vec{0}&\vec{X}&\vec{0} & \vec{X}  \\
    \vec{X}&\vec{0}&\vec{X} & \vec{0}  \\
\end{bNiceArray}.
\]
Note, however, that since the off-diagonal low-rank blocks of $\vec{A}$ are not rank-$k$, we  fail to zero out the off-diagonal blocks at the first level.

\begin{figure}
    \centering
    \includegraphics[scale=.7,trim={0 0 0 0cm},clip]{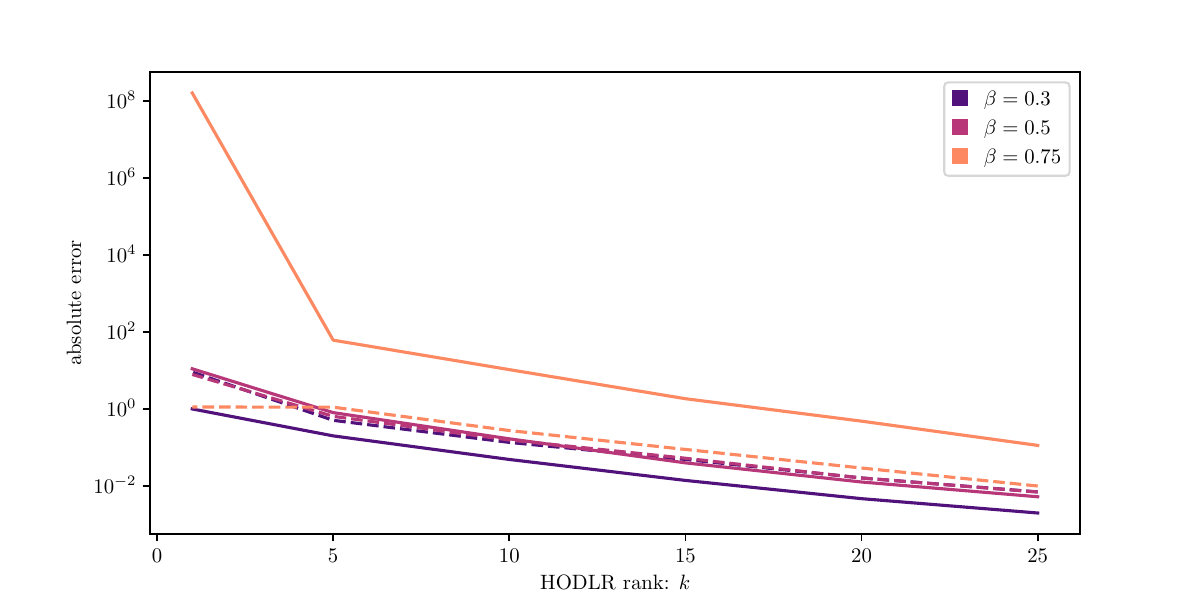}
    \caption{
    Absolute error of peeling algorithms from \cref{tab:alg_params} on the kernel matrix described in \cref{eqn:kernel_def} as a function of the rank-parameter $k$ for several values of $\beta$.
    \emph{Legend}:
    GN1 (\includegraphics[scale=.7]{imgs/legend/solid.pdf}),
    RSVD1 (\includegraphics[scale=.7]{imgs/legend/dash.pdf}).
    \emph{Takeaway}: While RSVD1 may not produce near-optimal low-rank approximations, it can sometimes produce good approximations.
    In addition, when the sketch size used for the regression problem is too large, the Generalized Nystr\"om Method does not produce highly accurate low-rank approximations.
    }
    \label{fig:distances_oversample}
\end{figure}

\begin{figure}[tb]
    \centering
    \includegraphics[scale=.7,trim={0 0 0 0cm},clip]{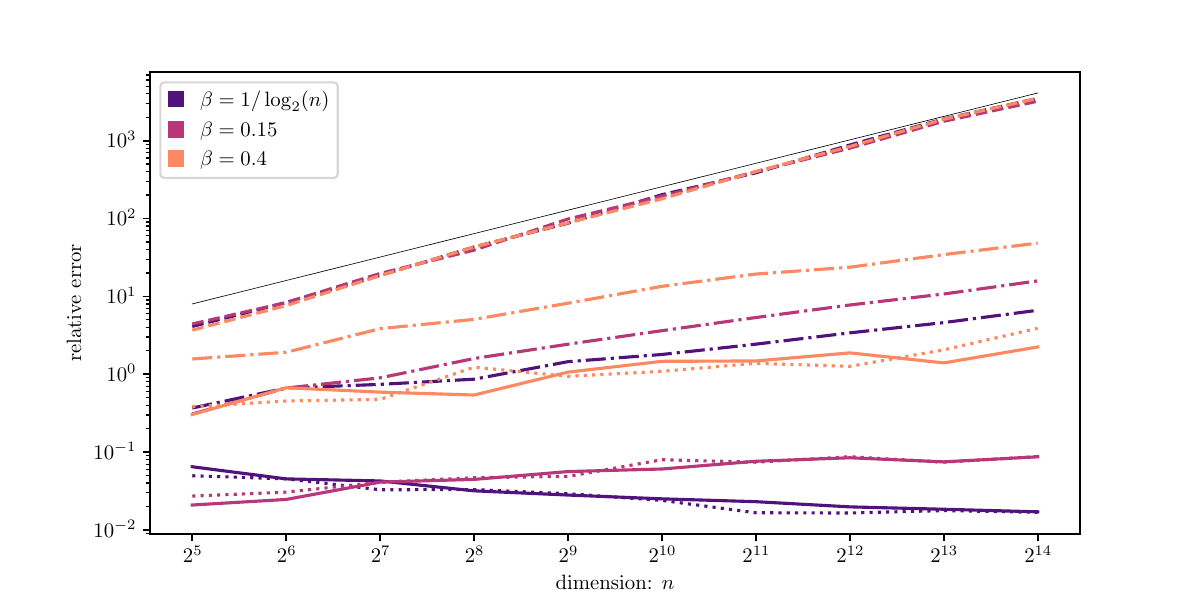}
    \caption{
    Relative error of peeling algorithms on hard instance described in \cref{eqn:exp_hard} for several values of $\beta$.
    \emph{Legend}:
    GN1 (\includegraphics[scale=.7]{imgs/legend/solid.pdf}),
    GN2 (\includegraphics[scale=.7]{imgs/legend/dot.pdf}),
    RSVD1 (\includegraphics[scale=.7]{imgs/legend/dash.pdf}),
    RSVD2 (\includegraphics[scale=.7]{imgs/legend/dashdot.pdf}).
    \emph{Takeaway}: The RSVD1 variant has error growing as $n$ (thin solid reference line); i.e. exponential growth at every level.
    When the sketch size increases sufficiently quickly with $n$, the Generalized Nystr\"om Method-based variants produce an error bounded independent of $n$. 
    }
    \label{fig:exponential_growth}
    \vspace{-1em}
\end{figure}

At the next level, the randomized SVD (with any $\sright\geq k)$ will exactly recover an orthonormal basis $\vec{Q}$ containing the range of $\vec{X}$. 
In particular, we perform a product
\[
\begin{bNiceArray}{ccIcc}[margin,custom-line = {letter=I,tikz = dotted}]    \vec{0} & \vec{X} &\vec{0} & \vec{X} \\
    \vec{X} & \vec{0} &\vec{X} & \vec{0} \\ \Hline[tikz = dotted]
    \vec{0}&\vec{X}&\vec{0} & \vec{X}  \\
    \vec{X}&\vec{0}&\vec{X} & \vec{0}  \\
\end{bNiceArray}
\begin{bNiceArray}{c}[margin,custom-line = {letter=I,tikz = dotted}] 
\vec{\Omega}_1 \\ \vec{0} \\\Hline[tikz = dotted] \vec{\Omega}_3 \\ \vec{0}
\end{bNiceArray}
=
\begin{bNiceArray}{c}[margin,custom-line = {letter=I,tikz = dotted}] 
\vec{0} \\ \vec{X}(\vec{\Omega}_1+\vec{\Omega}_3) \\ \Hline[tikz = dotted] \vec{0} \\ \vec{X}(\vec{\Omega}_1+\vec{\Omega}_3)
\end{bNiceArray}.
\]
We then compute\footnote{Here we assume $\operatorname{orth}(\cdot)$ is a deterministic function.} $\vec{Q} = \operatorname{orth}(\vec{X}(\vec{\Omega}_1+\vec{\Omega}_3))$ and 
\[
\begin{bNiceArray}{ccIcc}[custom-line = {letter=I,tikz = dotted}]    \vec{0} & \vec{Q}^\T & \vec{0} & \vec{Q}^\T 
\end{bNiceArray}
\begin{bNiceArray}{ccIcc}[margin,custom-line = {letter=I,tikz = dotted}]
    \vec{0} & \vec{X} &\vec{0} & \vec{X} \\
    \vec{X} & \vec{0} &\vec{X} & \vec{0} \\ \Hline[tikz = dotted]
    \vec{0}&\vec{X}&\vec{0} & \vec{X}  \\
    \vec{X}&\vec{0}&\vec{X} & \vec{0}  \\
\end{bNiceArray}
= 
\begin{bNiceArray}{ccIcc}[custom-line = {letter=I,tikz = dotted}]
    2\vec{Q}^\T\vec{X} & \vec{0} & 2\vec{Q}^\T\vec{X} & \vec{0}
\end{bNiceArray},
\]
and analogously for the super-diagonal off-diagonal blocks.
Since $\vec{Q}\vec{Q}^\T\vec{X} = \vec{X}$, our rank-$k$ approximation to each of the off-diagonal blocks at the second level is $2\vec{X}$.
In particular, we see that all of the error from the first level propagates to the second level.
Even
assuming we exactly recover the diagonals (if we do not, this can only increase the error), the final approximation is thus
\[
\widetilde{\vec{A}}
= 
\begin{bNiceArray}{ccIcc}[margin,custom-line = {letter=I,tikz = dotted}]
    \vec{0} & 2\vec{X} & \vec{Y} & \vec{0}  \\
    2\vec{X} & \vec{0} & \vec{0} & \vec{0}   \\ \Hline[tikz = dotted]
    \vec{Y}&\vec{0}&\vec{0} & 2\vec{X}  \\
    \vec{0}&\vec{0}&2\vec{X} & \vec{0}  \\
\end{bNiceArray}.
\]
As long as $\eta > 1$, the best HODLR approximation to $\vec{A}$ is 
\[
\vec{A}^\star
= 
\begin{bNiceArray}{ccIcc}[margin,custom-line = {letter=I,tikz = dotted}]
    \vec{0} & \vec{X} & \vec{Y} & \vec{0}  \\
    \vec{X} & \vec{0} & \vec{0} & \vec{0}   \\ \Hline[tikz = dotted]
    \vec{Y}&\vec{0}&\vec{0} & \vec{X}  \\
    \vec{0}&\vec{0}&\vec{X} & \vec{0}  \\
\end{bNiceArray}.
\]
Therefore we find
$\|\vec{A} - \vec{A}^\star \|_\F^2
= 4 \|\vec{X}\|_\F^2$ while $\|\vec{A} - \widetilde{\vec{A}} \|_\F^2 = 8 \|\vec{X}\|_\F^2$.

\subsubsection{Exponential growth}

The example in \cref{ex:RSVD_hard} shows that there are problems for which a Randomized SVD-based peeling algorithm cannot solve \cref{prob:approx} for small (constant) $\Gamma
$.
We now exhibit a problem instance for a Randomized SVD-based peeling algorithm which suggests that such algorithms cannot even guarantee better than an $O(n)$-factor approximation.
This is exponentially large in the number of levels, and to the best of our knowledge, is the first instance demonstrating an exponential instability in the algorithm.

This is the first problem instance we are aware of for which a variant of the peeling algorithm actually incurs a significant propagation of error from level-to-level.
It remains an open question whether other variants of the peeling algorithm (e.g., using Randomized SVD without truncation) can fail for the approximation problem.

For any integer $L>0$ let $n = 2^L$ and define the $n\times n$ matrix $\vec{A}_n$ by
\begin{equation}
\label{eqn:exp_hard}
    [\vec{A}_n]_{i,j} = 
\begin{cases}
    1 & j=0,~ i~\text{odd} \\
    \eta & j=1,~ i = 2^1,2^2, \ldots, 2^L \\
    0 & \text{otherwise}
\end{cases}.
\end{equation}
For each of the nonzero off-diagonal blocks, note that the second column is orthogonal to the first column.
Hence, the optimal rank-$1$ HODLR approximation to $\vec{A}_n$ is just $\vec{A}_n$ with the first column set to zero.
This means the error of the optimal approximation is from the $n/2-1$ ones in the first column excluding the $(1,1)$ entry on the diagonal.

We set $\eta = 10^8$ and run an implementation of the peeling algorithm using the Randomized SVD with truncation to rank-$1$ at each level to obtain a HODLR rank-1 matrix $\widehat{\vec{A}}_n$.
This is repeated for increasing values of $n$.
In \cref{fig:exponential_growth} we plot (as a function of $n$) the quantity $\|\vec{A}_n - \widehat{\vec{A}}_n\|_\F / \sqrt{n/2-1}-1$, which is the smallest value of $\Gamma$ for which the output $\widehat{\vec{A}}_n$ solves \cref{prob:approx}  (averaged over 20 trials).
This experiment suggests that $\|\vec{A}_n - \widehat{\vec{A}}_n\|_\F / \sqrt{n/2-1}-1= \Theta(n)$ as $n\to\infty$. 
Since $n = 2^L$, this is exponentially bad (with a constant base $2$) in the number of levels $L$.

\section{Outlook}

We have presented an algorithm provably solving the HODLR approximation problem in the matrix-vector query model.
As far as we can tell, this is the first result on the approximation problem in the matrix-vector query model for any hierarchical matrix family.
While we focus on HODLR approximation for clarity of exposition, we expect the ideas used in our analysis can be extended to algorithms for other hierarchical families. 
The most natural would extension would be to the coloring-based variant of the peeling algorithm for $\mathcal{H}$-matrices introduced in \cite{LevittMartinsson:2024a}. 
The $\mathcal{H}$ format is a generalization of HODLR which allows for a more general tree structure and recursive blocks off of the diagonal, and is significantly more efficient than HODLR for multi-dimensional problems.

Our work raises a number of interesting questions on approximation algorithms for hierarchical matrices:
\begin{itemize}
    \item  
    For any constant $c>0$, we show that \cref{prob:approx} can be solved to accuracy $\Gamma = n^c$ with $O(k \log(n/k))$ matrix-vector queries. 
    Up to constants, this is the best possible query complexity; as we prove in \cref{thm:lowerbd}, exactly recovering a HODLR matrix requires $O(k\log(n/k))$ queries. 
    It remains open whether there exist matvec query algorithms which solve \cref{prob:approx} to higher accuracy (e.g. $\Gamma = \log(n)$) with $O(k \log (n/k))$ queries.

    \item Hierarchical Semi-Separable (HSS) matrices are an important subfamily of HODLR matrices.
    The low-rank factors of HSS matrices at different levels are related in such a way that they can be stored and manipulated more efficiently than general HODLR matrices. 
    In particular, there are a number of algorithms for recovering an exactly HSS matrix that require $O(k)$ matrix-vector products \cite{LevittMartinsson:2024,HalikiasTownsend:2023}.
    It is therefore natural to ask whether there exists an HSS \emph{approximation} algorithm producing $(1+\varepsilon)$-optimal HSS approximation using $O(\operatorname{poly}(k,1/\varepsilon))$ matvecs.
    A major difficulty is that, in contrast with HODLR matrices, we are unaware of a simple characterization of the best HSS approximation to a given matrix.
       
    \item Operator learning aims to learn representations of operators mapping functions to functions \cite{LuJinPangZhangKarniadakis:2021, KovachkiLanthalerStuart:2024, BoulleTownsend:2023, LiKovachkiLiuBhattacharyaStuartAnandkumar:2021, GinSheaBruntonKutz:2021}.
    A number of recent works study the problem of approximating certain classes of infinite dimensional linear operators \cite{BoulleTownsend:2022,BoulleHalikiasTownsend:2023} by hierarchically structured operators. 
    Understanding how our analysis and theoretical techniques extend to the infinite dimensional setting may yield stronger theoretical guarantees for some problems in operator learning.  

    \item Our structural perturbation bound \cref{thm:RSVD_perturb} for low-rank approximation suggests fundamental differences between the behaviors of the Randomized SVD and Generalized Nystr\"om Method  in the presence of noisy matrix-vector products.
    The bound also highlights limitations in our current understanding of the impact of truncation on the Generalized Nystr\"om Method.
    The best known bounds for the method with truncation are worse than without truncation, but we are unaware of any convincing evidence that such bounds are sharp.
    Further theoretical and numerical studies of these methods would be of interest.

\end{itemize}

\section*{Acknowledgements}

Diana Halikias was supported by the Office of Naval Research (ONR) under grant N00014-23-1-2729. 
Cameron Musco was partially supported by NSF Grants 2046235 and 2427363.
Tyler Chen and Christopher Musco were partially supported by NSF Grants 2045590 and 2427363.
David Persson was supported by the SNSF research project \textit{Fast algorithms from low-rank updates}, grant number: 200020\_178806.
\\

\noindent
We also thank Noah Amsel for useful discussions in the early stages of this work.

\appendix
\crefalias{section}{appendix}

\section{A randomized-SVD peeling algorithm}
\label{sec:RSVD_peel}

As discussed in \cref{sec:hard_instance}, a simple RSVD-based peeling algorithm with truncation cannot perform well, as there is no way of controlling errors made in the projection step.
However, by using a similar randomized perforation technique as described in \cref{sec:sketching_dist}, one can implement an RSVD-based algorithm which can solve \cref{prob:approx} for arbitrary $\Gamma > 1$.
We now define the additional notation needed.
The full algorithm is described in \cref{alg:RSVD_peel}.

At level $\ell$, the RSVD-based peeling algorithm will obtain left subspaces $\vec{Q}^{(\ell)}_j$ as in \cref{sec:left_subspace}.
However, rather than solving a regression problem like \cref{alg:main}, the algorithm will attempt to directly compute $(\vec{Q}^{(\ell)}_j)^\T\vec{A}^{(\ell)}_{\jp,j}$.
 To control the error, we first sample $\bm{\zeta}^+,\bm{\zeta}^-\sim\operatorname{PerfCountSketch}(d,\tright)$ and define sketching matrices
\[
\vec{\Psi}^+ = \bm{\xi}^+ \bullet  \vec{Q}^{(\ell)},
\qquad
\vec{\Psi}^- = \bm{\xi}^- \bullet \vec{Q}^{(\ell)},
\qquad
\vec{Q}^{(\ell)} 
:= \begin{bmatrix}
    (\vec{Q}^{(\ell)}_{1})^\T &
    \cdots &
    (\vec{Q}^{(\ell)}_{d})^\T
\end{bmatrix}^\T
\]
By setting $\tright$ large, the expected squared error for each block can be driven arbitrarily small.

As with \cref{alg:main}, at the final level, we do not need to sketch $\vec{A}$.
However, to limit the error when trying to perform the projections (onto the identity), we sample $\widehat{\bm{\xi}}\sim\operatorname{CountSketch}(d,\tright)$ and use the sketching matrix 
\[
\widehat{\vec{\Psi}} = \widehat{\bm{\xi}} \bullet (\vec{1}\otimes\vec{I}).
\]
Here $\vec{1}$ is the all-ones vector and ``\:$\otimes$\:'' is denotes the Kronecker product so that $\vec{1}\otimes\vec{I}$ is the stacked identity matrix.

\begin{algorithm}[ht]
\caption{Randomized SVD Peeling algorithm for HODLR approximation}\label{alg:RSVD_peel}
\fontsize{11}{15}\selectfont
\begin{algorithmic}[1]
\Procedure{RandomizedSVDPeeling}{$\vec{A},k,\sleft,\tleft,\sright$}
\State Set $L = \lceil \log_2(n/k) \rceil$
\Comment{Final level blocks of size at most $k$}
\For{$\ell=1,2,\ldots, L$}
\State Allocate and partition $\vec{H}^{(\ell)}$ as in \cref{eqn:Hell_decomp} \Comment{blocks of size $n/2^\ell\times n/2^\ell$}
\State Sample $\bm{\xi}^+,\bm{\xi}^-,\vec{\Omega}^+,\vec{\Omega}^- \sim \operatorname{RandPerfGaussian}(n,2^\ell,\sright,\tright)$ \Comment{as in \cref{def:RandPerfGaussian}}
\State Compute $\vec{A}^{(\ell)}\vec{\Omega}^\pm$
\Comment{$2\sright\sright$ matvecs with $\vec{A}$}

\For{$j=1,2,\ldots, 2^\ell$}
\State $\vec{Q}^{(\ell)}_j = \operatorname{orth}(\vec{Y}^{(\ell)}_j)$  \Comment{$\vec{Y}^{(\ell)}_j$ is $(\jp,\rho)$ block of $\vec{A}^{(\ell)}\vec{\Omega}^\pm$}

\EndFor
\State Sample $\bm{\zeta}^+,\bm{\zeta}^-\sim\operatorname{PerfCountSketch}(d,\tleft)$ \Comment{as in \cref{def:perforatedCS}}
\State Set $
\vec{\Psi}^\pm = \bm{\xi}^\pm \bullet  \vec{Q}^{(\ell)}$
\State Compute $(\vec{\Psi}^\pm)^\T\vec{A}^{(\ell)}$
\Comment{$2\sleft\tleft$ matvecs with $\vec{A}^\T$}
\For{$j=1,2,\ldots, 2^\ell$}
\State Extract $\vec{X}^{(\ell)}_j$  \Comment{$\vec{X}^{(\ell)}_j$ is $(j,\sigma)$ block of $(\vec{\Psi}^\pm)^\T \vec{A}$}
\State $\vec{H}^{(\ell)}_j = \vec{Q}^{(\ell)}_j\llbracket\vec{X}^{(\ell)}_j\rrbracket_k$  \Comment{$\vec{H}^{(\ell)}_j$ is $(\jp,j)$-th block of $\vec{H}^{(\ell)}$}
\EndFor
\EndFor
\State Allocate and partition $\widehat{\vec{H}}$ as in \cref{eqn:Hhat_def} \Comment{blocks of size $n/2^L\times n/2^L$}
\State Sample $\widehat{\bm{\zeta}}\sim\operatorname{CountSketch}(2^\ell,\tleft)$\Comment{as in \cref{def:CS}}
\State Set $\widehat{\vec{\Psi}} = \widehat{\bm{\zeta}} \bullet ([1,\ldots, 1]^\T\otimes \vec{I})$
\State $\widehat{\vec{\Psi}}^\T \widehat{\vec{A}}^{(L)} = \vec{A}^\T \widehat{\vec{\Psi}} - (\vec{H}^{(L)} + \cdots + \vec{H}^{(1)})\vec{\Psi}$ \Comment{$\sleft\tleft$ matvecs with $\vec{A}^\T$}
\For{$j=1,2,\ldots, 2^L$} 
\State Extract $\widehat{\vec{X}}_j$  \Comment{$\widehat{\vec{X}}_j$ is $(j,\sigma)$ block of $\vec{\Psi}^\T \widehat{\vec{A}}^{(L)}$}
\State $\widehat{\vec{H}}_j = \widehat{\vec{X}}_j$  \Comment{$\widehat{\vec{H}}_j$ is $(j,j)$-th block of $\widehat{\vec{H}}$}
\EndFor
\State \Return $\widetilde{\vec{A}} = \vec{H}^{(1)} + \cdots + \vec{H}^{(L)} + \widehat{\vec{H}}$
\EndProcedure
\end{algorithmic}
\end{algorithm}

\begin{theorem}\label{thm:main_full_RSVD}
Fix a rank parameter $k$ and let $\beta\in(0,1)$. Suppose $\tleft\geq 1$ and $\sright$, $\tleft$, and $\sright$ are such that
\begin{align*}
\frac{k}{\sright-k-1}\leq \frac{\beta}{10}
,\qquad
\frac{k}{\sright-k-1}\cdot \frac{1}{\tright} \leq \frac{\beta^2}{10^2}
,\qquad 
\frac{1}{\tleft} \leq \frac{\beta^2}{10^2}.
\end{align*}
Let $L=\lceil \log_2(n/k) \rceil$.
Then, using $2L\sright\tright$ products with $\vec{A}$ and $(2L+1)\sleft\tleft$ products with $\vec{A}^\T$, \cref{alg:RSVD_peel} outputs a $\HODLR(k)$ matrix $\widetilde{\vec{A}}$ such that
    \[
    \mathbb{E}\Bigl[\| \vec{A} - \widetilde{\vec{A}} \|_\F^2\Bigr] \leq (1+\beta)^{L+1} \cdot \min_{\vec{H}\in\HODLR(k)} \|\vec{A} - \vec{H} \|_\F^2.
    \]
\end{theorem}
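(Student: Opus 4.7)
The proof will follow the same overall structure as the proof of \cref{thm:main_full}, namely (i) establish a per-block expected approximation guarantee at each level via the perturbation bound \cref{thm:RSVD_perturb}, then (ii) sum across blocks and inductively control the $(1+\beta)^\ell$ blow-up across levels. The matvec count is immediate from the cost of forming $\vec A^{(\ell)}\vec\Omega^\pm$ (contributing $2L\sright\tright$ queries) and $(\vec\Psi^\pm)^\T\vec A^{(\ell)}$ together with the final $\widehat{\vec\Psi}^\T \widehat{\vec A}^{(L)}$ (contributing $(2L+1)\sleft\tleft$ transpose queries, where the left sketches have $\sleft\tleft$ columns total).

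The key intermediate claim is an analog of \cref{thm:level_l_gurantee}: for every $\ell$ and $j$,
\begin{equation*}
\EE\bigl[ \|\vec A_{\jp,j}^{(\ell)} - \vec Q_j^{(\ell)}\llbracket \vec X_j^{(\ell)}\rrbracket_k\|_\F^2 \,\big|\,\mathcal F_\ell\bigr] \leq \|\vec A_{\jp,j}^{(\ell)}-\llbracket\vec A_{\jp,j}^{(\ell)}\rrbracket_k\|_\F^2 + \beta\, E_j^{(\ell)}
\end{equation*}
with the same $E_j^{(\ell)}$ as in \cref{thm:level_l_gurantee}. To prove this, I apply \cref{thm:RSVD_perturb} to $\vec B = \vec A_{\jp,j}^{(\ell)}$ with $\vec\Omega = \vec\Omega_j$, and two noise matrices:
\begin{align*}
\vec E_1 &= \sum_{i\neq j}\xi^{\pm}_{i,\rho}\vec A_{\jp,i}^{(\ell)}\vec\Omega_i = \vec M\widetilde{\vec\Omega},\\
\vec E_2 &= \sum_{i\neq \jp:\, \xi^{\pm}_{i,\sigma}=1}(\vec Q_i^{(\ell)})^\T\vec A_{i,j}^{(\ell)},
\end{align*}
where $\vec M$ collects the neighboring blocks (as in the proof of \cref{thm:level_l_gurantee}) and $\widetilde{\vec\Omega}$ is a Gaussian independent of $\vec\Omega_j$. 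The first noise term is handled exactly as before: \cref{thm:gaussian_expectations} combined with the CountSketch collision probability $1/\tright$ gives $\EE[\|\vec E_1\vec\Omega_{\textup{top}}^\dagger\|_\F^2]\leq (k/(\sright-k-1))\cdot(1/\tright)\cdot \sum_{i\neq j,\jp}\|\vec A_{\jp,i}^{(\ell)}\|_\F^2$. The third term in \cref{thm:RSVD_perturb} is bounded exactly as in the proof of \cref{thm:GN_perturb_exp}.

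The new part, and the main obstacle, is bounding $\EE[\|\vec E_2\|_\F^2]$. Unlike the Gaussian-projection case, here the left-sketch blocks $(\vec Q_i^{(\ell)})^\T$ have orthonormal rows (so $\|(\vec Q_i^{(\ell)})^\T\vec A_{i,j}^{(\ell)}\|_\F\leq \|\vec A_{i,j}^{(\ell)}\|_\F$), but they are not centered or mean-zero, so naive expansion of $\|\vec E_2\|_\F^2$ produces cross-terms $\langle (\vec Q_i^{(\ell)})^\T\vec A_{i,j}^{(\ell)}, (\vec Q_{i'}^{(\ell)})^\T\vec A_{i',j}^{(\ell)}\rangle_\F$ weighted by $\EE[\xi^\pm_{i,\sigma}\xi^\pm_{i',\sigma}]$. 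To control these, I will use that, conditionally on the choice of $\sigma$ for block $j$, the indicators $\xi^\pm_{i,\sigma}$ for $i\neq \jp$ are \emph{independent} Bernoullis with mean $\leq 1/\tleft$, so $\EE[\xi^\pm_{i,\sigma}\xi^\pm_{i',\sigma}]\leq 1/\tleft^2$. Combined with $2|\langle u,v\rangle|\leq \|u\|^2+\|v\|^2$ and the fact that only $O(d)$ collision indices contribute, this yields $\EE[\|\vec E_2\|_\F^2]\leq (C/\tleft)\sum_{i\neq j,\jp}\|\vec A_{i,j}^{(\ell)}\|_\F^2$ for an absolute constant $C$. (An alternative, cleaner route is to augment $\operatorname{PerfCountSketch}$ with $\pm 1$ signs so the cross terms vanish in expectation; the algorithm statement can be massaged to this form without changing asymptotic cost.)

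With these two expectation bounds, the parameter conditions $k/(\sright-k-1)\leq \beta/10$, $k/((\sright-k-1)\tright)\leq \beta^2/100$, and $1/\tleft\leq \beta^2/100$ are exactly tailored so that, after the Cauchy--Schwarz combination $(a+b+c)^2\leq (a+b)^2+c^2+2\sqrt{(a+b)^2 c^2}$ used in the proof of \cref{thm:GN_perturb_exp}, the sum of perturbation terms is absorbed into $\beta E_j^{(\ell)}$ and the ``optimal'' term picks up factor at most $(1+\beta)$. This yields the per-level claim. The diagonal recovery at level $L$ is handled analogously: $\widehat{\vec\Psi}=\widehat{\bm\xi}\bullet(\vec 1\otimes\vec I)$ reduces the extraction of $\widehat{\vec A}^{(L)}_{j,j}$ to summing $\widehat{\vec A}^{(L)}_{i,j}$ over CountSketch-colliding $i$, and the same $1/\tleft$ concentration bound gives the analog of the final-level inequality in \cref{thm:level_l_gurantee}.

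Finally, the inductive summation is identical to the one in the proof of \cref{thm:main_full}: define $\ERR(\ell)^2$ and $\OPT(\ell)^2$ blockwise, show $\sum_j E_j^{(\ell)}\leq \OPT(\ell)^2+\sum_{\ell'<\ell}\ERR(\ell')^2$ since the off-block entries of $\vec A^{(\ell)}$ at positions $i\neq j,\jp$ contain exactly the errors committed at previous levels, and then inductively prove $\EE[\ERR(\ell)^2]\leq \OPT(\ell)^2+\beta(1+\beta)^{\ell-1}\sum_{\ell'\leq \ell}\OPT(\ell')^2$. Summing over levels together with the diagonal contribution gives $\EE[\|\vec A-\widetilde{\vec A}\|_\F^2]\leq (1+\beta)^{L+1}\min_{\vec H\in\HODLR(k)}\|\vec A-\vec H\|_\F^2$, as required.
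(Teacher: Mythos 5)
Your overall structure matches the paper's: apply \cref{thm:RSVD_perturb} blockwise with a structured perturbation pair $\vec E_1 = \vec M\widetilde{\vec\Omega}$ and a collision-induced $\vec E_2$, sum over blocks, and induct over levels exactly as in the proof of \cref{thm:main_full}. The handling of $\vec E_1$ via \cref{thm:gaussian_expectations}, the optimal-error term, the diagonal level, and the telescoping are all the same as the paper.

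The sticking point is the bound on $\EE[\|\vec E_2\|_\F^2]$, which you correctly flag as the main obstacle, but your first route does not close. With the unsigned $\operatorname{CountSketch}$ of \cref{def:CS}, conditioning on $\sigma$ and expanding
\begin{align*}
\EE\bigl[\|\vec E_2\|_\F^2\bigr]
&= \sum_{i\neq j,\jp}\EE[\zeta^{\mp}_{i,\sigma}]\,\|(\vec Q^{(\ell)}_{i})^\T\vec A^{(\ell)}_{i,j}\|_\F^2
+\sum_{\substack{i\neq i'\\ i,i'\neq j,\jp}}\EE[\zeta^{\mp}_{i,\sigma}\zeta^{\mp}_{i',\sigma}]\,\langle(\vec Q^{(\ell)}_{i})^\T\vec A^{(\ell)}_{i,j},(\vec Q^{(\ell)}_{i'})^\T\vec A^{(\ell)}_{i',j}\rangle_\F,
\end{align*}
the diagonal part is at most $(1/\tleft)\sum_i\|\vec A^{(\ell)}_{i,j}\|_\F^2$, but the $\Theta(d^2)$ cross-terms, bounded via $\EE[\zeta^{\mp}_{i,\sigma}\zeta^{\mp}_{i',\sigma}]\le 1/\tleft^2$ and $2|\langle u,v\rangle|\le\|u\|^2+\|v\|^2$, contribute an extra $\Theta\bigl((d/\tleft^2)\sum_i\|\vec A^{(\ell)}_{i,j}\|_\F^2\bigr)$. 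Since $d=2^\ell$ grows with the level while the theorem's hypothesis only pins $\tleft=\Theta(1/\beta^2)$ (a level-independent constant), this does \emph{not} reduce to the claimed $O(1/\tleft)\sum\|\cdot\|_\F^2$. The failure is not just slack: if all of the $(\vec Q^{(\ell)}_{i})^\T\vec A^{(\ell)}_{i,j}$ over the collision set $S$ equal a common matrix $\vec W$, then $\|\vec E_2\|_\F^2=|S|^2\|\vec W\|_\F^2$ while $\sum_{i\in S}\|\vec A^{(\ell)}_{i,j}\|_\F^2 = |S|\,\|\vec W\|_\F^2$, so the ratio is the collision count $|S|$, unbounded for large $\ell$. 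Your alternative route — replacing $\operatorname{PerfCountSketch}$ by a signed $\{0,\pm1\}$ version so that $\EE[\zeta^{\mp}_{i,\sigma}\zeta^{\mp}_{i',\sigma}\,s_i s_{i'}]=0$ for $i\neq i'$ — does repair this and yields exactly $\EE[\|\vec E_2\|_\F^2]\le(1/\tleft)\sum_i\|\vec A^{(\ell)}_{i,j}\|_\F^2$, which the parameter assumption $1/\tleft\le\beta^2/100$ is tuned for. But that is a change to \cref{alg:RSVD_peel} rather than a cosmetic massage, and should be stated as such. For reference, the paper's own (terse) proof of \cref{thm:main_full_RSVD} asserts $\|\vec E\|_\F^2\le\|\vec N\|_\F^2$ at this same spot, reading $\vec E=(\vec Q^{(\ell)})^\T\vec N$; that inequality suffers from the same problem, since the restriction of the stacked $\vec Q^{(\ell)}$ to the $|S|$ colliding block rows has spectral norm $\sqrt{|S|}$, not $1$. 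So the difficulty you identified is genuine, and of your two suggested fixes only the signed-sketch variant actually closes the argument.
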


\begin{proof}
The proof is analagous to \cref{thm:main_full}.

First, note that in \cref{thm:GN_perturb_exp} we change the definition of $\vec{X}$ to
\[
\vec{X} := \vec{Q}^\T\vec{B} + \vec{E},
\]
where $\vec{E}\in\mathbb{R}^{\sleft\times m_2}$.
The resulting bound now has
\[
E_2 := \frac{2k}{\sright-k-1} \| \vec{M} \|_\F^2 + 2 \|\vec{E}\|_\F^2.
\]

Next, in \cref{thm:level_l_gurantee} we now plug in $\vec{E} := (\vec{Q}^{(\ell)})^\T \vec{N}$ and note that $\|\vec{E}\|_\F^2 \leq \|\vec{N}\|_\F^2$.
The bound \cref{eqn:bound_E1E2} therefore becomes 
\begin{equation}
    \label{eqn:bound_E1E2_RSVD}
    \EE\Bigl[
    \|\vec{A}^{(\ell)}_{\jp,j} - \vec{Q}^{(\ell)}_j \llbracket \vec{X}^{(\ell)}_j \rrbracket_k \|_\F^2 \Big| \mathcal{F}_{\ell},\bm{\xi},\bm{\zeta} \Bigr]
    \leq E_1 + E_2 + 2\sqrt{E_1E_2},
\end{equation}
where $E_1$ is as before and
\begin{align*}
    E_2 &:= \frac{2k}{\sright-k-1} \|\vec{M}\|_\F^2 + 2 \|\vec{N}\|_{\F}^2    .
\end{align*}
Therefore, we get
\begin{align*}
    E_2' &:= \frac{2k}{\sright-k-1}\cdot \frac{1}{\tright} \sum_{\substack{i=1\\i\neq j,\jp}}^{d} \|\vec{A}^{(\ell)}_{\jp,i}\|_\F^2 +  \frac{2}{\tleft}\sum_{\substack{i=1\\i\neq j,\jp}}^{d} \|\vec{A}^{(\ell)}_{i,j}\|_\F^2 .
\end{align*}
and our assumptions on $\tleft$, $\tright$, and $\sleft$ allow us to bound
\begin{align*}
    \frac{k}{\sright-k-1} \| \vec{A}^{(\ell)}_{\jp,j} - \llbracket \vec{A}^{(\ell)}_{\jp,j} \rrbracket_k \|_\F^2
    &\leq \frac{\beta}{10} E_j^{(\ell)},
\end{align*}
\begin{align*}
    \frac{2k}{\sright-k-1}\cdot \frac{1}{\tright} \sum_{\substack{i=1\\i\neq j,\jp}}^{d} \|\vec{A}^{(\ell)}_{\jp,i}\|_\F^2 
    &\leq 4 \frac{\beta^2}{10^2}E_j^{(\ell)},
    &    \frac{2}{\tleft}\sum_{\substack{i=1\\i\neq j,\jp}}^{d} \|\vec{A}^{(\ell)}_{i,j}\|_\F^2 
    &\leq 4 \frac{\beta^2}{10^2}E_j^{(\ell)}.
\end{align*}
Finally, since $1/10 + (4+4)/10^2 + 2\sqrt{2(4+4)/10^2}\approx 0.980 \leq 1$ we get the desired bound.
\end{proof}

\printbibliography[]

\end{document}